\documentclass[11pt,letterpaper]{article}
\usepackage[margin=1in]{geometry}
\usepackage[utf8]{inputenc}
\usepackage[T1]{fontenc}

\usepackage{amsmath,amsfonts,amssymb,amsthm}
\usepackage{fullpage}
\usepackage{subfigure}
\usepackage{epstopdf}
\usepackage{mathtools}
\usepackage[colorlinks=true,allcolors=blue]{hyperref}
\usepackage{tcolorbox}
\usepackage{multirow}
\usepackage[shortlabels]{enumitem}
\usepackage[disable]{todonotes}
\usepackage{xspace}
\tcbuselibrary{skins,breakable}
\tcbset{enhanced jigsaw}
\usepackage[font=small,labelfont=bf]{caption}
\usepackage{microtype}
\usepackage{algorithmicx}
\usepackage{algpseudocode}
\usepackage{thmtools}

\usepackage[linesnumbered,ruled,vlined]{algorithm2e}

\usepackage{makecell}

\usepackage{thm-restate,color,xspace}
\usepackage{comment}
\usepackage{thmtools}
\usepackage{xcolor}
\usepackage{nameref}
\usepackage{array}
\usepackage{thm-restate}

\newcommand{\antonis}[1]{\todo[linecolor=orange!50!black,backgroundcolor=orange!25,bordercolor=orange!50!black]{\scriptsize \textbf{AS:} #1}}

\newcommand{\rajath}[1]{\todo[linecolor=green!50!black,backgroundcolor=green!25,bordercolor=green!50!black]{\scriptsize \textbf{RA:} #1}}
\newcommand{\yasamin}[1]{\todo[linecolor=blue!50!black,backgroundcolor=blue!25,bordercolor=blue!50!black]{\scriptsize \textbf{YA:} #1}}

\makeatletter
\newcommand{\pushright}[1]{\ifmeasuring@#1\else\omit\hfill$\displaystyle#1$\fi\ignorespaces}
\newcommand{\pushleft}[1]{\ifmeasuring@#1\else\omit$\displaystyle#1$\hfill\fi\ignorespaces}
\makeatother

\newtheorem{theorem}{Theorem}

\newtheorem{lemma}[theorem]{Lemma}
\newtheorem{observation}[theorem]{Observation}

\theoremstyle{definition}

\newtheorem{remark}[theorem]{Remark}

\newcommand{\myhs}{\hspace{0.05em}}

\renewcommand{\epsilon}{{\varepsilon}}

\newcommand{\poly}{\mathsf{poly}}

\newcommand{\dist}{\operatorname{dist}}
\newcommand{\dd}{\textnormal{\textsf{d}}}

\newif\ifhidetodos

\usepackage{marginnote}
\usepackage{booktabs}   
\usepackage{multirow}   
\usepackage{xcolor, colortbl}

\newcommand{\cbox}[2][yellow]{%
  \fcolorbox{#1}{white}{\parbox{\dimexpr\linewidth-2\fboxsep}{\strut #2\strut}}%
}

\newcommand{\Ball}[2]{B(#1,\,#2)}
\newcommand{\NBall}[2]{N(#1,\,#2)}

\DeclareMathOperator{\cost}{cost}

\newcommand{\customtodo}[3]{\textcolor{#2}{\(\blacktriangledown\)}\marginnote{\raggedright \textcolor{#2}{\textbf{#1:} #3}}}
\newcommand{\customtododisplay}[3]{\noindent\cbox[#2]{\textcolor{#2}{\textbf{#1:} #3}}}
\newcommand{\customtodoinline}[3]{\textcolor{#2}{\textcolor{#2}{\textbf{#1:} #3}}}

\ifhidetodos
    \renewcommand{\customtodo}[3]{}
    \renewcommand{\customtododisplay}[3]{}
    \renewcommand{\customtodoinline}[3]{}
\fi

\usepackage{cleveref}
\Crefname{theorem}{Theorem}{Theorems}
\Crefname{lemma}{Lemma}{Lemmas}
\Crefname{corollary}{Corollary}{Corollaries}
\Crefname{proposition}{Proposition}{Propositions}
\Crefname{observation}{Observation}{Observations}
\Crefname{claim}{Claim}{Claims}
\Crefname{fact}{Fact}{Facts}
\Crefname{assumption}{Assumption}{Assumptions}
\crefname{thm}{Theorem}{Theorems}

\crefname{thm}{Theorem}{Theorems}
\crefname{lem}{Lemma}{Lemmas}

\title{Faster Randomized and Deterministic $k$-Clustering on Graphs}

\usepackage{authblk}
\author[1]{Sebastian Forster\thanks{\texttt{forster@cs.sbg.ac.at}}}
\author[2,3]{Yasamin Nazari\thanks{\texttt{y.nazari@vu.nl}}}
\author[2]{Rajath Rao K.N.\thanks{\texttt{r.raokamblanagendra@vu.nl}}}
\author[4]{Antonis Skarlatos\thanks{\texttt{antonis.skarlatos@warwick.ac.uk}}}

\affil[1]{Department of Computer Science, University of Salzburg, Austria}
\affil[2]{Vrije Universiteit Amsterdam, The Netherlands}
\affil[3]{CWI Amsterdam, The Netherlands}
\affil[4]{Department of Computer Science, University of Warwick, Coventry, UK}

\date{\today}

\begin{document}

\maketitle
\begin{abstract} 

In this paper, we study the $(k, z)$-clustering and $k$-center problems on graphs, where $(k, z)$-clustering generalizes the $k$-median ($z = 1$) and $k$-means ($z = 2$) problems. In this work, we obtain the following main results:
\begin{itemize}
    \item Our first contribution is the first \emph{deterministic} algorithm for $k$-center on graphs that achieves a $(2 + \varepsilon)$-approximation in $\tilde{O}(m)$ time. This affirmatively resolves an open problem raised by Abboud, Cohen-Addad, Lee, and Manurangsi~\cite{abboud2023fine}. Our techniques also extend to the $k$-center with outliers problem, where up to~$t$ points may be discarded. 
    
    \item Our second contribution is a randomized algorithm for $(k,z)$-clustering on graphs that achieves an $O(1)$-approximation in $\tilde{O}(m)$ time, which in particular covers $k$-median ($z = 1$) and $k$-means ($z = 2$). Prior to this work, an $\tilde{O}(m)$-time randomized algorithm was known for $k$-median by Thorup~\cite{thorup2001quick}. Moreover, a recent work of Jiang, Jin, Lou, and Lu~\cite{jiang2026local} achieves $\Tilde{O}(m^{1+o(1)})$ time for general $z$ via local search.
    
    \item Finally, we design a \emph{deterministic} algorithm for $(k,z)$-clustering on graphs that achieves
    a $O(\poly(c))$-approximation in $\tilde{O}(m^{1+1/c})$ time, for a positive parameter $c$. To obtain this result, we use techniques from the Thorup--Zwick distance oracle~\cite{thorup2005approximate}. This technical connection may be of independent interest, considering the wide application of~\cite{thorup2005approximate} in various computational settings.
\end{itemize}

We emphasize that most of our algorithms are \emph{incremental} in the sense that for any given parameter~$k$, they return a sequence of centers such that any prefix of length $ \ell \leq k $ yields an $\ell$-clustering solution with a constant-factor approximation.
\end{abstract}

\thispagestyle{empty}
\newpage

\tableofcontents
\thispagestyle{empty}
\newpage

\section{Introduction} \label{sec:introduction}
Clustering is a fundamental task in data mining and unsupervised learning  with numerous applications across various domains~\cite{JainMF99,HansenJ97,Berkhin06,AggarwalR2014}.
Intuitively speaking, the goal of a clustering algorithm is to group the input data together, in structures called ``clusters'', such that similar data points are contained in the same cluster.
    
A very popular, and arguably also very intuitive, clustering paradigm is \emph{partitional} clustering in which the clusters are represented by a set of \emph{centers} of a prespecified size $ k $ and each point in the data set is assigned to the cluster of its closest center.
The quality of a solution is measured based on an objective on the distances of points to their closest center.

Formally, this type of $ k $-clustering can be formulated as an optimization problem in which, for a given metric space $ \mathcal{M} = (P, \dist) $ and a given integer $ k \geq 1 $, the goal is to find a set of centers $ C \subseteq P $ with $ |C| = k $ minimizing one of the following objective functions $ f_\mathcal{M} $.
\begin{itemize}[itemsep=0.5em]
    \item $k$-center objective: $ f_\mathcal{M} (C) = \max_{p \in P} \dist (p, C) $, where $ \dist (p, C) = \min_{q \in C} \dist(p, q)$. Here, the goal is to minimize the maximum distance of any point to its closest center.
    \item $(k, z)$-clustering objective: $ f_\mathcal{M} (C) = \sum_{p \in P} \dist (p, C)^z $ for a given constant $ z \geq 1 $, where $\dist (p, C) = \min_{q \in C} \dist(p, q)$. Here, the goal is to minimize the sum of the distances from each point to its nearest center, each raised to the power~$z$.
    For $ z = 1 $ this is known as the \emph{$k$-median objective}, and for $ z = 2 $ this is known as the \emph{$k$-means objective}.
\end{itemize}

Clustering under these objectives is an NP-hard problem~\cite{WilliamsonS11} and therefore research has mainly focused on efficient approximation algorithms.
There are two main algorithmic research directions: 
\begin{enumerate}[itemsep=0.5em]
    \item Obtaining polynomial-time algorithms with the smallest possible approximation ratio~\cite{Bartal96,CharikarCGG98,CharikarGTS02,CharikarG05,JV01,JainMS02,JainMMSV03,mettu2003online,GuptaT08,CharikarL12,LiS16,BPR+17,AhmadianNSW20,Cohen-AddadGHOS22,GowdaPST23,Cohen-Addad0LS23,Cohen-Addad0LSS25,CharikarCGGLW25}.\antonis{Cite the new $k$-median, $k$-means algorithms.}

    \item Obtaining constant-factor approximation algorithms with the smallest possible running time~\cite{Indyk99,GuhaMMO00,BadoiuHI02,MettuP04,thorup2005quick,Chen09,CostaF25}.\antonis{I think that~\cite{CostaF25} has no $O(1)$-approx but $\log(\cdot)$.}
\end{enumerate}
In the latter direction, most notably, algorithms with running time $ \tilde O (k n) $ have been developed~\cite{Gon85,HS85,MettuP04,thorup2005quick} (assuming constant-time query access to distances between points of the metric space).
This matches (up to polylogarithmic factors) distance-query lower bounds of~$\tilde \Omega (k n) $~\cite{GuhaMMO00,MettuP04,BateniEFHJMW23} in this setting.

Some of these algorithms solve a variant of the problem known as \emph{incremental}\footnote{Note that in this context ``incremental'' does \textbf{not} refer to the algorithm being dynamic in the sense that data might be added to the input over time. Later in this paper, we will actually employ an algorithm that is incremental in the dynamic-algorithms sense. To avoid confusion, we will avoid the term ``incremental'' for the latter.} $k$-clustering~\cite{mettu2003online,LinNRW10}, in which the goal is to output a sequence $ c_1, \ldots, c_k $ of centers such that, for every $ 1 \leq \ell \leq k $, the subset $ c_1, \ldots, c_\ell $ is an (approximate) $ \ell $-clustering solution.
Incremental clustering gives an efficiency gain in uncertain situations where only an upper bound on the desired number of clusters is known and the results for smaller values can instantly be retrieved without running the algorithm from scratch each time.

To obtain faster running times beyond the $ kn $ barrier, researchers have studied $k$-clustering in metric spaces with more structure.
Two such examples are Euclidean spaces~\cite{FederG88,Har-Peled04,latourSaulpicKz,jiang2026local} (with the Euclidean distance between points) and graphs of low-enough density~\cite{thorup2001quick,abboud2023fine,latourSaulpicKz,jiang2026local} (with the induced shortest-path distance between vertices).
In this paper, we focus on the setting in which the metric is a graph.
This is motivated by network structures such as road networks, computer networks, or social networks, which appear naturally in many contexts where clustering is employed to identify structures like communities.
Furthermore, techniques for turning metric input data into a graph~\cite{Luxburg07} are sometimes employed to make large amounts of data amenable to analysis.
This paradigm, in the form of first computing a sparse Euclidean spanner~\cite{arya1995euclidean,Har-PeledIS13,ElkinS15,le2025truly} and then running a graph algorithm on the spanner, has recently also been applied to obtain the best theoretical guarantees for Euclidean $ (k, z) $-clustering in a certain regime~\cite{jiang2026local}.
In any case, the input to the algorithm is often a massive sparse graph, and even moderately super-linear running times in the number of edges $m$ are quite prohibitive.
Thus, it is natural to seek algorithms that are as close as possible to a linear (in number of edges) running time.

While this linear-time ``gold standard'' has been approached for the $k$-clustering objectives mentioned above~\cite{thorup2001quick,abboud2023fine,latourSaulpicKz,jiang2026local}, there are still gaps concerning (a) how close we can get to linear time and (b) deterministic algorithms achieving such fast running times.
The goal of this work is therefore to narrow these gaps in approaching linear running time for variants of $k$-clustering problems on graphs.

On the technical side, our contribution is to utilize ideas from distance approximation algorithms in graphs in the $k$-clustering context.
In particular, we (a) employ a dynamic shortest-path approximation algorithm~\cite{gorkiewicz2025incremental} and (b) utilize techniques from Thorup--Zwick distance oracles~\cite{thorup2005approximate, thorup2001compact, roditty2005deterministic}.
Dynamic shortest-path algorithms in general have proved to be useful in many algorithmic settings where shortest paths need to be computed repeatedly in an ``outer loop'' that slightly modifies the graph with each iteration (see \cite{RodittyZ11} and~\cite{Madry10} for two prominent examples in context of flows and spanner). We identify this behavior for $k$-center and $(k,z)$-clustering and leverage the power of dynamic approximate shortest paths to obtain efficiency gains.
For the $(k, z)$-clustering problem,\antonis{this refers only to the deterministic section? If so, then we should mention it.} we utilize the data structures used in the distance oracle construction of Thorup and Zwick~\cite{thorup2005approximate, thorup2001compact, roditty2005deterministic}. By adapting these structures, we can replace a randomized method of estimating sizes of balls (up to a certain radius) around each node~\cite{cohen1997size} by a deterministic one with mild overheads.
%

\subsection{Our Results and the State of the Art}
The following characterization of running-time statements will facilitate the discussion of our and prior work.
We say that the worst-case \antonis{worst-case is the default assumption; should we remove it to avoid confusion?} running time $ T (n, m) $ of a given algorithm on any graph with $ n $ nodes and $ m $ edges is: 
\begin{itemize}
    \item \textbf{Nearly-linear}, if $ T (n, m) = O (m \cdot (\log n)^c) $ for some constant $ c $ (``polylogarithmic overhead'').
    \item \textbf{Almost-linear}, if $ T (n, m) = O (m^{1+o(1)}) $ (``subpolynomial overhead'').
    \item \textbf{Close-to-linear}, if $ T (n, m) = O (m^{1+\mu}) $ for any constant $ \mu > 0 $ (``arbitrarily-small-polynomial overhead''). 
\end{itemize}

In the following, we outline our results and how they compare to the state of the art. A quick overview is also summarized in Table~\ref{tab:results}.

\definecolor{newresult}{RGB}{230, 245, 220}   

\begin{table}[t]
\centering
\renewcommand{\arraystretch}{1.35}
\caption{%
  Summary of new results and the most relevant state of the art for $k$-center and $(k,z)$-clustering on
  weighted undirected graphs $G=(V,E,w)$ with $n=|V|$, $m=|E|$, and aspect ratio 
  $\Delta$ .
  The $\tilde{O}(\cdot)$ notation hides $\operatorname{polylog}(n,\Delta)$ factors.
  Rows highlighted in \colorbox{newresult}{\strut green} are contributions of this paper.
}

\label{tab:results}
\setlength{\tabcolsep}{6pt}
\begin{tabular}{lllll}
\toprule
\textbf{Problem}
  & \textbf{Approx.}
  & \textbf{Running time}
  & \textbf{Det./Rand.}
  & \textbf{Reference} \\
\midrule

\multirow{2}{*}{$k$-center}

  & $2$ & $\Tilde{O}(m)$   & Rand.
    & \cite{thorup2001quick} \\
  & $2+\varepsilon$ & $\Tilde{O}(m/\log(\varepsilon))$   & Rand.
    & \cite{abboud2023fine} \\
\cmidrule{2-5}
  & \cellcolor{newresult}$2+\varepsilon$ & \cellcolor{newresult}$\tilde{O}(m/\varepsilon)$       & \cellcolor{newresult}Det.
    & \cellcolor{newresult}\Cref{lem:deterministic_k_center_gon} \\
\midrule

\multirow{3}{*}{\shortstack[l]{$k$-center,\ $t$ outliers \\ (constant $k$)}}
  & $3^{\ddagger}\ $  & $\mathrm{poly}(n)$  & Det.   & \cite{CKMN01} \\
  & $2^{\dagger}$  & $\mathrm{poly}(n)$  & Rand.  & \cite{ding2019greedy} \\
\cmidrule{2-5}
 \cmidrule{2-5}
  & \cellcolor{newresult}$2+\varepsilon ^{\dagger}$ & \cellcolor{newresult}$\tilde{O}(m/\varepsilon)$       & \cellcolor{newresult}Rand.
    & \cellcolor{newresult}\Cref{lem:outlier-2approx} \\
 
\midrule
 
\multirow{2}{*}{\shortstack[l]{$k$-center, $t$ outliers\\ (general $k$) }}
  & $(2,\,O(1/\varepsilon))^{*}$ & $\mathrm{poly}(n)$ & Rand.
    & \cite{ding2019greedy} \\
\cmidrule{2-5}
  \cmidrule{2-5}
  & \cellcolor{newresult}$(2+\varepsilon, O(1/\varepsilon))^{*}$ & \cellcolor{newresult}$\tilde{O}(\frac{m}{\varepsilon})$       & \cellcolor{newresult}Rand.
    & \cellcolor{newresult}\Cref{lem:outlier-bicriteria} \\

\midrule


$k$-median ($ z = 1 $)  & $\mathrm{O}(1)$  & $ \tilde O (m) $          & Rand.
    & \cite{thorup2001quick} \quad \\
\multirow{3}{*}{$(k,z)$-clustering}  & $\mathrm{O}(c^6)$  & $\Tilde{O}(m^{1+\frac{1}{c}})$          & Rand.
    & \cite{latourSaulpicKz} \quad \\
  & $\mathrm{O}(1)$  & $\Tilde{O}(m^{1+o(1)})$          & Rand.
    & \cite{jiang2026local} \quad  \\
   
\cmidrule{2-5}
  & \cellcolor{newresult}$O(1)$   & \cellcolor{newresult}$\tilde{O}(m)$     & \cellcolor{newresult}Rand.
    & \cellcolor{newresult}\Cref{thm:randomizedkzclustering} \\
  & \cellcolor{newresult}$O(\text{poly(c)})$   & \cellcolor{newresult}$\Tilde{O}(mn^{\frac{1}{c} } + \frac{m}{\varepsilon})$     & \cellcolor{newresult}Det.
    & \cellcolor{newresult}\Cref{thm:deterministicclustering} \\
 
\bottomrule
\end{tabular}
 
\smallskip
\noindent\footnotesize
$^\dagger$ : uses at most $O(1+\epsilon)t$ outliers
$\ddagger$ : $k$ is not a constant 
* - These are Bi-criteria Algorithms
\end{table}

\subparagraph{$k$-Center clustering problem.}

A randomized nearly-linear time $ 2 $-approximation algorithm for $ k $-center on graphs has been developed by Thorup~\cite{thorup2005quick}. 
Relaxing the approximation factor to $ 2 + \epsilon $, Abboud, Cohen-Addad, Lee, and Manurangsi~\cite{abboud2023fine} have obtained (1) a slightly faster randomized nearly-linear time algorithm and (2) a randomized nearly-linear time algorithm  for the incremental $k$-center problem.\antonis{Is this the Gonzalez algorithm?}

The fastest deterministic algorithms are, to the best of our knowledge, straightforward adaptations of the classic $2$-approximation algorithms by Gonzalez~\cite{Gon85} or Hochbaum and Shmoys~\cite{HS85}, which run in time $ \tilde O (k m) $. The former also solves the incremental $k$-center problem.
In addition, following an observation by Thorup~\cite{thorup2005quick}, a deterministic incremental constant-factor approximation algorithm for $ k $-center clustering with running time $ \tilde O (n^2) $ can be obtained by first running the approximate all-pairs shortest paths algorithm of Cohen and Zwick~\cite{CohenZ01} and then running Gonzalez's algorithm~\cite{Gon85} with the precomputed approximate distances.
Designing a deterministic nearly-linear time $k$-center algorithm was explicitly stated as an open problem by Abboud et al.~\cite{abboud2023fine}.

We resolve this open problem and provide a $ (2 + \epsilon) $-approximation algorithm that also works for the incremental version.
More formally, we obtain the following theorem, which we prove in \Cref{sec:kcenter}.
\begin{restatable}[]{thm}{kcenter}
\label{lem:deterministic_k_center_gon}
There is a deterministic algorithm for the incremental $k$-center problem that, given a weighted undirected graph $G = (V,E,w)$, an integer $ k \geq 1 $, and an accuracy parameter $ 0 < \varepsilon \leq 1 $, computes a $(2 + \epsilon)$-approximate solution in time $\Tilde{O}(\frac{m}{\varepsilon})$.
\end{restatable}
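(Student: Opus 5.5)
We simulate Gonzalez's farthest-first traversal on $G$. Gonzalez's algorithm repeatedly picks the vertex farthest from the current center set and adds it as a center. Every prefix of length $\ell$ is then a $2$-approximation for $\ell$-center. The only costly step is maintaining $\dist(v, C)$ for all $v$ as $C$ grows. With exact distances this is $k$ Dijkstra runs, giving $\tilde O(km)$.

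\textbf{Reduction to incremental SSSP.} Add a super-source $s$ joined to every current center by a zero-weight edge. Then $\dist(v, C) = \dist(s, v)$, and adding a center is a single edge insertion. We invoke the deterministic incremental $(1+\varepsilon)$-approximate single-source shortest paths data structure of~\cite{gorkiewicz2025incremental}. Over any sequence of edge insertions it maintains estimates $\tilde d(v)$ with $\dist(s,v) \le \tilde d(v) \le (1+\varepsilon)\dist(s,v)$, in total update time $\tilde O(m/\varepsilon)$ (polylog in $n$ and the aspect ratio). We also need the maximum estimate quickly. Keep all $\tilde d(v)$ in a max-heap (or bucketed by powers of $(1+\varepsilon)$) and update it whenever the structure changes an estimate. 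The number of estimate changes is bounded by the total update time, so this costs only an extra logarithmic factor. Each step picks $c_{\ell+1} = \arg\max_v \tilde d(v)$ and inserts the edge $(s, c_{\ell+1})$. There are at most $k \le n$ insertions, so the total time is $\tilde O(m/\varepsilon)$, deterministic.

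\textbf{Approximation analysis.} Fix $\ell$ and let $r_\ell = \max_v \dist(v, C_\ell)$ be the true radius of the prefix $C_\ell=\{c_1,\dots,c_\ell\}$. Each chosen $c_{j+1}$ satisfies
\[
\dist(c_{j+1}, C_j) \ge \tfrac{1}{1+\varepsilon}\tilde d(c_{j+1}) \ge \tfrac{1}{1+\varepsilon}\max_v \dist(v,C_j) \ge \tfrac{1}{1+\varepsilon} r_\ell
\]
for every $j < \ell$, since radii are nonincreasing in $j$. Let $u$ be a vertex attaining $r_\ell$. Then the $\ell+1$ points $c_1,\dots,c_\ell,u$ have pairwise distances at least $r_\ell/(1+\varepsilon)$. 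For the pairs $(c_i, c_{j+1})$ this uses $\dist(c_{j+1},c_i) \ge \dist(c_{j+1},C_j)$; for the pairs involving $u$ it uses $\dist(u, C_\ell) = r_\ell$. By pigeonhole, some optimal $\ell$-center cluster contains two of these points, so $2\,\mathrm{OPT}_\ell \ge r_\ell/(1+\varepsilon)$. Hence $r_\ell \le 2(1+\varepsilon)\mathrm{OPT}_\ell$, and rescaling $\varepsilon$ by a constant gives $(2+\varepsilon)$. This holds for every prefix, so the output is incremental.

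\textbf{Main obstacle.} The delicate step is matching the interface of~\cite{gorkiewicz2025incremental} to this use:
\begin{itemize}
\item insertions incident to the source must be supported, which we handle by inserting zero-weight edges or, if zero weights are disallowed, weight-$\delta$ edges with $\delta$ below the minimum edge weight over $n$;
\item the estimates must be explicit and upper-bound valid at every time, not only in amortized or query form;
\item the number of estimate changes must be bounded so that the heap maintenance stays within $\tilde O(m/\varepsilon)$.
\end{itemize}
If that structure only provides distance queries, we would instead bucket vertices by their scale $\lfloor \log_{1+\varepsilon}\tilde d(v)\rfloor$. Since estimates only decrease under insertions, we would scan buckets from the top down with a lazily decreasing pointer, which amortizes to the same bound.
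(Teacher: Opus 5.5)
Your proposal is correct and follows essentially the same route as the paper: Gonzalez's farthest-first traversal, implemented with a super-source, zero-weight source-edge insertions into the G\'orkiewicz--Karczmarz Source-Insertion $(1+\varepsilon)$-SSSP structure, a max-heap updated on each reported estimate decrease, and the $(\ell+1)$-point pigeonhole argument followed by rescaling $\varepsilon$. The differences are minor: you bound the true radius via a vertex attaining it, needing only that true distances are monotone, where the paper bounds the estimated radius of a hypothetical $(k+1)$-st center using monotonicity of the estimates; and your fallbacks (weight-$\delta$ edges, query-only access) are unnecessary, since the structure is designed for source insertions, allows zero weights, and explicitly reports the vertices whose estimates change.
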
 \antonis{If the graph is not connected, an additional factor $k$ appears in the runtime. In the analysis, do we argue something like $k \leq m$?}

L{\'a}szl{\'o} Kozma~\cite[Theorem~1.1]{kozma2026price} proved that no incremental
$k$-center algorithm, regardless of its running time, has approximation
ratio strictly below~$2$ on every prefix, already on weighted path
graphs. Hence \Cref{lem:deterministic_k_center_gon} is optimal for incremental algorithms
up to the additive~$\varepsilon$.

An interesting variant of $k$-center providing more robustness in real-world data analysis tasks allows the algorithm to specify up to $ t $ outliers (that then are not evaluated in the objective function).
For this $k$-center with outliers problem, previous works by Charikar et al. \cite{CKMN01} and Ding et al (only for constant $k$ and having $(1+\epsilon)t$ outliers). \cite{ding2019greedy} provided polynomial-time algorithms with approximation factors of 3 (deterministic) and 2 (randomized), respectively.

We provide a randomized nearly-linear time $ (2 + \epsilon) $-approximation algorithm for $k$-center clustering with outliers that essentially is an efficient implementation of the Ding et al.\cite{ding2019greedy} approach.
More formally, we obtain the following theorem that we prove in \Cref{sec:kcenter}.

\begin{restatable}[]{thm}{kcenteroutlier}
\label{lem:outlier-2approx}
    For constant $ k $ \footnote{This is the only result in the paper where we assume $k$ is a constant. All other statements hold for any integer $k$.}, there is a randomized algorithm for the $k$-center problem with at most $t $ outliers that, given a weighted undirected graph $G = (V,E,w)$, integer $ t \geq 0 $, and an accuracy parameter $ 0 < \varepsilon \leq 1 $, computes with probability $
  (1 - t/n) \cdot \left( \frac{\epsilon}{\lceil( 1+\varepsilon)\rceil} \right)^{k-1},
$ a $(2 + \epsilon)$-approximate solution in time $\Tilde{O}(\frac{m}{\varepsilon})$ with at most $(1+\epsilon)t$ outliers .
\end{restatable}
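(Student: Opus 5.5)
We follow the greedy approach of Ding et al.~\cite{ding2019greedy}: a randomized variant of Gonzalez's algorithm, run with a guess $r$ of the optimal radius $r^*$. We try all $r$ in a geometric grid of ratio $(1+\varepsilon)$ over the range of possible distances, which costs a factor $O(\log_{1+\varepsilon}(n\Delta)) = \tilde O(1/\varepsilon)$. For a fixed guess $r$ we maintain the current center set $C$ and the distances $\dist(v,C)$ for all vertices. The first center is a uniformly random vertex. In each later round we look at the set $F = \{v : \dist(v,C) > 2r\}$ of far points. If $|F| \le (1+\varepsilon)t$, we stop and declare $F$ the outliers. Otherwise we pick the next center uniformly at random from $F$. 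After $k$ rounds we accept $r$ if $|F|\le(1+\varepsilon)t$, and we output the smallest accepted $r$.

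\textbf{Running time.} Adding a center $c$ only decreases distances. So $\dist(\cdot, C)$ is updated by one Dijkstra run from $c$ that is pruned at vertices whose tentative distance does not improve their current value. Each round is a Dijkstra, taking $\tilde O(m)$ time. With $k$ constant and $\tilde O(1/\varepsilon)$ guesses, the total is $\tilde O(m/\varepsilon)$. Maintaining $|F|$ and sampling from $F$ is easy from the distance array in $O(n)$ per round.

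\textbf{Correctness for the guess $r \in [r^*, (1+\varepsilon) r^*)$.} Fix an optimal solution with clusters $O_1,\dots,O_k$ of radius $r^*\le r$ and outlier set $T$, $|T|\le t$. Call a round \emph{good} if the sampled center lies in $V\setminus T$ and hits an optimal cluster not yet hit. A hit cluster $O_j$ has all its points within $2r^*\le 2r$ of the chosen center, by the triangle inequality, so those points are never in $F$ again. Hence $F$ consists of the points of unhit clusters together with points of $T$.
\begin{itemize}
\item The first round is good with probability at least $1-t/n$.
\item In a later round we only sample when $|F|>(1+\varepsilon)t$. Then $|F\setminus T| \ge |F| - t > \varepsilon t$ and $|F\setminus T|/|F| \ge 1 - t/|F| > \varepsilon/(1+\varepsilon)$. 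Every point of $F\setminus T$ lies in an unhit cluster, so such a round is good with probability at least $\varepsilon/(1+\varepsilon)$. This matches the stated bound up to the rounding in the statement.
\end{itemize}
If all rounds are good, then after at most $k$ rounds either we stopped early with $|F|\le(1+\varepsilon)t$, or all $k$ clusters are hit and $F\subseteq T$. In both cases the solution has radius $2r\le(2+2\varepsilon)r^*$ and at most $(1+\varepsilon)t$ outliers. Rescaling $\varepsilon$ gives the $(2+\varepsilon)$ guarantee. Multiplying over the rounds gives the success probability $(1-t/n)\,(\varepsilon/(1+\varepsilon))^{k-1}$. The minimal accepted $r$ is at most this one, and any accepted $r$ is valid by construction.

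\textbf{Main obstacle.} The delicate step is the probability bound when $t=0$ or $t$ is small. Then $|F|>(1+\varepsilon)t$ can hold with $|F|$ tiny, and the bound $1-t/|F|$ must still give $\varepsilon/(1+\varepsilon)$; it does, since $t/|F|<1/(1+\varepsilon)$. One also has to handle integrality, which I expect is what the ceiling in the statement accounts for. The second point needing care is that the pruned Dijkstra is exact. This holds because the vertices whose distances change form a region connected to $c$ through improved vertices.
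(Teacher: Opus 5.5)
Your proof is correct, but it takes a different route from the paper.

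\textbf{The paper's route.} The paper runs the Ding--Yu--Wang procedure once, without any knowledge of $\mathrm{OPT}$. In each round it takes the $\lceil(1+\varepsilon)t\rceil$ vertices with the largest \emph{approximate} estimates, maintained by the Source-Insertion $(1+\varepsilon/2)$-SSSP structure, and samples one of them. Its analysis then needs a case split:
\begin{itemize}
\item[(1)] every vertex of this top set has estimate above $(2+\varepsilon)\mathrm{OPT}$, so at least $\lceil(1+\varepsilon)t\rceil-t$ of them are inliers of uncovered clusters; or
\item[(2)] at most $\lceil(1+\varepsilon)t\rceil-1$ vertices exceed the threshold, and by monotonicity of the estimates this persists.
\end{itemize}
There, the $1/\varepsilon$ in the running time comes from the SSSP structure.

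\textbf{Your route.} You guess the radius on a geometric grid and sample from the full threshold set $F$, so the ``done or progress'' dichotomy is built into your stopping rule. You maintain exact distances with $k$ pruned Dijkstra runs per guess. Your $1/\varepsilon$ comes from the number of guesses.

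\textbf{What each buys.} Your version is more elementary: no dynamic SSSP machinery is needed, and with exact distances covered clusters are cleanly excluded at threshold $2r$. It also handles $t=0$, where the paper's top-$\lceil(1+\varepsilon)t\rceil$ set is empty; the paper's argument explicitly assumes $t\ge 1$. The price is $\tilde O(1/\varepsilon)$ independent runs, each costing $\tilde O(km)$, so your time bound genuinely uses that $k$ is constant. In the paper's single run, $k$ enters the running time only additively, through the $O(t\log n)$-per-round extraction of the top set. The core counting step is the same in both: at most $t$ points of the sampled-from set are outliers and none lie in covered clusters. This gives per-round success probability at least $\varepsilon/(1+\varepsilon)\ge\varepsilon/\lceil 1+\varepsilon\rceil$.

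\textbf{Two small fixes.}
\begin{itemize}
\item When you ``rescale $\varepsilon$'', rescale only the grid ratio to $1+\varepsilon/2$ and keep the outlier threshold at $(1+\varepsilon)t$. Rescaling the threshold too would lower the per-round probability to $\varepsilon/(2+\varepsilon)$, which is below the stated $\varepsilon/2$.
\item Include the guess $r=0$, since zero-weight edges allow $\mathrm{OPT}=0$.
\end{itemize}
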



In Appendix~\Cref{sec:outliers}, we also provide an efficient implementation of a bi-criteria approximation for $k$-center with outliers that achieves constant-factor approximation while allowing $O(\frac{k}{\varepsilon})$ centers.

\subparagraph{$(k, z)$-Clustering problem.}

A randomized nearly-linear time constant-factor approximation algorithm for $ k $-median ($ z = 1 $) was developed by Thorup~\cite{thorup2005quick}.
Recently, two independent works have addressed fast $ (k, z)$-clustering on graphs, which in particular includes solutions for $k$-median ($ z = 1 $) and $k$-means ($ z = 2$).
First, Jiang, Jin, Lou, and Lu~\cite{jiang2026local} have presented a randomized almost-linear time algorithm for $ (k, z) $-clustering with constant-factor approximation.
Second, Dupré la Tour and Saulpic~\cite{latourSaulpicKz} have presented a randomized close-to-linear time algorithm for incremental $ (k, z) $-clustering with constant-factor approximation.
As remarked by~\cite{latourSaulpicKz}, it is conceivable that Thorup's approach may be extendable to $k$-means using recent results on the primal-dual method for $k$-means.
However, Thorup's approach does not readily extend to incremental clustering and it is arguably much more complex than the greedy framework of~\cite{latourSaulpicKz}.

We generalize/subsume all of these results by providing a randomized nearly-linear time algorithm for incremental $ (k, z) $-clustering with constant-factor approximation.
More formally, we obtain the following theorem that we prove in \Cref{sec:randomized}. 

\begin{restatable}[]{thm}{randomizedkzclustering}
\label{thm:randomizedkzclustering}
    There is a randomized algorithm for the incremental\footnote{Recall that \emph{incremental} does not mean the algorithm is dynamic; rather, it means that every prefix of length $\ell$ of the returned $(k, z)$-clustering solution induces a constant-factor approximate $(\ell, z)$-clustering solution.} $(k, z)$-clustering problem that, given a weighted undirected graph $G = (V,E,w)$, an integer $ k \geq 1 $ aspect ratio $\Delta$ \footnote{The aspect-ratio is the ratio between the largest distance and the smallest non-zero distance. }, and a constant $z \geq 1$, computes with high probability an $O(1)$-approximate solution in $\Tilde{O}(m \log(\Delta))$ time.
\end{restatable}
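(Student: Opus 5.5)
We will implement the greedy incremental framework of Dupré la Tour and Saulpic~\cite{latourSaulpicKz} (itself in the spirit of Mettu--Plaxton~\cite{mettu2003online}) on the graph metric, and remove their $m^{1/c}$ overhead. That overhead comes from exact ball computations. We will replace it with two primitives that each cost $\tilde O(m)$ per distance scale, over $O(\log \Delta)$ geometric scales $r_j = 2^j$.

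The framework assigns every vertex $v$ a \emph{value} at each scale, $\mathrm{val}(v,r) = |B(v,r)| \cdot r^z$. It then repeatedly selects, among vertices not yet ``covered'', one of approximately maximal value for its smallest relevant radius, appends it to the center sequence, and marks as covered all vertices within distance $O(r)$ of the selected center. The analysis in~\cite{latourSaulpicKz} is robust to constant-factor errors in two places: in the ball sizes, and in the radii used for covering (estimating $|B(v,r)|$ to within $|B(v,r)| \le \tilde b \le |B(v,2r)|$ is enough). We will first restate their analysis in this robust form, so that every prefix of length $\ell$ remains an $O(1)$-approximation with constants depending only on $z$.

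\textbf{Step 1: estimating ball sizes.} For every scale $r_j$ we estimate $|B(v,r_j)|$ for all $v$ simultaneously using Cohen's size-estimation framework~\cite{cohen1997size}. We assign i.i.d.\ random ranks and run $O(\log n)$ independent Dijkstra-style ``least-rank within radius $r$'' computations. Each computation costs $\tilde O(m)$, so the total is $\tilde O(m\log\Delta)$. With high probability this yields constant-factor estimates of $|B(v,r_j)|$ for all $v$ and $j$. A union bound over $n\log\Delta$ pairs suffices.

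\textbf{Step 2: greedy selection and covering.} We process scales from large to small. Within each scale we bucket vertices by the rounded value $\lfloor \log \tilde b \rfloor$ and pick centers greedily in decreasing order. After each pick, we run a Dijkstra from the center truncated at radius $c\cdot r_j$, marking vertices as covered at this scale.

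To keep the cost near-linear, we never re-explore a vertex at the same scale. This is the standard ``ball-growing with deletion'' trick: a vertex that is already covered at scale $j$ is not expanded again. A Dijkstra may still touch edges of covered vertices, so we charge each edge to its first covering and prune at covered vertices. The charge remains valid because any path through a covered vertex $u$ reaches a point within $O(r_j)$ of $u$'s center, so pruning loses only a constant factor in the covering radius. This pruned multi-source exploration is essentially a single Dijkstra with the centers as sources inserted over time, implementable in $\tilde O(m)$ per scale. An incremental approximate SSSP structure such as~\cite{gorkiewicz2025incremental} is a fallback if direct pruning fails.

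\textbf{Main obstacle.} The hardest step is reconciling the \emph{order} of the greedy choice across scales with the pruned, approximate coverage. Mettu--Plaxton-style analyses need the selected center to dominate, up to constants, all uncovered vertices in its neighborhood. With pruning, a vertex may be declared covered at radius $c r_j$ while its true distance to its center is larger by a constant factor. We will handle this by proving a charging lemma: every vertex of the optimal cluster either lies within $O(r)$ of some chosen center or contributes $\Omega(r^z)$ to OPT. We prove it by a triangle-inequality argument with inflated constants $2^{O(z)}$, which keeps the approximation factor $O(1)$ for constant $z$.

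Summing up, the running time is $O(\log\Delta)$ scales times $\tilde O(m)$ per scale for estimation and for covering, i.e.\ $\tilde O(m\log\Delta)$, and the algorithm succeeds with high probability.
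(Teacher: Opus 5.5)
\paragraph{Verdict: genuine gap in Step~2.} The procedure you describe is not the Dupr\'e la Tour--Saulpic greedy, so you cannot import their incremental guarantee, and for your ordering it is actually false. Their algorithm (Algorithm~\ref{alg:simplified-greedy}) has three features that your version drops:
\begin{enumerate}[(a)]
\item Each iteration takes the maximum-$\mathrm{Value}$ \emph{available} ball over \emph{all} radii at once. The order of centers therefore interleaves scales according to $r^z|B(u,r)|$.
\item It then runs the center-selection descent $u\gets\arg\max_{v\in N(u,10cr)}\mathrm{Value}(B(v,r/2c))$, $r\gets r/2c$. The output center is the endpoint of this descent, not the vertex of the selected ball.
\item For \emph{every} radius $r'$, it forbids all balls $B(u,r')$ with $u$ within $100c^4r'$ of the new center.
\end{enumerate}

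\paragraph{Why the large-to-small order fails.} Put $N$ vertices at each of two locations $a,b$ with $\dist(a,b)=1$, and one vertex $p$ at distance $L$ from both, with $N\gg L^z$.
\begin{itemize}
\item If $c_1=a$, then at every scale with $1\le c\,r_j<L$ the only uncovered vertex is $p$. You therefore output $c_2=p$ before reaching the scales at which $b$ becomes uncovered. The prefix $\{a,p\}$ costs $N$, while $\{a,b\}$ costs $L^z$.
\item Without (b), even $c_1$ is arbitrary, since every top-scale ball has value $n\Delta^z$. Choosing $c_1=p$ already gives ratio $\Omega(L^z)$ for $\ell=1$.
\end{itemize}
The charging lemma you sketch does not constrain which prefix is produced, so it cannot repair this. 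You have to keep the template exactly as it is and change only how its primitives are implemented. That is what the paper does, importing the DLS analysis with constant slack.

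\paragraph{Consequence for the running time.} Once you keep the template, the running-time difficulty moves elsewhere.

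Your ``prune at covered vertices'' argument is wrong as stated: the set marked from a new center $c'$ need not contain $B(c',c\,r_j)$. Suppose $c''$ covers $u$ with $\dist(c'',u)=c\,r_j$, and both $c'$ and $x$ hang off $u$ via edges of weight $\eta$. Then:
\begin{itemize}
\item $c'$ is uncovered, so it can be picked;
\item the pruned search from $c'$ stops at $u$;
\item $x$, at distance $2\eta$ from $c'$, is marked by nobody, so a near-duplicate center can be picked later.
\end{itemize}
Your triangle-inequality remark only gives $\dist(x,c'')\le 2c\,r_j$, which does not cover $x$. The template needs the forbidden region to contain $B(c_i,100c^4r)$: over-covering is harmless, under-covering is not. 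The paper obtains this from a single source-insertion $(1+\epsilon)$-SSSP structure from a super-source, shared by all radii. It forbids $B(v,r)$ once $\delta_s(v)\le(1+\epsilon)100c^4r$, scanning only the vertices whose estimate decreased. So the Górkiewicz--Karczmarz structure is the main tool here, not a fallback.

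The remaining cost lies in the truncated Dijkstras inside the descents in (b), which your plan has no counterpart for. The paper shows that, for a fixed radius $r$, no vertex is extracted by two such executions. Otherwise the later loop's starting ball $B(\hat u_2,\hat r)$ would lie within $40c\hat r$ of the earlier center and would already have been forbidden. This gives $O(m)$ work per radius and $O(m\log\Delta)$ overall.

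Your Step~1 via Cohen's estimator is fine.
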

    
To the best of our knowledge, no deterministic algorithm tailored to the graph setting is known.
However, again following the observation by Thorup~\cite{thorup2005quick}, a deterministic incremental constant-factor approximation algorithm for $ (k, z) $-clustering with running time $ \tilde O (n^2) $ can be obtained by first running the approximate all-pairs shortest paths algorithm of Cohen and Zwick~\cite{CohenZ01} and then running the metric incremental $ (k, z) $-clustering algorithm of Mettu and Plaxton~\cite{mettu2003online} with the $\rho$-metric (obeying the triangle inequality up to a factor $ \rho = 3 $) given by the approximate shortest-path distances.

We resolve this limitation by providing the \emph{first deterministic} close-to-linear time algorithm for incremental $ (k, z) $-clustering with constant-factor approximation.
More formally, we obtain the following theorem, which we prove in \Cref{sec:deterministic}.

\antonis{Should we write explicitly in the statement that we have the ``incremental/nested'' version?}

\begin{restatable}[]{thm}{deterministicclustering}
\label{thm:deterministicclustering}
There is a deterministic algorithm for the incremental
$(k,z)$-clustering problem that, given a weighted undirected graph
$G = (V,E,w)$, an integer $k \ge 1$, an integer parameter $t \ge 1$, a
constant $z \ge 1$, and an accuracy parameter $0 < \varepsilon \le 1$,
computes an $O(\mathrm{poly}(t))$-approximate solution in time
$\tilde{O}\bigl(t \cdot m \cdot n^{\frac{1}{t}} +
\tfrac{m}{\varepsilon}\bigr)$.
\end{restatable}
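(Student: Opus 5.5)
We derandomize the greedy incremental framework of Mettu and Plaxton~\cite{mettu2003online}, in the graph form used by Dupr\'e la Tour and Saulpic~\cite{latourSaulpicKz}. That framework only needs approximate answers to ball-size queries $|B(v,r)|$ for every node $v$ and every radius $r$ on a geometric grid (powers of $2$, or of $(1+\varepsilon)$). Given such answers within a constant or $\mathrm{poly}(t)$ factor, it produces an incremental ordering of centers, and its analysis tolerates polynomial distortion in both the radii and the counts, at a $\mathrm{poly}(t)$ loss in approximation.

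\textbf{Step 1 (framework).} We first restate the Mettu--Plaxton procedure with slack. For each $v$ we define the value $\mathrm{val}(v,r)=\sum_{u\in B(v,r)}(r-\dist(u,v))^z$ and the critical radius $r_v$ at which this value reaches $r^z$. We then check that the potential argument still gives an $O(\alpha^{O(z)}\beta)$-approximation when the critical radii are known only up to a factor $\alpha$ and the ball sizes only up to a factor $\beta$. Because $z$ is constant, $\mathrm{poly}(\alpha,\beta)$ remains $\mathrm{poly}(t)$ when $\alpha,\beta=\mathrm{poly}(t)$.

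\textbf{Step 2 (deterministic ball-size estimation).} In the randomized algorithm this step uses Cohen's size-estimation sketch~\cite{cohen1997size}. We replace it with the deterministic Thorup--Zwick hierarchy of Roditty, Thorup and Zwick~\cite{roditty2005deterministic}. This gives sets $V=A_0\supseteq A_1\supseteq\dots\supseteq A_t=\emptyset$, pivots $p_i(v)$, and bunches/clusters, where every cluster has size $\tilde O(n^{1/t})$. Building it takes $\tilde O(t\,m\,n^{1/t})$ time.

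We estimate $|B(v,r)|$ as follows. Let $i$ be the largest level with $\dist(v,A_i)\le r$. By the hitting-set property of the deterministic construction, every ball with fewer than about $n^{i/t}$ nodes misses $A_i$. Hence, if the ball contains a level-$i$ node, it has size at least roughly $n^{(i-1)/t}$. Conversely, if it contains no level-$(i+1)$ node, it has size at most about $n^{(i+1)/t}$, up to logarithmic factors.

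This crude level estimate is off by an $n^{1/t}$ factor, which is too coarse. To refine it, we count exactly inside the bunch $B_i(v)$, which has size $\tilde O(n^{1/t})$: $B(v,r)\cap B_i(v)$ is computable. We also weight each $A_i$-node by the size of its "region", meaning the nodes whose $i$-th pivot it is. The ball then contains the weight of $A_i$-nodes within distance about $r$, and by the triangle inequality $\dist(u,p_i(u))\le \dist(u,v)+\dist(v,A_i)\le 2r$, so that weight sandwiches $|B(v,r)|$ between $|B(v,r/3)|$-type and $|B(v,3r)|$-type quantities. A constant-factor distortion in the radius per level, iterated across $t$ levels, explains the $\mathrm{poly}(t)$ approximation.

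\textbf{Step 3 (distance queries during selection).} The greedy selection needs each chosen center to remove nearby candidates, that is, to maintain $\dist(v,C)$ as centers are added. We do this with the deterministic decremental/incremental approximate SSSP structure~\cite{gorkiewicz2025incremental}, the same tool used for Theorem~\ref{lem:deterministic_k_center_gon}. We add a super-source connected to every chosen center, with each insertion costing $\tilde O(m/\varepsilon)$ in total. This accounts for the additive $\tilde O(m/\varepsilon)$ term and costs only a $(1+\varepsilon)$ factor in the approximation.

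\textbf{Main obstacle.} The hard part is Step 2: getting a deterministic ball-size estimate whose error can be charged as a radius distortion, rather than as a multiplicative $n^{1/t}$ error in the counts, since the latter would ruin the approximation. I would first try weighted pivot counting with a radius blow-up at each level. If the errors compound multiplicatively, I would instead try to show that the Mettu--Plaxton analysis loses only $O(1)$ per level, keeping the total within $\mathrm{poly}(t)$.
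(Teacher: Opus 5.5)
\textbf{Gap in Step 2.} Step 2 is where the theorem actually lives, and as written it does not give an estimator with the two-sided guarantee the template needs. There are three concrete problems.

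First, the ``crude level estimate'' rests on a false claim. A deterministic hitting set only guarantees that \emph{large} neighborhoods are hit. It never guarantees that small balls miss $A_i$: every $a\in A_i$ lies in $B(a,0)$. So ``the ball contains a level-$i$ node'' gives no lower bound on its size. Relatedly, it is bunches, not clusters, that have size $\tilde O(n^{1/t})$. Clusters can have size $\Theta(n)$, which is exactly why their sizes carry information.

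Second, weighting each $a\in A_i$ by its whole pivot region $\{u : p_i(u)=a\}$ gives only the lower bound $|B(v,r)|\le$ weight, which is all your triangle inequality proves. There is no upper bound, because the region can extend arbitrarily far. Take $t=2$ and a star with center $a$, one leaf $v$ at distance $1$, and $n-2$ leaves at distance $L$. Then $A_1=\{a\}$ is a valid hitting set and the region of $a$ is all of $V$, yet $|B(v,Cr)|\le 2$ for $r=1$ and any constant $C<L+1$. You would need regions truncated at radius $O(r)$.

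Third, even with truncation, the $A_i$-nodes within $3r$ of $v$ need not lie in $\mathrm{Bunch}(v)$, since you only know $\dist(v,A_{i+1})>r$. So you must escalate to higher levels, and a constant-factor radius blow-up per level compounds to $2^{\Theta(t)}$ after $t$ levels. Plugged into your own Step 1 bound $O(\alpha^{O(z)}\beta)$, this gives $2^{\Theta(zt)}$, not $\mathrm{poly}(t)$. So the sentence saying that iterated constant-factor distortion ``explains'' the $\mathrm{poly}(t)$ ratio is not merely unproven; the accounting goes the wrong way.

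\textbf{The missing idea.} Use $v$'s \emph{own} pivots together with the Thorup--Zwick stretch argument, whose per-level loss is additive ($+2\dist(u,v)$) rather than multiplicative. For $u\in B(v,r)$, either $u\in\mathrm{Bunch}(v)$, or for the first level $i$ with $p_i(v)\in\mathrm{Bunch}(u)$ one has $\dist(v,p_i(v))+\dist(p_i(v),u)\le(4t-3)\dist(u,v)$. By bunch--cluster duality, $u\in C(p_i(v))$ in the second case. Hence, with $\alpha=4t-2$, the set
$N(v,r)=\mathrm{Bunch}(v,\alpha r)\cup\bigcup_{i:\,\dist(v,p_i(v))<\alpha r}C(p_i(v),\alpha r)$
satisfies $B(v,r)\subseteq N(v,r)\subseteq B(v,2\alpha r)$. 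The $t+1$ truncated sizes are prefixes of distance-sorted lists, and their sum $\hat N(v,r)$ overcounts $|N(v,r)|$ by at most a factor $t+1$. Setting $\mathrm{Value}(B(v,r))=r^z\hat N(v,r)$ then meets the template requirement with $\rho=t+1$ and $c=\Theta(t^2)$, which is where the $\mathrm{poly}(t)$ ratio comes from. Each query costs $O(t)$ after $\tilde O(t\,m\,n^{1/t})$ preprocessing.

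\textbf{A smaller omission.} Your Step 3 handles only the forbidding loop. The center-selection loop must also build explicit balls $N(u,10cr)$ and take an argmax of values over them. The paper does this with truncated Dijkstra plus an amortization showing that, for each radius, every vertex is extracted by at most one such execution. Hence each edge is scanned $O(\log\Delta)$ times in total, and without an argument of this kind the refinement step has no near-linear bound.
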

\antonis{Do we need $O(\cdot)$ or just say $\poly(c)$-approximate?}

\section{Technical Overview}
\label{sec:overview}

All of our algorithms utilize a deterministic Source-Insertion dynamic $(1+\epsilon)$-SSSP (single-source shortest paths) algorithm, recently developed by Górkiewicz and Karczmarz~\cite{gorkiewicz2025incremental}. This data structure achieves $\tilde{O}(m)$ total update time for processing adversarial edge insertions only to the source.\footnote{The data structure of \cite{gorkiewicz2025incremental} avoids a usual $n^{o(1)}$-factor appearing in partially dynamic $(1+\epsilon)$-SSSP data structures by relying on the fact that edge insertions are only incident to the source.} Our deterministic $k$-center and $k$-center with outlier results are mainly based on a direct application of this data structure in the Gonzalez framework \cite{Gon85}. Namely, we use a Source-Insertion SSSP to iteratively find the furthest vertex to the current set of centers.

Our $(k,z)$-clustering algorithms are based on the recursive greedy algorithm by Dupr\'{e} la Tour and Saulpic~\cite{latourSaulpicKz} that gives an efficient implementation of the Mettu-Plaxton \cite{mettu2003online} framework. For a vertex $v \in V$ and radius $r \ge 0$, the ball of radius $r$ around $v$ is defined as
$
B({v},{r}) := \{ u \in V \mid \dist(v,u) \le r \}.$
At a high level, the framework iteratively uses two main primitives:
\begin{itemize}[itemsep=0.5em]
    \item \emph{Ball-size estimation}: estimating $|\Ball{v}{r}|$ for specific
    $(v,r)$ pairs, .
    
    \item \emph{Computing approximate balls}: compute an approximate ball $\NBall{v}{r}$ such that $\Ball{v}{r}\subseteq\NBall{v}{r}\subseteq\Ball{v}{c \cdot r}$, where $c$ is a constant. 
\end{itemize}

We first explain how using the data structure of \cite{gorkiewicz2025incremental} implies an improved randomized result. Then we provide a more technical deterministic algorithm that uses primitives from the Thorup-Zwick distance oracles to efficiently compute and query approximate neighborhood. More specifically: 
\subparagraph{Randomized approach.}
For the ball-size estimation primitive, similar to \cite{latourSaulpicKz}, we plug in the ball-size estimation data structure by Cohen~\cite{cohen1997size}. 
In the recursive greedy framework, we need to repeatedly and greedily select a set of centers and approximate balls. Our novelty in ensuring that all of this iterative process can be done in  nearly-linear time is the use of two shortest-path primitives.

One primitive uses a truncated Dijkstra from a set of carefully selected centers in such a way that ensures each edge is relaxed \emph{at most twice} for any fixed radius $r$. This is repeated for $O(\log\Delta)$ radius scales, leading to $O(m\log\Delta)$ time. 

Another primitive is needed to compute approximate balls around the set of chosen centers efficiently.
Recomputing the approximate balls for each selected center would be too expensive. Therefore we amortize these computations via the Source-Insertion SSSP of \cite{gorkiewicz2025incremental} in total linear time. 
In the variant of~\cite{latourSaulpicKz} for graphs, this step uses probabilistic graph decompositions from~\cite{filtser2019strong}, which incurs an additional polynomial factor that we manage to avoid.

\subparagraph{Deterministic approach.}
There are two source of randomness in the algorithm of \cite{latourSaulpicKz}: one is Cohen's ball estimation \cite{cohen1997size}, and another one is an iterative approximate balls computation. In \cite{latourSaulpicKz}, this is done via a probabilistic decomposition. We make the second primitive deterministic with the described Source-Insertion data structure and truncated Dijkstra. However, a major challenge is using the ball-size estimation. We need a new approach for making this deterministic as Cohen's algorithm \cite{cohen1997size} is inherently randomized.

We replace this step with a deterministic ball approximation approach based on the notion of bunch and clusters used in the Thorup-Zwick distance oracle~\cite{thorup2005approximate}. 
Concretely, we define $\NBall{v}{r}$ as the union of the truncated bunch $\mathrm{Bunch}(v,r)$ with the truncated clusters of $v$'s pivots
(\Cref{eq:approx_ball_bunch}). Here additional approximation and running time factors are added compared to our randomized construction, that depend on the balls approximation factors.

We have to be careful in how we define the approximate balls, so that each vertex overlaps with a constant number of such neighborhoods. A more naive use of Thorup-Zwick clusters results in each vertex overlapping with $n^{1/c}$ clusters, for some constant $c$. However, we define the neighborhoods in such a way that they overlap with $c$ such clusters, and this is crucial for our final approximation factor.

\section{Deterministic Greedy $k$-Center and Faster $k$-Center with Outlier} \label{sec:kcenter}

In this section, first, we will state the Source-Insertion SSSP \Cref{thm:simplified-sssp}, which will be used in the $k$-center \Cref{alg:gonzalez_incremental} and also later in $(k,z)$  clustering \Cref{thm:randomizedkzclustering}. Next, we give a brief overview of Gonzalez algorithm and the modification required for our result. Lastly, we will sketch how it can be extended to $k$-center with outlier.  

\begin{lemma}[Source-Insertion SSSP~\cite{gorkiewicz2025incremental}] 
\label{thm:simplified-sssp}
Consider a parameter $\varepsilon \in (0,1)$. Let $G = (V, E, w)$ be a directed graph with edge weights in
$\{0\} \cup [1,W]$, and let $s \in V$ be a source vertex.
There exists a deterministic data structure that explicitly maintains distance estimates
$\delta : V \to \mathbb{R}_{\ge 0}$ satisfying~$\operatorname{dist}_G(s,v) \;\le\; \delta(v) \;\le\; (1+\varepsilon)\cdot \operatorname{dist}_G(s,v) \;\; \text{for all vertices } v \in V$.

The data structure supports only insertions (or weight decreases) of source edges
$e = sv$, with~$v \in V$. The total update time is~$O\!\left(\frac{m \log(\Delta)\,\log^2 n}{\varepsilon} + \Lambda \right)$,
where $m$ is the final number of edges in $G$, and $\Lambda$ is the total number of updates. \antonis{Let's replace $\Delta$ here with something else, since we use $\Delta$ for the aspect ratio.}
\rajath{yes, by $\Lambda$}
\end{lemma}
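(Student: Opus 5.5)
The plan is to prove this by a lazy, Dijkstra-style propagation of estimate decreases. The key structural fact is that all insertions are incident to $s$, so distances from $s$ only decrease over time. Every $\dist_G(s,v)$ lies in $\{0\}\cup[1,nW]$, so we round estimates to the grid $\{0\}\cup\{(1+\varepsilon')^j\}$ with $\varepsilon'=\Theta(\varepsilon/\log n)$. Each vertex's estimate can then drop only $O(\log(nW)\log n/\varepsilon)$ times. The accounting is that a vertex's incident edges are rescanned only when its estimate drops. Summing over vertices then bounds the total scanning by $O(m\log(\Delta)\log n/\varepsilon)$, up to the choice of $\varepsilon'$. A bucket or priority queue adds one more $\log n$ factor, giving the stated bound. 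Each inserted edge $sv$ also costs $O(1)$ to process, which accounts for the $+\Lambda$ term.

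Concretely, the data structure maintains $\delta$ with the invariant $\delta(v)\ge \dist_G(s,v)$; this holds because $\delta(v)$ is always realized, up to rounding up, by some actual $s$–$v$ path. When an edge $sv$ is inserted or its weight decreased, we set $\delta(v)\leftarrow\min(\delta(v),w(sv))$ rounded up. We then run a Dijkstra-like propagation from $v$ that relaxes an edge $uv'$ only if doing so decreases $\delta(v')$ by at least one grid level. Zero-weight edges are handled by contracting their strongly connected components, or by processing them inside the same bucket.

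The main obstacle is the upper bound $\delta(v)\le(1+\varepsilon)\dist_G(s,v)$. The lazy rule, which ignores improvements smaller than one grid level, loses a factor $(1+\varepsilon')$ per hop. Along a shortest path with many hops this compounds to $(1+\varepsilon')^{h}$, which is unacceptable when $h\gg\log n$. To avoid hop-dependent compounding I would try a multiscale scheme. For each distance scale $2^i$, handle paths by the standard hop-reduction approach: within a scale, only $O(\log n)$ "long" segments can each lose a multiplicative factor. Short-weight edges are instead accounted for additively, by maintaining the relaxed invariant $\delta(v')\le\delta(u)+w(uv')+\varepsilon\, 2^{i}/\mathrm{polylog}$ for vertices at scale $i$. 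Summing additive errors along a path of length about $2^i$ then gives only an $\varepsilon\,2^i$ total error, while the multiplicative losses are confined to $O(\log n)$ levels. Setting $\varepsilon'=\varepsilon/\Theta(\log n)$ then yields $(1+\varepsilon)$ overall.

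The source-only insertion structure is what keeps this cheap. After an insertion $sv$, the only vertices whose true distance changes are those whose new shortest path begins with $sv$. So the propagation is a single monotone sweep outward from $v$, never revisiting a vertex unless its estimate crosses a new grid level. Verifying that this multiscale invariant survives all insertions, without blowing up the number of rescans, is the step I expect to require the most care. This is where I would follow the argument of Górkiewicz and Karczmarz~\cite{gorkiewicz2025incremental} closely.
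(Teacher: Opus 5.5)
The paper gives no proof of this lemma; it imports it as a black box from \cite{gorkiewicz2025incremental}. As a self-contained argument, your proposal has a genuine gap, and it is exactly where you expect it.

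Your cost accounting does reproduce the stated bound: a grid with $\varepsilon'=\Theta(\varepsilon/\log n)$, $O(\log(nW)\log n/\varepsilon)$ drops per vertex, one rescan per drop, a $\log n$ factor for the queue, and $O(1)$ per inserted edge. But this gives a $(1+\varepsilon)$-approximation only if, along every shortest $s$--$v$ path, just $O(\log n)$ relaxations incur a $(1+\varepsilon')$ loss. Nothing in the proposal shows this. The observation that after inserting $sv$ only vertices reachable from $v$ can improve does not help. The sweep stops at any vertex $y$ whose estimate is already within one grid level of the new value. However, that estimate was itself propagated lazily at an earlier insertion. The chain of such stale estimates can run back through arbitrarily many earlier insertions, and each link costs another factor $(1+\varepsilon')$.

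The multiscale repair does not close this gap. An additive slack of $\varepsilon 2^i/\mathrm{polylog}$ per relaxation sums over the number $h$ of hops, not over the path weight. A path of weight about $2^i$ can have up to $n-1$ hops using edges of weight $0$ or $1$. The accumulated error is then $h\cdot\varepsilon 2^i/\mathrm{polylog}$, which is $O(\varepsilon 2^i)$ only when $h$ is polylogarithmic. Your claim that summing along a path of length about $2^i$ gives $O(\varepsilon 2^i)$ conflates weight with hop count.

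Making the slack proportional to edge weight would fix the error, but it destroys the bound on rescans per vertex. Bounding $h$ directly is what hopsets do, and in directed graphs that is exactly the barrier a near-linear bound must avoid.

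The missing ingredient is a propagation rule whose error build-up along a path does not depend on the hop count. That is the actual content of \cite{gorkiewicz2025incremental}, which builds on ideas of Chechik--Zhang and Kyng--Meierhans--Probst Gutenberg. Since you defer precisely this step to that paper, your argument reduces to the same citation the paper uses. You should present it as a citation rather than as a proof.
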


The classical greedy algorithm for $k$-center, due to Gonzalez~\cite{Gon85}, proceeds as follows. It starts by selecting an arbitrary vertex $c_1 \in V$ as the first center. Let $C_1 = \{c_1\}$. In each subsequent iteration $i = 2, \dots, k$, the algorithm selects the vertex $c_i$ that is farthest from the current set of centers $C_{i-1}$:
$ c_i \gets \max_{v \in V} \dist(v, C_{i-1}), $ and updates the set of centers as $C_i \gets C_{i-1} \cup \{c_i\}$. This algorithm guaranties a $2$-approximation for the $k$-center problem.

Our deterministic algorithm builds on the classical Gonzalez algorithm \cite{Gon85}, but achieves $\Tilde{O}(m)$ time complexity through careful use of Source-Insertion Single Source Shortest Path computations. The key insight is that maintaining distances from a super-source vertex connected to all selected centers allows us to efficiently identify the (approximate) farthest vertex from the current set of centers at each iteration.

\kcenter*
\begin{proof}
Let $G = (V,E,w)$ be a graph with nonnegative edge weights. We implement the classical Gonzalez \cite{Gon85} (see \Cref{alg:gonzalez_incremental}) using the $(1+\varepsilon)$-approximate Insert-only Single Source Shortest Path data structure of ~\Cref{thm:simplified-sssp}. Let $\hat{d}_s(\cdot)$ denote the maintained distance estimate from $s$.


\begin{algorithm}[h]

\caption{Deterministic $(2+\varepsilon)$-Approximate $k$-Center}
\label{alg:gonzalez_incremental}

\KwIn{Weighted graph $G=(V,E,w)$, integer $k$}
\KwOut{Set of centers $C$}

Construct $ G' = (V \cup \{ s \}, E, w )$\;

Initialize the $(1+\varepsilon)$-approximate incremental SSSP structure from $s$\;

Pick an arbitrary vertex $c_1 \in V$\;

Insert edge $(s,c_1)$ with weight $0$\;

$C \leftarrow \{c_1\}$\;

\For{$i = 2$ \KwTo $k$}{

    $c_i \leftarrow \arg\max_{v \in V} \hat d_s(v)$\;
    
    Insert edge $(s,c_i)$ with weight $0$\;
    
    $C \leftarrow C \cup \{c_i\}$\;
}

\Return $C$\;

\end{algorithm}

\Cref{alg:gonzalez_incremental} is implemented using the
Source-Insertion SSSP data structure of \Cref{thm:simplified-sssp} as
follows. At each iteration $i = 2 \ldots k$, the algorithm inserts a zero-weight edge $(s,c_i)$.  This causes the Source-Insertion SSSP structure to update the distance estimates $ \operatorname{dist}(v,C_i) \le \hat{d}(v)\leq (1+\varepsilon) \operatorname{dist}(v,\{c_1,\ldots,c_i\})$ for all $v\in V$. More formally,

\begin{observation}\label{obs:distance_estimate}
After processing center set $C_i = \{c_1, \ldots, c_i\}$ through the incremental SSSP 
data structure, the maintained distance estimates satisfy:
$
    \dist_G(v, C_i) \leq \hat{d}(v) \leq (1+\varepsilon) \cdot \dist_G(v, C_i)
    \quad \text{for all } v \in V,
$
where $\hat{d}(v)$ is the distance estimate from $s$ to $v$ in the augmented graph $G'$.
\end{observation}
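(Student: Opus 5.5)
The plan is to read Observation~\ref{obs:distance_estimate} off \Cref{thm:simplified-sssp}. The only real content is the identity
\[
\dist_{G'}(s,v) = \dist_G(v, C_i) \qquad \text{for every } v \in V,
\]
where $G'$ is the augmented graph after the zero-weight source edges $(s,c_1),\ldots,(s,c_i)$ have been inserted. Once this identity holds, the guarantee $\dist_{G'}(s,v) \le \hat d(v) \le (1+\varepsilon)\dist_{G'}(s,v)$ of \Cref{thm:simplified-sssp} gives the claim directly. Two preliminary points make the lemma applicable. First, the inserted edges are exactly source insertions, which is the only update type the structure supports. Second, the undirected graph $G$ is treated as directed by replacing each edge with two opposite arcs of the same weight. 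This at most doubles $m$, and distances are unchanged.

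For the identity, I will prove both inequalities directly.

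\textbf{Upper bound.} For any $c_j \in C_i$, the concatenation of the arc $s\to c_j$ (weight $0$) with a shortest $c_j$--$v$ path in $G$ is an $s$--$v$ path in $G'$ of length $\dist_G(c_j,v)$. Minimizing over $j$ gives $\dist_{G'}(s,v) \le \dist_G(v,C_i)$.

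\textbf{Lower bound.} Take any $s$--$v$ path in $G'$. Since $s$ has no incoming arcs, the path leaves $s$ exactly once, through some arc $(s,c_j)$. After that it stays inside $G$. Its length is therefore $0 + (\text{length of a } c_j\text{--}v \text{ path in } G) \ge \dist_G(c_j,v) \ge \dist_G(v,C_i)$. Taking the minimum over all paths gives the reverse inequality.

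The main obstacle is a technical mismatch rather than a conceptual one. \Cref{thm:simplified-sssp} assumes weights in $\{0\}\cup[1,W]$, while $G$ has arbitrary nonnegative weights. I would fix this by scaling all weights so that the smallest nonzero weight is $1$. Scaling multiplies every distance by the same factor, so multiplicative approximations are preserved, and the factor $W$ becomes polynomial in the aspect ratio, which enters only the $\log \Delta$ term of the running time. Zero-weight source edges are explicitly permitted.

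I would also remark that the estimate is valid after each insertion, not only at the end. This holds because the lemma maintains its guarantee at all times. It is what allows the $\arg\max$ in round $i$ of \Cref{alg:gonzalez_incremental} to be taken with respect to $C_{i-1}$.
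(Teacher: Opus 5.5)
Your proposal is correct and matches the paper. The paper reads the observation directly off the guarantee of \Cref{thm:simplified-sssp} applied to the augmented graph $G'$, implicitly using the identity $\dist_{G'}(s,v)=\dist_G(v,C_i)$, which you verify explicitly with the two-sided path argument. (Your side claim that scaling makes $W$ polynomial in the aspect ratio is not literally true, since an edge on no shortest path can be arbitrarily heavy and one would first cap weights at an upper bound on the diameter; but this affects only the running time, not the observation.)
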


To implement the $\arg\max_{v \in V} \hat{d}_s(v)$ in Line $7$ of the \Cref{alg:gonzalez_incremental} at each iteration, we maintain the values $\hat{d}_s(v)$ in a binary heap (or any other priority queue) keyed by the current distance estimate. Whenever a distance estimate decreases, we perform a corresponding decrease-key operation in the heap.

\begin{restatable}[]{lem}{kcenterproof}
\label{thm:det-gonzalez}
~\Cref{alg:gonzalez_incremental} produces an incremental $(2+\varepsilon)$-approximation for the $k$-center problem.
\end{restatable}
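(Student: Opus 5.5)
We prove the statement by adapting the standard Gonzalez analysis, with the exact farthest-point step replaced by an approximate one via \Cref{obs:distance_estimate}. Fix a prefix length $\ell \le k$ and let $r_\ell := \max_{v \in V} \dist(v, C_\ell)$ be the radius of the prefix solution $C_\ell = \{c_1,\ldots,c_\ell\}$. Let $\mathrm{OPT}_\ell$ denote the optimal $\ell$-center radius. The goal is $r_\ell \le (2+\varepsilon)\,\mathrm{OPT}_\ell$, after rescaling $\varepsilon$ by a constant if needed.

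\textbf{Step 1: approximate farthest point.} When $c_{i}$ is chosen (for $i \ge 2$), it maximizes $\hat d$ over $V$ with respect to $C_{i-1}$. By \Cref{obs:distance_estimate}, for every $v$,
\[
\dist(v,C_{i-1}) \le \hat d(v) \le \hat d(c_i) \le (1+\varepsilon)\dist(c_i,C_{i-1}).
\]
Hence $\dist(c_i, C_{i-1}) \ge r_{i-1}/(1+\varepsilon)$. Radii are nonincreasing in $i$, so $r_{i-1} \ge r_\ell$ for all $i \le \ell+1$.

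\textbf{Step 2: packing argument.} Consider the $\ell+1$ points $c_1,\ldots,c_\ell,c_{\ell+1}$, where $c_{\ell+1}$ is an exact farthest vertex from $C_\ell$, so $\dist(c_{\ell+1},C_\ell) = r_\ell$. (This point is used only in the analysis; it need not be selected.) For any $j<i\le \ell+1$ we get $\dist(c_i,c_j) \ge \dist(c_i,C_{i-1}) \ge r_{i-1}/(1+\varepsilon) \ge r_\ell/(1+\varepsilon)$. The optimal solution has $\ell$ clusters, so by pigeonhole two of these $\ell+1$ points share an optimal center $o$. The triangle inequality then gives $r_\ell/(1+\varepsilon) \le 2\,\mathrm{OPT}_\ell$, that is, $r_\ell \le 2(1+\varepsilon)\mathrm{OPT}_\ell = (2+2\varepsilon)\mathrm{OPT}_\ell$. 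Running the data structure with $\varepsilon/2$ yields $2+\varepsilon$. Two degenerate cases need care. If $r_\ell = 0$ the claim is trivial. If $\ell \ge |V|$ there are not enough distinct points, but then $r_\ell=0$ once all vertices are centers; more precisely, if $r_{i-1}>0$ then $c_i \notin C_{i-1}$, so the chosen centers are distinct.

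\textbf{Main obstacle.} The one subtle point is that the guarantee must hold for every prefix simultaneously. This requires the separation bound in Step~1 to compare against $r_{i-1}$ at the time of selection, and not against the final radius; monotonicity of $r_i$ then bridges the two. The other subtlety is that the estimate $\hat d$ is maintained in the augmented graph $G'$ with zero-weight source edges. Its correctness as an estimate of $\dist(v,C_i)$ is exactly \Cref{obs:distance_estimate}, which follows from \Cref{thm:simplified-sssp}, since $\dist_{G'}(s,v)=\dist_G(v,C_i)$. Disconnected graphs, where $\dist = \infty$, can be handled by the convention that a vertex with infinite distance is selected first, which again satisfies Step~1.
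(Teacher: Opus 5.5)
Your proof is correct and follows the paper's argument: the standard Gonzalez packing argument with $(1+\varepsilon)$ slack. You exhibit $\ell+1$ points pairwise at distance at least $r_\ell/(1+\varepsilon)$, apply pigeonhole into $\ell$ optimal clusters and the triangle inequality, and then rescale $\varepsilon$ to $\varepsilon/2$. The one small difference, which is slightly cleaner, is that you bound the true radius $r_\ell$ using monotonicity of $\dist(\cdot, C_i)$ and an exact farthest point, whereas the paper bounds the estimated maximum $\hat d_k(c_{k+1})$, which requires the estimates themselves to be non-increasing; you also treat each prefix $\ell$ explicitly instead of saying ``the same analysis applies''.
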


\begin{proof}
Let $C_i = \{c_1, \dots, c_i\}$ be the center set after iteration $i$ and let $\hat{d}_i(v)$ be the distance estimate after processing $i$ centers. Define $c_{k+1} \gets \arg\max_{v \in V} \hat{d}_k(v)$ as a hypothetical $(k+1)$-th center, and let $r = \hat{d}_k(c_{k+1})$.
By \Cref{obs:distance_estimate}, $\dist_G(v, C) \leq \hat{d}_k(v) \leq r$ for all $v \in V$, so it suffices to bound $r$.

For each $i = 2, \dots, k+1$, the algorithm chose $c_i$ as the approximate 
farthest point from $C_{i-1}$, so by \Cref{obs:distance_estimate},
$
    \dist_G(c_i, C_{i-1}) 
    \;\geq\; \frac{\hat{d}_{i-1}(c_i)}{1+\varepsilon} 
    \;\geq\; \frac{r}{1+\varepsilon},
$
where the last inequality uses the fact that distance estimates are 
non-increasing as centers are added, so $\hat{d}_{i-1}(c_i) \geq r$. 
In particular, all $k+1$ points $c_1, \dots, c_{k+1}$ are pairwise at 
distance greater than $r/(1+\varepsilon)$.

Since the optimal solution has only $k$ centers, two of the $k+1$ points 
$c_i, c_j$ must share the same optimal center. By the triangle inequality, $$
    \frac{r}{1+\varepsilon} 
    \;<\; \dist_G(c_i, c_j) 
    \;\leq\; 2 \cdot \mathrm{OPT}.$$

Rearranging gives $\max_{v \in V} \dist_G(v, C) \leq r \leq 2(1+\varepsilon) 
\cdot \mathrm{OPT}$. Running the insert only SSSP structure with $\varepsilon/2$ in place 
of $\varepsilon$ yields the claimed $(2+\varepsilon)$-approximation. The incremental property holds because for any $i \leq k$, the sequence $c_1, \dots, c_i$ is exactly the output of the algorithm run with parameter $i$ instead of $k$. The same analysis applies to show that $C_i$ is a $(2+\varepsilon)$-approximation compared to $\mathrm{OPT}_i$, where $\mathrm{OPT}_i$ denotes the optimal radius achievable with $i$ centers..

\end{proof}

We will use the below \Cref{lem:inc-sssp-output} in the proof of our running time \Cref{lem:det-gon-time}. 

\begin{lemma}[Vertex‑touch reporting~\cite{gorkiewicz2025incremental}] \label{lem:inc-sssp-output}
The incremental $(1+\varepsilon)$‑approximate SSSP structure of~\Cref{thm:simplified-sssp} for any source‑edge insertion $(s,v)$ returns a set
$V_{\rm touched}\subseteq V$ consisting of exactly the vertices whose
distance estimate decreases because of the insertion.
Moreover, over any sequence of $\Lambda$ such insertions the total running time
(and the total size of all returned sets) is
$
O\!\bigl(m\log(\Delta)\log n/\varepsilon+\Lambda\bigr)=\tilde O(m)
$
when $\Lambda=O(n)$ and $\varepsilon=\Theta(1)$.
\end{lemma}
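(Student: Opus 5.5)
The plan is to derive the statement directly from \Cref{thm:simplified-sssp}, using its guarantee that the estimates $\delta(\cdot)$ are maintained \emph{explicitly}. Explicit maintenance means every change to some $\delta(v)$ happens through a concrete write performed by the update procedure, and each such write costs at least one unit of work. So the number of estimate changes over the whole update sequence is at most the total update time of the structure. The reporting mechanism will piggyback on these writes.

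Concretely, I will wrap the data structure as follows.
\begin{itemize}
\item Keep a global insertion counter $\tau$ and, for each vertex $v$, a stamp $\mathrm{last}(v)$.
\item When processing the insertion of source edge $(s,v)$, increment $\tau$ and start with an empty list $L$.
\item Whenever the internal procedure writes a strictly smaller value into some $\delta(u)$, check whether $\mathrm{last}(u)\neq\tau$. If so, set $\mathrm{last}(u)\gets\tau$ and append $u$ to $L$.
\item At the end of the insertion, return $V_{\rm touched}:=L$.
\end{itemize}
The returned set is exactly the set of vertices whose estimate decreased during this insertion. Any vertex whose estimate decreased was written to with a smaller value, so it is in $L$. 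Conversely, only such vertices are appended, and the stamps ensure no duplicates. Writes that do not decrease the value, or that raise it temporarily, are not reported; I will note that by the monotonicity used in \Cref{thm:det-gonzalez} the estimates never increase, so this case does not arise for the final values. To be safe, I would compare each vertex's value at the end of the insertion with its value at its first write during that insertion, and drop vertices whose value did not net-decrease. This check costs $O(|L|)$.

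The overhead is $O(1)$ per write plus $O(|L|)$ per insertion. Since $|L|$ is at most the number of writes, the total extra time is bounded by the total update time, which is $O(m\log(\Delta)\log^2 n/\varepsilon+\Lambda)$ by \Cref{thm:simplified-sssp}. The same bound holds for the total size $\sum|V_{\rm touched}|$. For $\Lambda=O(n)$ and $\varepsilon=\Theta(1)$ this is $\tilde O(m)$. The exact $\log n$ versus $\log^2 n$ exponent does not matter for the stated $\tilde O(m)$ conclusion; if the sharper form is needed, it comes from the accounting in \cite{gorkiewicz2025incremental} itself.

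The main obstacle I expect is making sure the structure of \cite{gorkiewicz2025incremental} does not store estimates \emph{implicitly}, for example through lazy offsets shared by many vertices or through rounding scales. In that case a single operation could change many $\delta(u)$ values at once without touching them. The plan is to rely on the word ``explicitly'' in \Cref{thm:simplified-sssp}: it guarantees that each change to $\delta(u)$ is materialized as a write. If it were not, I would instead charge reported vertices to edge relaxations of the Dijkstra-like propagation, which are already counted in the $m\log(\Delta)/\varepsilon$ term.
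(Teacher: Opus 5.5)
Your argument is sound, but it takes a different route from the paper. The paper proves nothing here: it imports the lemma, both the reporting property and the $O(m\log(\Delta)\log n/\varepsilon+\Lambda)$ bound, directly from \cite{gorkiewicz2025incremental}. You instead derive it from \Cref{thm:simplified-sssp} by instrumenting the writes to the explicitly stored array $\delta$. ``Explicitly maintains'' means $\delta$ is current after every update, so every change is a write the update time already pays for; your lazy-offset worry and the fallback of charging to relaxations are unnecessary. Timestamping those writes gives exactly the decreased vertices with $O(1)$ overhead per write, which bounds both the extra time and $\sum|V_{\rm touched}|$ by the total update time.

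This buys a self-contained, implementation-agnostic proof. The price is a logarithmic factor: you get $O(m\log(\Delta)\log^2 n/\varepsilon+\Lambda)$ rather than the stated bound. No argument that uses only the update-time bound of \Cref{thm:simplified-sssp} can close this gap. Read literally, the lemma claims a total running time smaller than that theorem's own total update time. The sharper form therefore rests on the internal accounting of \cite{gorkiewicz2025incremental}, which is exactly what the paper's citation supplies. Every downstream use (the $k$-center running time, \Cref{lem:forbidding_loops_time}, \Cref{lem:outlier-2approx-runtime}) needs only the $\tilde O(m/\varepsilon)$ form, so your bound suffices there.

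One small repair to your non-monotone fallback is needed. Stamp a vertex at its first write of any kind during the insertion, record its pre-insertion value at that moment, and report it only if the final value is strictly smaller. If you stamp only on strictly decreasing writes and compare against the value ``at its first write,'' you could report a vertex whose net change over the insertion is an increase. Under the monotonicity the paper assumes, this is cosmetic.
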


\begin{restatable}[]{lem}{kcentertime}
\label{lem:det-gon-time}
    The running time of \Cref{alg:gonzalez_incremental} is $\Tilde{O}(\frac{m}{\varepsilon})$
\end{restatable}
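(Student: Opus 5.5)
We charge the running time of \Cref{alg:gonzalez_incremental} to three parts: building $G'$, the work inside the Source-Insertion SSSP structure, and the heap operations used to find $\arg\max_v \hat d_s(v)$ in Line~7.

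\emph{Construction and SSSP cost.} Building $G'$ takes $O(n+m)$ time. The final graph $G'$ has at most $m+k$ edges, because each of the $k$ iterations inserts exactly one zero-weight source edge. So the number of updates is $\Lambda = k \le n$. By \Cref{thm:simplified-sssp}, the total update time is
\[
O\!\left(\tfrac{(m+k)\log(\Delta)\log^2 n}{\varepsilon} + k\right) = \tilde O(m/\varepsilon).
\]
Here we assume, as usual, that $n = O(m)$, for instance after discarding isolated vertices or treating each connected component separately. We also run the structure with $\varepsilon/2$ in place of $\varepsilon$, as the approximation proof requires; this changes only constants.

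\emph{Heap cost.} This is the step that needs care. The heap must stay consistent with the estimates $\hat d_s(\cdot)$, and we must not pay for scanning all vertices in each iteration. We use \Cref{lem:inc-sssp-output}: after each insertion of $(s,c_i)$, the structure returns the set $V_{\rm touched}$ of exactly those vertices whose estimate decreased. For each such vertex we perform one decrease-key, at cost $O(\log n)$. No other vertex changes its key, so the heap stays correct.

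The total size of all returned sets is bounded by the total running time in \Cref{lem:inc-sssp-output}, namely $O(m\log(\Delta)\log n/\varepsilon + k)$. Hence all decrease-keys together cost $\tilde O(m/\varepsilon)$. One detail: the initial estimates are $+\infty$ before $c_1$ is inserted. We therefore insert every vertex into the heap only after the first source insertion, which takes $O(n\log n)$ time. Alternatively, we can treat $+\infty$ keys uniformly.

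\emph{Max queries and total.} Each $\arg\max$ in Line~7 is a find-max on a max-heap, costing $O(1)$, or $O(\log n)$ with a min-heap on negated keys. Over $k-1$ iterations this adds $O(k\log n)$. Summing the construction, SSSP, heap-maintenance and max-query costs gives a total of $\tilde O(m/\varepsilon)$. The main subtlety is justifying that heap maintenance is proportional to the number of estimate changes rather than $kn$, and \Cref{lem:inc-sssp-output} supplies exactly that.
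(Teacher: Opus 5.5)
Your proposal is correct and follows essentially the same route as the paper: you charge the SSSP work to \Cref{thm:simplified-sssp}, and you bound heap maintenance by one $O(\log n)$ decrease-key per estimate decrease reported by \Cref{lem:inc-sssp-output}, plus $k$ cheap find-max calls. Your extra bookkeeping (the final edge count $m+k$, the initial $+\infty$ keys, and the assumption $n=O(m)$) only makes explicit what the paper's proof leaves implicit.
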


\begin{proof}
    We account for each component of the algorithm separately.

    \textbf{Insert only SSSP structure.} Over all $k$ iterations, exactly $k \leq n$ 
    zero-weight edges $(s, c_i)$ are inserted into the augmented graph $G'$. 
    By ~\Cref{thm:simplified-sssp}, the Insert only SSSP structure 
    handles the full sequence of insertions and maintains all distance 
    estimates in total time $\tilde{O}(\frac{m}{\varepsilon})$. \\ 
    \textbf{Heap operations.} Since each inserted edge has weight zero and 
    connects $s$ directly to a center, inserting $(s, c_i)$ can only 
    \emph{decrease} distance estimates --- never increase them. Therefore, 
    the heap requires only \texttt{decrease-key} and \texttt{find-max} 
    operations throughout the algorithm. The number of \texttt{decrease-key} 
    operations equals the number of distance updates performed by the Insert only SSSP 
    structure, which is $\tilde{O}(m)$ by \Cref{lem:inc-sssp-output}. 
    Each such operation costs $O(\log n)$, contributing $\tilde{O}(m)$ in 
    total. The $k \leq n$ \texttt{find-max} calls, each cost 
    $\Tilde{O}(1)$. 
    
    Both components run in at most $\tilde{O}(\frac{m}{\varepsilon})$, hence the total 
    running time of ~\Cref{alg:gonzalez_incremental} is $\tilde{O}(\frac{m}{\varepsilon})$.
\end{proof}
\end{proof}
\begin{remark}
An alternative approach constructs a maximal distance-$d$ independent set via
the same Source-Insertion SSSP structure, then binary-searches over $d$ to achieve
a $(2+\varepsilon)$-approximation.  This also runs in $\tilde{O}(m)$ time but does not naturally yield an \emph{nested} center sequence, so we prefer the Gonzalez approach here.
\end{remark}

For the $k$-center with $t$ outliers, in \Cref{alg:gonzalez_incremental}, at each iteration $i$, we find a set $S$ of the $(1+ \varepsilon)t$ furthest vertices from the super source and randomly select one vertex from the set $S$ as the next center. We prove \Cref{lem:outlier-2approx} in \Cref{sec:outliers}.

\section{Greedy $(k,z)$-Clustering Algorithm} \label{sec:greedy_kz}

\antonis{Should we mention that some of the parameters (e.g., $c^4$) are an exaggeration and come from the other paper?}
In this section we review the greedy $(k,z)$-clustering algorithm of Dupré la Tour and Saulpic~\cite{latourSaulpicKz}, a simplification of
the algorithm of Mettu and Plaxton~\cite{mettu2003online}, and we
present it as a \emph{template}. The algorithm is stated in terms of
two abstract primitives -- ball values and approximate balls  (\cref{eq:value_ball,eq:approx_ball} below), not on
how they are implemented. Both of our $(k,z)$-clustering results are
obtained by instantiating this template. In \cref{sec:randomized}, we
give a randomized implementation -- ball values via Cohen's ball-size
estimation~\cite{cohen1997size}, approximate balls via truncated
Dijkstra computations, and a forbidding loop driven by a single
Source-Insertion SSSP structure -- which yields
\Cref{thm:randomizedkzclustering}. In \Cref{sec:deterministic}, we give
a deterministic implementation -- ball values
via the bunches and clusters of the Thorup--Zwick
hierarchy~\cite{thorup2005approximate} -- which yields \Cref{thm:deterministicclustering}.

 The pseudo-code of this simplified greedy algorithm is provided in~\cref{alg:simplified-greedy}. Similarly to~\cite{latourSaulpicKz}, we use a parameter $c \geq 5$ and assume for simplicity that the aspect ratio $\Delta$ is a power of $2c$. 

\antonis{I replaced ``Let $B \coloneqq B(u,r)$ be a ball'', since that $B$ was not used anywhere, right?}
\begin{algorithm}[h]
\caption{Simplified Recursive Greedy}
\label{alg:simplified-greedy}
\KwIn{A weighted undirected graph $G = (V, E, w)$ and an integer $k \geq 1$}
\KwOut{A set $C$ of at most $k$ centers}
\DontPrintSemicolon
\vspace{0.5em}
Define the set of available balls as: $\mathcal{B} \coloneqq \bigl\{B\bigl(u, \frac{\Delta}{(2c)^\ell}\bigr) \,\big|\, u\in V,\; \ell\in\{0,\ldots,\log_{2c}(\Delta)+7\} \bigr\}$
\vspace{0.2em}

Compute $\mathrm{Value}(B(u,r))$ for all available balls
\vspace{0.4em}

\For{$i=1$ \KwTo $k$}{
    \If{$\mathcal{B}$ is \text{empty}}{
        \textbf{output} $C_{i-1}=\{c_1,\ldots,c_{i-1}\}$
    }
    Let $B(u,r)$ be a ball in $\mathcal{B}$ with maximum $\mathrm{Value}(\cdot)$ \label{algline:B_init_cent_sel}
    \vspace{0.2em}
    
    \tcp{Center-selection loop}
    \While{$r > \bigl(\frac{1}{2c}\bigr)^7$} { 
        Compute an approximate ball $N \coloneqq N(u, 10c\cdot r)$ \label{line:compute_N_dijkstra}
        
        $u \gets \arg\max_{v\in N}\mathrm{Value}\Bigl(B\bigl(v, \frac{r}{2c}\bigr)\Bigr)$ \label{line:find_next_best_u}
        
        $r \gets \frac{r}{2c}$
    }

    $c_i \leftarrow u$

    \vspace{0.2em}
    \tcp{Forbidding loop}
    \ForEach{$r \in \bigl\{\frac{\Delta}{(2c)^\ell} \mid \ell \in \{0,\ldots,\log_{2c}(\Delta)+7\} \bigr\}$}{
        \ForEach{$u \in N(c_i,100c^4 \cdot r)$}{
            Remove the ball $B(u,r)$ from the set $\mathcal{B}$\;
        }
    }
}
\vspace{0.2em}
\textbf{Output} $C_k \coloneqq \{c_1,\ldots,c_k\}$\;
\end{algorithm}

Let $\mathcal{R} \coloneqq \left\{\frac{\Delta}{(2c)^\ell} \mid \ell \in \{0, \dots, \log_{2c}(\Delta) + 7\} \right\}$ be the set of radii.

\paragraph{The two primitives.} The template needs the graph for the  following two primitives (both of which tolerate
constant-factor slack).

(P1) For a parameter $\rho \ge 1$ with
$c \ge \rho^2/2$: for every $u \in V$ and $r \in \mathcal{R}$, the
algorithm requires a quantity $\mathrm{Value}(B(u,r))$ satisfying
\begin{equation}
  \frac{r^z}{\rho} \cdot |B(u, r)| \;\leq\; \mathrm{Value}(B(u, r))
  \;\leq\; \rho \myhs r^z \cdot |B(u, c \cdot r)|.
  \label{eq:value_ball}
\end{equation}

For our randomized $(k,z)$ clustering (\Cref{sec:randomized}), we take the value of $\rho = 3$ and for our deterministic algorithm (\Cref{sec:deterministic}) we take $\rho = t+1$, where $t$ is the number of levels of the
Thorup--Zwick distance oracle. The technical justification that any $\rho \geq 1$ and $c \geq \frac{\rho^2}{2}$ suffice is provided in ~\Cref{sec:approx-value}. 


\emph{(P2) Approximate balls.} For every $u \in V$ and radius $r$, the
algorithm requires a set $N(u, r)$ satisfying
\begin{equation}
  B(u, r) \;\subseteq\; N(u, r) \;\subseteq\; B(u, c \cdot r).
  \label{eq:approx_ball}
\end{equation}

    
\yasamin{I find this presentation strange: we give the algorithm, and then define some terms and then give high-level overview. I may bring the explanation earlier and bring the definitions where they become relevant}
 This greedy algorithm maintains a set of candidate balls of various radii and it proceeds in $k$ iterations, where $k \geq 1$ and $z \geq 1$ are the parameters of the $(k, z)$-clustering instance. 

In each iteration, the algorithm selects the ball with the highest ``value''---a function of its radius and the number of points it contains. The \emph{center-selection loop} is then performed, in which the algorithm recursively explores smaller-radius balls within the selected one to choose a precise center~$c_i$. Each vertex $u$ in the center selection loop from which we compute $N(u,r)$ is called candidate center. At the end of each iteration, the \emph{forbidding loop} is performed, in which the algorithm removes all candidate balls that are ``too close'' to the newly chosen center~$c_i$. Due to the forbidding loop, the center selection loop does not run from every vertex for a specified radius $r$. Helping us run the algorithm efficiently.  

The guarantee of this greedy algorithm is that each prefix $c_1, \ldots, c_\ell$ of the sequence of centers $c_1, \ldots, c_\ell, \ldots, c_k$ is an $O(1)$-approximation for the $(\ell, z)$-clustering problem, where~$\ell \leq k$. 
In order to efficiently compute each $\mathrm{Value}(B(u, r))$, the $(k, z)$-clustering algorithm of Dupré la Tour and Saulpic~\cite{latourSaulpicKz} employs the algorithm of Cohen~\cite{cohen1997size} (see~\cref{sec:greedykz} in the Appendix). 
\section{Randomized $(k, z)$-Clustering Algorithm on Graphs}
\label{sec:randomized}
In this section, we develop a $(k,z)$-clustering algorithm in the graph setting that runs in~$\tilde{O}(m)$ time, as demonstrated in~\Cref{thm:randomizedkzclustering}. \antonis{Replace forbidding loop with Dijkstra?}

\randomizedkzclustering*

Our $(k, z)$-clustering algorithm is an adaptation of the algorithm by Dupré la Tour and Saulpic~\cite{latourSaulpicKz}; a pseudocode of the $(k, z)$-clustering algorithm from~\cite{latourSaulpicKz} is provided in~\Cref{alg:simplified-greedy}. Specifically, we improve the running time by implementing some of its components more efficiently.  Consequently, we improve upon the $\tilde{O}(m^{1+\frac{1}{c}})$-time algorithm (for some constant $c \geq 5$) of Dupré la Tour and Saulpic~\cite{latourSaulpicKz}, and the $m^{1+o(1)}$-time algorithm of Jiang, Jin, Lou, and Lu~\cite{jiang2026local}.

Our adapted $(k, z)$-clustering algorithm consists of the same three components as the algorithm by Dupré la Tour and Saulpic~\cite{latourSaulpicKz} (\Cref{alg:simplified-greedy}). These three components are: (i) the ball-value estimation, (ii) the center-selection loop, and (iii) the forbidding loop. In the following paragraphs, we describe these components alongside our improvements. 

\textbf{Ball-value estimation.}
Consider a parameter $c \geq 5$, and let $\mathcal{R} \coloneqq \left\{\frac{\Delta}{(2c)^\ell} \mid \ell \in \{0, \dots, \log_{2c}(\Delta)  + 7\} \right\}$ be the set of radii considered by the algorithm.
For every vertex $u \in V$ and every radius~$r \in \mathcal{R} $, the algorithm needs access to 
$\mathrm{Value}(B(u, r))$ (see \Cref{eq:value_ball} for the definition of $\mathrm{Value}(\cdot)$). In turn, $\mathrm{Value}(\cdot)$ depends on (approximations of) the values $b(v,\tilde{r}) \coloneqq |B(v,\tilde{r})|$, where $v \in V$ is a vertex and $\tilde{r} \in \mathcal{R}$ is a radius. These values~$b(\cdot, \cdot)$ are approximated in the preprocessing phase using the algorithm by Cohen~\cite{cohen1997size}; the guarantees of Cohen's algorithm appear in~\Cref{lem:cohen-ball-count}. 

More precisely, for each pair \((u,r) \in V \times \mathcal{R}\), Cohen's algorithm returns a $(1\pm\varepsilon)$-approximate estimate~\(\tilde b(u,r)\) of \(b(u,r)\) in~\(O(\log\log n)\) time. Hence, the $(k, z)$-clustering algorithm uses the approximate size \(\tilde b(\cdot,\cdot)\) to compute~$\mathrm{Value}(\cdot)$ for all candidate balls; this component is exactly the same as in~\cite{latourSaulpicKz}.\antonis{footnote is correct, right?}\rajath{the footnote, does not consider that cohen can give $1-\epsilon$ approximate ball size}

\textbf{Center-selection loop.}
For each candidate center $u \in V$ and each radius~$r \in \mathcal{R}$, the approximate ball $N(u, 10c \cdot r)$ (in Line~\ref{line:compute_N_dijkstra} of~\Cref{alg:simplified-greedy}) is computed by running Dijkstra's algorithm from~$u$, terminating once all vertices within distance $10c \cdot r$ have been scanned. The pseudocode of this computation is provided in~\Cref{alg:center-selection}, where the ball $N(u,10c \cdot r)$ is the set of all vertices extracted from~$Q_r$, $N$ in the ~\Cref{alg:center-selection}.\footnote{Notice that in fact $B(u, 10c\cdot r)$ is computed, which is even stronger than $B(u, 10c^2 \cdot r)$ required in the definition of the approximate ball $N(u, 10c \cdot r)$ (see~\Cref{sec:greedy_kz}).} \antonis{footnote is correct, right?} Subsequently, Line~\ref{line:find_next_best_u} of~\Cref{alg:simplified-greedy} is performed by scanning the entire ball $N(u, 10c \cdot r)$.  \antonis{This is new. It's correct, right?}

Regarding the truncated Dijkstra executions (\Cref{alg:center-selection}), for each vertex $v \in V$ and each radius $r \in \mathcal{R}$, the algorithm maintains a distance label~$\mathrm{d}_r[v]$, where $\dd_r[v]$ is initialized to~$\infty$. For each radius $r \in \mathcal{R}$, the algorithm maintains a priority queue $Q_r$ that sorts the vertices based on $\dd_r[\cdot]$. These data structures are reused across all center-selection loops for radius $r$. Each execution restores both to this state before
returning: it resets $\dd_r[y] \gets \infty$ for exactly the vertices $y$
it labeled (which are the vertices of its output ball $N$, at cost
$O(|N|)$), and $Q_r$ becomes empty by construction . The shared data structures
serve to avoid initializing an array of size $n$ in every
execution.\antonis{Is this clear, or a footnode is needed?} 

\begin{algorithm}[H]
\caption{Truncated Dijkstra for ball $N(u, 10c \cdot r)$}
\label{alg:center-selection}
\KwIn{A weighted undirected graph $G=(V, E)$ with non-negative weights $w(\cdot,\cdot)$, a source $u \in V$, and a radius $r$}
\KwOut{$N = B(u, 10c \cdot r)$}
\vspace{0.2em}

$\dd_r[u] \gets 0$

Insert $u$ into $Q_r$
\vspace{0.2em}

\While{$Q_r \neq \emptyset$}{
    Extract $v$ with minimum $\dd_r[v]$ from $Q_r$ \\
    Add $v$ to $N$\;
    \vspace{0.2em}
    
    \For{each $(v,x) \in E$}{ \label{algline:relax_edge}
        $\mathit{dst} \gets \dd_r[v] + w(v,x)$
        \vspace{0.2em}
        
        \If{$\mathit{dst} < \dd_r[x] \textbf{ and } \mathit{dst} \leq 10c \cdot r$}{
            $\dd_r[x] \gets \mathit{dst}$
            
            Add $x$ to $Q_r$
        }
    }
}
\lForEach{$y \in N$}{$\dd[y] \gets \infty$ \tcp*[f]{undo all writes}}
\end{algorithm}

\textbf{Forbidding loop.} 
A single deterministic Source-Insertion SSSP algorithm is employed (from~\Cref{thm:simplified-sssp}). In particular, a source vertex~$s\notin V$ is introduced (initially with no incident edges), and the Source-Insertion SSSP algorithm is initialized on the directed version of~$G_s = (V \cup \{s\}, E, w)$.\footnote{Since~\Cref{thm:simplified-sssp} requires a directed graph as input, each undirected edge is replaced by two oppositely directed edges with the same weight.} \antonis{right?} Let~$\delta_s(\cdot)$ denote the maintained distance estimates from~$s$. 
\rajath{removed the word "fake"}
During the forbidding loop for the $i$-th center~$c_i$ (where $i \in [1, k]$), a zero-weight edge~$(s,c_i)$ is inserted into $G_s$ and is passed to the Source-Insertion SSSP algorithm. In turn, the Source-Insertion SSSP algorithm updates the distance estimates such that (where $\epsilon \in (0, 1)$): 
\[
    \dist(v,\{c_1,\ldots,c_i\}) \;\le\; \delta_s(v) \;\le\; (1 + \epsilon)\dist(v,\{c_1,\ldots,c_i\})\;\; \text{for all vertices } v\in V.
\]
Based on~\Cref{lem:inc-sssp-output}, the Source-Insertion SSSP algorithm returns a vertex set~$V_t$, consisting of the vertices whose distance estimates decreased due to the edge~$(s,c_i)$. Next, for every vertex $v \in V_t$ and every radius $r \in \mathcal{R}$, if $\delta_s(v)\le (1+\epsilon) \myhs 100 \myhs c^4\cdot r$ then the corresponding ball~$B(v, r)$ is removed (if not already removed) from the set of available balls~$\mathcal{B}$. 
\antonis{We should check the runtime of ``reported'' vertices from SSSP.}

\subsection{Analysis of the Randomized $(k, z)$-Clustering Algorithm}
In this section, our aim is to prove~\Cref{thm:randomizedkzclustering} by analyzing the $(k, z)$-clustering algorithm described in~\Cref{sec:randomized}. In particular, we analyze the running time of each component and prove that each runs in $\tilde{O}(m)$ time. The approximation ratio analysis is provided in~\Cref{sec:approxkz}, where we also show that our forbidding loop simulates that of~\Cref{alg:simplified-greedy}.\antonis{right?}

\paragraph{Running Time of the Center-Selection Loop}
In order to analyze the running time of all center-selection loops, we bound the total time of all truncated Dijkstra executions used to compute the balls $N(u, 10c \cdot r)$ over all candidate centers $u \in V$ and radii $r \in \mathcal{R}$.
We note that Lemma 3.5 in~\cite{latourSaulpicKz} (which appears as~\Cref{every-ball-once}) is used to bound the total running time of the center-selection loops in~\cite{latourSaulpicKz}.

\begin{lemma}[Lemma 3.5 in~\cite{latourSaulpicKz}] \label{every-ball-once}
    For every vertex $v \in V$ and radius $r \in \mathcal{R}$, the ball $B(v, r)$ appears at most once in the center-selection loop. 
\end{lemma}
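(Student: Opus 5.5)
We argue by contradiction using the forbidding loop. Suppose the ball $B(v,r)$ appears twice in center-selection loops, i.e.\ at some point $u=v$ with current radius $r$ (either as the initial ball chosen in Line~\ref{algline:B_init_cent_sel} or as the argmax in Line~\ref{line:find_next_best_u}). The two occurrences cannot belong to the same center-selection loop, because the radius strictly decreases by a factor $2c$ within a loop. So they occur in the loops of two different iterations $i<j$, and we will show that the forbidding loop after iteration $i$ prevents the second occurrence.

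\textbf{Step 1 (the chosen center is close to every ball on the path).} In iteration $i$ the loop visits balls $B(u_0,r_0),B(u_1,r_1),\dots$ with $r_{a+1}=r_a/(2c)$ and $u_{a+1}\in N(u_a,10c\cdot r_a)\subseteq B(u_a,10c^2 r_a)$ by~\eqref{eq:approx_ball}. Summing the geometric series gives
\[
\dist(u_a,c_i)\le \sum_{b\ge a}10c^2 r_b\le 10c^2 r_a\cdot\frac{1}{1-1/(2c)}\le 20c^2 r_a .
\]
With our truncated Dijkstra we even have $N=B(u,10c\,r)$, which makes this bound stronger. In particular $\dist(v,c_i)\le 20c^2 r\le 100c^4 r$, so $v\in B(c_i,100c^4 r)\subseteq N(c_i,100c^4r)$, and the forbidding loop removes $B(v,r)$ from $\mathcal{B}$.

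\textbf{Step 2 (a removed ball cannot reappear).} The case where $B(v,r)$ is the initial ball of iteration $j$ is immediate, since that ball is picked from $\mathcal{B}$. The main obstacle is the case where the second occurrence arises from Line~\ref{line:find_next_best_u}, which ranges over all of $N(u',10c\,r')$ and not only over $\mathcal{B}$. Here we would use the invariant established in~\cite{latourSaulpicKz}: each ball visited in iteration $j$ at radius $r_a$ has its parent (radius $2c\,r_a$) at distance at most $10c^2\cdot 2c\, r_a$. We would then show by induction along the chain that if $B(v,r)$ is near $c_i$, then the initial ball $B(u_0,r_0)$ of iteration $j$ is within distance about $100c^4 r_0$ of $c_i$. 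Concretely, we chain the bound $\dist(u_0,v)\le 20c^3 r\cdot(\text{geometric factor})$ through the triangle inequality. This argument is delicate because $r_0$ may be much larger than $r$.

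If that induction fails directly, we would fall back on the value-maximality argument of~\cite{latourSaulpicKz}. The idea is to compare $\mathrm{Value}$ of the parent with that of the initial ball of iteration $i$, using~\eqref{eq:value_ball} and $c\ge\rho^2/2$. This should show that the chain in iteration $j$ must start inside the forbidden region of $c_i$, giving a contradiction.

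Since the lemma is exactly Lemma~3.5 of~\cite{latourSaulpicKz}, we ultimately cite it for this step. Our only task is to check that our implementations satisfy its hypotheses. For the center-selection loop, truncated Dijkstra gives $N\supseteq B(u,10c\,r)$. For the forbidding loop, the SSSP threshold $(1+\epsilon)100c^4 r$ removes a superset of $B(c_i,100c^4 r)$, and that superset lies within $B(c_i,(1+\epsilon)^2 100c^4 r)$, which satisfies~\eqref{eq:approx_ball}.
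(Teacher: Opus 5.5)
\textbf{Gap: the key case is not proved.} Your overall strategy is the right one: forbid balls after iteration $i$, then contradict the availability of the \emph{initial} ball of the later iteration $j$. Step~1 is correct. The gap is exactly where the work lies. You never carry out the case in which the second occurrence of $B(v,r)$ arises in Line~\ref{line:find_next_best_u} of iteration $j$. Instead you call the chaining argument delicate, offer an unspecified value-maximality fallback, and finally cite Lemma~3.5 of~\cite{latourSaulpicKz} ``for this step''. That is citing the very statement you set out to prove, and your remaining hypothesis check does not touch this case. The obstacle you name is also not real. The initial radius $r_0$ of iteration $j$ being much larger than $r$ only helps: the forbidding threshold for the initial ball scales with $r_0$, and the distance travelled along a chain is dominated by its first step.

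\textbf{How to close it.} The missing step follows from the geometric-series bound you already derived, now applied to the chain of iteration $j$. This is the argument the paper reproduces, with its own constants, in the proof of Lemma~\ref{lem:dijkstra_bound_vertex}. Write $u_0,u_1,\dots$ for the chain of iteration $j$, with initial ball $B(u_0,r_0)\in\mathcal{B}$, and let $v=u_a$ with $r=r_a\le r_0$. Then $\dist(u_0,v)\le\sum_{b<a}10c^2r_b\le 20c^2r_0$. Combined with Step~1,
\[
\dist(c_i,u_0)\;\le\;\dist(c_i,v)+\dist(v,u_0)\;\le\;20c^2r+20c^2r_0\;\le\;40c^2r_0\;\le\;100c^4r_0 .
\]
Since $r_0\in\mathcal{R}$, the forbidding loop of iteration $i$ removed $B(u_0,r_0)$. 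Removed balls never return to $\mathcal{B}$, so this ball cannot be selected in Line~\ref{algline:B_init_cent_sel} of iteration $j$, which is the contradiction. With the exact balls $B(u,10c\,r)$ computed by truncated Dijkstra, the constants become $20c$ and $40c$, as in the paper.

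\textbf{A correction to your hypothesis check.} The set cut out by the SSSP threshold lies in $B(\{c_1,\dots,c_i\},(1+\varepsilon)100c^4r)$, not in a ball around $c_i$ alone, because $\delta_s$ estimates distance to all centers chosen so far. This does not matter here, since this lemma needs only the lower containment. However, the implementation scans only the set $V_t$ of vertices whose estimate just decreased. So that containment also needs one more observation: any $v\notin V_t$ whose estimate is already below the threshold had $B(v,r)$ removed at an earlier insertion.
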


Since we use Truncated Dijkstra's algorithm (see~\Cref{alg:center-selection}), our aim is to bound the number of times each edge is scanned (in~\Cref{lem:dijkstra-bound}). 
To achieve this, we adapt the proof of~\Cref{every-ball-once} from~\cite{latourSaulpicKz} to obtain~\Cref{lem:dijkstra_bound_vertex} and then~\Cref{lem:dijkstra-bound}. Intuitively, we argue that a ball $B(v, r)$ appears in the center-selection loop if and only if some Dijkstra execution from a candidate center $u \in V$ with radius $r \in \mathcal{R}$ (which computes the ball $N(u, 10c \cdot r)$) visits the vertex~$v$. Then based on~\Cref{every-ball-once} (our~\Cref{lem:dijkstra_bound_vertex}), for a fixed radius $r \in \mathcal{R}$, each vertex $v$ is visited by at most one truncated Dijkstra execution. In turn, each edge is scanned twice---once from each of its endpoints. The formal justification is provided in~\Cref{lem:dijkstra-bound}, and~\Cref{lem:total_dijstra_running} follows because~$|\mathcal{R}| = O(\log \Delta)$.

The next lemma is a variation of Fact 3.4 in~\cite{latourSaulpicKz}. Essentially, it states that during the $i$-th center-selection loop, the candidate center chosen for radius $r \in \mathcal{R}$ is within distance $O(r)$ of any subsequent candidate center during the $i$-th center-selection loop (assuming $c$ is a positive constant).

\begin{lemma} \label{lem:dist_ci_cand_cent}
    During the $i$-th center-selection loop, consider the candidate center $u_1 \in V$ for the radius $r_1$, and let $u_2$ be another candidate center for the radius $r_2 \leq \frac{r_1}{2c}$. Then it holds that $\dist(u_1, u_2) \leq 20c \cdot r_1$.
\end{lemma}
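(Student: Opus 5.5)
We argue by following the chain of candidate centers produced by the center-selection loop and summing a geometric series. Fix the $i$-th center-selection loop. It produces candidate centers $w_0, w_1, w_2, \ldots$ for radii $s_0, s_1 = \frac{s_0}{2c}, s_2 = \frac{s_0}{(2c)^2}, \ldots$, where $w_{j+1}$ is chosen in Line~\ref{line:find_next_best_u} from the set $N(w_j, 10c\cdot s_j)$ computed in Line~\ref{line:compute_N_dijkstra}. Suppose $u_1 = w_a$ with $r_1 = s_a$, and $u_2 = w_b$ with $r_2 = s_b$. Since $r_2 \le \frac{r_1}{2c}$, we have $b \ge a+1$.

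\textbf{Step 1 (single step bound).} In our implementation, $N(w_j, 10c\cdot s_j)$ is exactly the ball $B(w_j, 10c\cdot s_j)$ returned by the truncated Dijkstra of \Cref{alg:center-selection}. Every vertex extracted from $Q_r$ has label at most $10c\cdot s_j$, and that label is the true distance because Dijkstra extracts vertices in nondecreasing order of distance. Since $w_{j+1} \in N(w_j, 10c\cdot s_j)$, it follows that $\dist(w_j, w_{j+1}) \le 10c\cdot s_j$.

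\textbf{Step 2 (telescoping).} By the triangle inequality,
\[
\dist(u_1,u_2) \le \sum_{j=a}^{b-1} \dist(w_j,w_{j+1}) \le 10c \sum_{j=a}^{b-1} \frac{r_1}{(2c)^{j-a}} \le 10c\cdot r_1 \cdot \frac{1}{1-\frac{1}{2c}}.
\]
Since $c \ge 5$, we have $\frac{1}{1-1/(2c)} \le \frac{10}{9} < 2$. Hence $\dist(u_1,u_2) \le 20c\cdot r_1$, which is the claimed bound.

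\textbf{Main obstacle.} The delicate point is whether $N(\cdot)$ satisfies only the weaker template guarantee \eqref{eq:approx_ball}, namely $N(u,10c\, r)\subseteq B(u,10c^2 r)$. Under that guarantee each step could have length up to $10c^2 s_j$, and the bound $20c\cdot r_1$ would fail. The lemma therefore relies on our exact truncated-Dijkstra balls. I would state this explicitly and check that the truncation condition $\mathit{dst} \le 10c\cdot r$ guarantees that extracted labels are exact distances within the radius. The remaining indexing, matching radii to candidates, is routine.
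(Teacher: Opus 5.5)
Your proof is correct and follows the same route as the paper, which cites Fact~3.4 of Dupr\'{e} la Tour and Saulpic (the same chain-of-candidates, geometric-series bound). The paper combines it with exactly the observation you flag: the truncated Dijkstra returns the exact ball $B(u,10c\cdot r)$, not merely a set contained in $B(u,10c^2\cdot r)$, so each step has length at most $10c$ times the current radius.
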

\begin{proof}
    The proof is the same as that of Fact 3.4 in~\cite{latourSaulpicKz}. The additional observation is that in our center-selection loops, Dijkstra's algorithm (\Cref{alg:center-selection}) computes $B(u, 10c\cdot r)$, which is even stronger than $B(u, 10c^2 \cdot r)$ required in the definition of the approximate ball $N(u, 10c \cdot r)$ (for any candidate center $u \in V$ and any radius $r \in \mathcal{R}$).\antonis{right?}
\end{proof}

\begin{lemma}\label{lem:dijkstra_bound_vertex}
    Across all center-selection loops for a fixed radius $r \in \mathcal{R}$, each vertex $v \in V$ is extracted from the priority queue $Q_r$ by at most one truncated Dijkstra execution (\Cref{alg:center-selection}). 
\end{lemma}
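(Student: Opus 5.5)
Fix a radius $r \in \mathcal{R}$ and a vertex $v \in V$. Suppose, for contradiction, that two distinct truncated Dijkstra executions for radius $r$ both extract $v$ from $Q_r$. Call their sources $u_1$ and $u_2$; these are candidate centers at radius $r$ in center-selection loops $i_1 \le i_2$. Since \Cref{alg:center-selection} extracts exactly $B(u_j,10c\cdot r)$, we get $\dist(u_j,v)\le 10c\cdot r$ for $j=1,2$, and hence $\dist(u_1,u_2)\le 20c\cdot r$ by the triangle inequality. The plan is to show that the second execution cannot occur, mirroring the proof of \Cref{every-ball-once}.

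\textbf{Same iteration.} If $i_1=i_2$, the two executions would be two radius-$r$ steps of the same center-selection loop. This is impossible, because the radius strictly decreases by a factor $2c$ in each step, so each loop contains at most one step at radius $r$.

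\textbf{Different iterations.} Now assume $i_1<i_2$. Let $c_{i_1}$ be the center output by iteration $i_1$. The candidate centers after $u_1$ in that loop have radii at most $r/(2c)$, so by \Cref{lem:dist_ci_cand_cent} we have $\dist(u_1,c_{i_1})\le 20c\cdot r$. Therefore
\[
\dist(c_{i_1},u_2)\le 40c\cdot r\le 100c^4 r ,
\]
and the forbidding loop after iteration $i_1$ removes $B(u_2,r)$ from $\mathcal{B}$. In the randomized implementation this still happens: since $\delta_s(u_2)\le(1+\epsilon)\dist(u_2,C_{i_1})\le (1+\epsilon)100c^4 r$, the threshold test fires once $u_2$'s estimate has dropped (see the simulation argument in \Cref{sec:approxkz}).

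\textbf{Main obstacle.} Removal from $\mathcal{B}$ does not immediately prevent $u_2$ from appearing as a candidate center later. A candidate at radius $r$ is either the initial ball chosen from $\mathcal{B}$, which is excluded since $B(u_2,r)$ was removed, or the maximizer in Line~\ref{line:find_next_best_u}, which maximizes over all of $N$ and not only over available balls. For the second case I would follow the proof of Fact 3.4 / Lemma 3.5 in~\cite{latourSaulpicKz}. Let $B(u',r')$ be the ball that started iteration $i_2$; it was still available, so $\dist(u',C_{i_1})>100c^4 r'$. By \Cref{lem:dist_ci_cand_cent}, $\dist(u',u_2)\le 20c\cdot r'$, which gives $\dist(u_2,c_{i_1})>(100c^4-20c)r'$. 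On the other hand, $r'\ge r$ and $\dist(u_2,c_{i_1})\le 40c\cdot r$. These bounds contradict each other since $100c^4-20c>40c$.

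This is exactly the argument behind \Cref{every-ball-once}, so the cleanest route is to invoke that lemma. The key observation is that "$v$ is extracted by the execution from $u$ at radius $r$" implies the ball-appearance situation in that lemma with the enlarged constant $20c$. The constant slack is what I expect to need the most care.
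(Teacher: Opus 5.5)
Your proof is correct and follows essentially the paper's argument. You rule out two executions in one loop because the radius strictly decreases, and you bound $\dist(c_{i_1},u_2)\le 40c\cdot r$ via the candidate-center distance lemma and the triangle inequality. You then use $\dist(u',u_2)\le 20c\cdot r'$ with $r'\ge r$ to contradict the availability of the initial ball $B(u',r')$ of the later loop; this is the paper's step through $B(\hat{u}_2,\hat{r})$, written as a reverse rather than forward triangle inequality.

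Your closing suggestion to simply invoke the every-ball-once lemma is not needed, and it is not a literal reduction, since extracting $v$ is not an appearance of $B(v,r)$. Your direct argument already completes the proof without it.
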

\begin{proof}
    For a fixed radius $r \in \mathcal{R}$, suppose to the contrary that a vertex $v \in V$
    is extracted from the priority queue $Q_r$ by two truncated Dijkstra executions (\Cref{alg:center-selection}), starting from a candidate center $u_1 \in V$ with radius $r \in \mathcal{R}$ and from a candidate center $u_2 \in V$ with radius $r \in \mathcal{R}$. In other words, the vertex $v$ is extracted from $Q_r$ during the computation of both balls $N(u_1, 10c \cdot r)$ and $N(u_2, 10c \cdot r)$.  Without loss of generality, assume that $v$ is extracted from $Q_r$ first in the execution starting from $u_1$, and later in the execution starting from $u_2$. The two executions belong to distinct center-selection loops, since within a single center-selection loop the radius strictly decreases. Hence we assume the loop of $u_1$ precedes that of $u_2$.
    
    Let $c_i$ be the center selected after the corresponding center-selection loop for $u_1$, and let~$\hat{u}_2$ and~$\hat{r}$ be the initial candidate center and radius of the corresponding center-selection loop for~$u_2$; $\hat{u}_2$ and $\hat{r}$ are defined in Line~\ref{algline:B_init_cent_sel} of~\Cref{alg:simplified-greedy}. The contradiction that we argue in the rest of the proof is that the ball $B(\hat{u}_2, \hat{r})$ could not be available at the beginning of the center-selection loop for $u_2$ (in Line~\ref{algline:B_init_cent_sel} of~\Cref{alg:simplified-greedy}).
    
    By the construction of~\Cref{alg:center-selection}, it holds that $\dist(u_1, v)\leq 10c \cdot r$.
    Based on~\Cref{lem:dist_ci_cand_cent}, we have $\dist(u_1, c_i) \leq 20c \cdot r$ (note that $c_i$ is the final candidate center of the $i$-th center-selection loop). Hence, by the triangle inequality it follows that:
    \[
        \dist(c_i, v) \;\leq\; \dist(c_i, u_1) + \dist(u_1, v) \;\leq\; 20c \cdot r + 10c \cdot r \;=\; 30c \cdot r.
    \]
    By the construction of~\Cref{alg:center-selection}, the second time that $v$ is extracted from $Q_r$ due to $u_2$, it should also hold that $\dist(u_2, v) \leq 10c \cdot r$. Thus, by the triangle inequality we infer that:
    \[
        \dist(c_i, u_2) \;\leq\; \dist(c_i, v) + \dist(v, u_2) \;\leq\; 30c \cdot r + 10c \cdot r = 40c \cdot r.
    \]
    Notice that then the ball $B(u_2, r)$ is forbidden during the forbidding loop for $c_i$, because $40c \cdot r \leq 100c^4 \cdot r$. 

    Moreover, by~\Cref{lem:dist_ci_cand_cent} we have $\dist(\hat{u}_2, u_2) \leq 20 c \cdot \hat{r}$, and thus $\dist(c_i, \hat{u}_2) \leq \dist(c_i, u_2) + \dist(u_2, \hat{u}_2) \leq 40c \cdot r + 20 c \cdot \hat{r}$. Since $r \leq \frac{\hat{r}}{2c}$, it follows that:\footnote{Note that if $r = \hat{r}$ then $\hat{u}_2 = u_2$, and the contradiction already follows since we argued that $B(u_2, r)$ has been forbidden.}
    \[
        \dist(c_i, \hat{u}_2) \;\leq\; 40c \cdot \frac{\hat{r}}{2c} + 20 c \cdot \hat{r} \;\leq\; (20 + 20c) \cdot \hat{r} \;\leq\; 40c \cdot \hat{r}.
    \]
    In turn, the ball $B(\hat{u}_2, \hat{r})$ is forbidden during the forbidding loop for $c_i$, because $40c \cdot \hat{r} \leq 100c^4 \cdot \hat{r}$. This contradicts the fact that $B(\hat{u}_2, \hat{r})$ was available at the beginning of the center-selection loop for~$u_2$. As a consequence, the vertex $v$ cannot be extracted from the priority queue $Q_r$ by two truncated Dijkstra executions.
\end{proof}

\begin{lemma}\label{lem:dijkstra-bound}
    Across all center-selection loops for a fixed radius $r \in \mathcal{R}$, each edge $e \in E$ is scanned at most twice.
\end{lemma}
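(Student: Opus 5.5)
The plan is to reduce edge scans to vertex extractions and then invoke \Cref{lem:dijkstra_bound_vertex}. In \Cref{alg:center-selection}, edges are scanned only in the loop at Line~\ref{algline:relax_edge}. That loop runs over the edges incident to a vertex $v$, and only immediately after $v$ is extracted from $Q_r$. Moreover, within a single execution each vertex is extracted at most once. This holds because $\dd_r[\cdot]$ is only ever decreased during an execution, and a vertex is (re)inserted only when its label strictly decreases. With nonnegative weights, a vertex extracted with the minimum label can never later receive a strictly smaller label. If $Q_r$ is a heap with decrease-key, the reinsertion is simply a decrease-key, and each vertex sits in $Q_r$ at most once.

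The key steps, in order, are as follows.

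\begin{enumerate}
\item Fix $r\in\mathcal{R}$ and an edge $e=\{x,y\}$. Since $G_s$ uses the undirected graph, viewed as two directed copies, $e$ is scanned from $x$ only during an extraction of $x$, and from $y$ only during an extraction of $y$.
\item By \Cref{lem:dijkstra_bound_vertex}, across all center-selection loops for radius $r$, the vertex $x$ is extracted from $Q_r$ by at most one truncated Dijkstra execution.
\item Within that one execution, $x$ is extracted at most once, by the monotonicity argument above.
\item Hence $x$ is extracted at most once in total, and $e$ is scanned at most once from $x$. The same holds for $y$, so $e$ is scanned at most twice overall.
\end{enumerate}

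One subtlety must be checked: the shared state $\dd_r[\cdot]$ and $Q_r$ must be clean at the start of each execution. Otherwise a stale label could block or duplicate extractions. The reset line restores $\dd_r[y]=\infty$ for every $y\in N$. A vertex whose label was lowered but that was never extracted is impossible, since the while loop empties $Q_r$, so every labeled vertex is extracted and lies in $N$. Thus each execution starts from the all-$\infty$ state, and the per-execution argument applies.

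I expect the main obstacle to be exactly this bookkeeping: showing that every vertex inserted into $Q_r$ is eventually extracted and reset, so that the within-execution at-most-once claim, together with \Cref{lem:dijkstra_bound_vertex}, really accounts for all scans. The remaining steps are direct.
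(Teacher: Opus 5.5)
Your proof is correct and follows the paper's route: an edge is relaxed only when one of its endpoints is extracted from $Q_r$, and \Cref{lem:dijkstra_bound_vertex} bounds those extractions. Your within-execution monotonicity argument and the reset bookkeeping make explicit a step the paper glosses over, since the paper reads ``extracted by at most one execution'' directly as ``extracted at most once''; your only slip is that the truncated Dijkstra runs on $G$, not on $G_s$.
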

\begin{proof}
Consider an arbitrary edge $e = \{v_1, v_2\} \in E$ that is scanned during the truncated Dijkstra execution (\Cref{alg:center-selection}), starting from some candidate center $u \in V$ with radius $r \in \mathcal{R}$ (which computes the ball $N(u, 10c \cdot r)$). The edge $e$ is relaxed (in Line~\ref{algline:relax_edge} of~\Cref{alg:center-selection}) only when one of its endpoints is extracted from the priority queue $Q_r$.
According to~\Cref{lem:dijkstra_bound_vertex}, for the fixed radius $r \in \mathcal{R}$, each endpoint $v_1$ and $v_2$ is extracted from $Q_r$ at most once. As a result, the edge $e = \{v_1, v_2\}$ can be relaxed at most twice in total, as required.
\end{proof}

\begin{lemma} \label{lem:total_dijstra_running}
    The total running time to compute all balls $N(u, 10c \cdot r)$ for all candidate centers $u \in V$ and radii $r \in \mathcal{R}$ in the center-selection loops is $\tilde{O}(m)$.
\end{lemma}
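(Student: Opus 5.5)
The plan is to fix a radius $r \in \mathcal{R}$, bound the total work of all truncated Dijkstra executions (\Cref{alg:center-selection}) with that radius by $O(m \log n)$, and then sum over the $|\mathcal{R}| = O(\log \Delta)$ radii.

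\textbf{Per-radius accounting.} For a fixed $r$, the work of one execution splits into four parts:
\begin{itemize}
\item edge relaxations in Line~\ref{algline:relax_edge};
\item insertions into, and decrease-key operations on, $Q_r$;
\item extract-min operations;
\item the final reset loop $\dd_r[y] \gets \infty$ over $y \in N$.
\end{itemize}
By \Cref{lem:dijkstra-bound}, each edge is scanned at most twice across all executions with radius $r$, so the total number of relaxations is at most $2m$. Every insertion or decrease-key on $Q_r$ is triggered by a successful relaxation, so there are also at most $2m$ such heap operations, each costing $O(\log n)$.

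For extractions, \Cref{lem:dijkstra_bound_vertex} says each vertex is extracted by at most one execution for radius $r$. Hence there are at most $n$ extract-min operations in total, and the sum of $|N|$ over all executions is at most $n$. This bounds both the reset cost and the cost of scanning $N$ in Line~\ref{line:find_next_best_u}, since each Value lookup takes $O(1)$ after preprocessing.

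There is a subtlety here. A vertex $x$ could be inserted into $Q_r$ but not yet extracted. In fact every inserted vertex has $\dd_r[x] \le 10c\cdot r$, and the loop runs until $Q_r$ is empty, so every inserted vertex is eventually extracted and lies in $N$. Therefore the reset loop restores every written label, and the shared arrays never need $\Theta(n)$ reinitialization. The source $u$ itself is extracted and so is charged to the extraction bound.

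\textbf{Conclusion.} Per radius, the cost is $O((m+n)\log n)$. Summing over $O(\log\Delta)$ radii gives $O((m+n)\log n \log\Delta) = \tilde{O}(m)$, using $n = O(m)$ on the relevant connected component, or otherwise treating isolated vertices trivially. One-time initialization of the $\dd_r$ arrays and the queues $Q_r$ costs $O(n\log\Delta)$.

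\textbf{Main obstacle.} The delicate point is that the reuse of shared state is correct and charged properly. Concretely, we must verify that the set of labeled vertices coincides with the extracted set $N$, which justifies the $O(|N|)$ reset. We also need the executions for a given radius to be sequential rather than interleaved, which holds because each center-selection loop processes one ball at a time.
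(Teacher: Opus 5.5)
Your proposal is correct and follows the paper's own route: fix $r$, use \Cref{lem:dijkstra-bound} (itself resting on \Cref{lem:dijkstra_bound_vertex}) to bound edge scans by $2m$ per radius, then multiply by $|\mathcal{R}| = O(\log \Delta)$. The paper's proof is just that one line. Your extra accounting (heap operations charged to relaxations, resets and extractions charged to the at-most-once extraction bound, the $n$ versus $m$ caveat, and one-time initialization) fills in details it leaves implicit, and it is all sound.
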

\begin{proof}
    Based on~\Cref{lem:dijkstra-bound}, across all center-selection loops for a fixed radius $r \in \mathcal{R}$, each edge is scanned at most twice. Since $|\mathcal{R}| = O(\log \Delta)$, each edge is scanned at most $O(\log \Delta)$ times during all center-selection loops, concluding the claim.
\end{proof}

\paragraph{Running Time of the Ball-Value Estimation and the Forbidding Loop}

Now we formally show the running time for computing $\text{Value}(u,r)$ for all vertices $u \in V$ and all radii $r \in \mathcal{R}$ using ~\Cref{lem:cohen-ball-count} and the forbidding loop using Source-Insert SSSP.
\antonis{Cohen's give expected time, we need to argue the high-probability.}
\begin{lemma} \label{lem:ball-value-runtime}
    The total running time required to compute $\text{Value}(u,r)$ for all vertices $u \in V$ and all radii $r \in \mathcal{R}$ is $O(n \cdot \log \Delta \cdot \log \log n)$,
    in addition to the preprocessing time of~\Cref{lem:cohen-ball-count}. 
\end{lemma}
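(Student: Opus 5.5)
The plan is a direct counting argument. After the preprocessing of \Cref{lem:cohen-ball-count}, every estimate $\tilde b(u,r)$ is available for every pair $(u,r) \in V \times \mathcal{R}$ at query cost $O(\log\log n)$. There are $n \cdot |\mathcal{R}| = n\,(\log_{2c}(\Delta)+8) = O(n\log\Delta)$ such pairs, so it suffices to show that each $\mathrm{Value}(B(u,r))$ uses $O(1)$ such queries plus $O(1)$ arithmetic.

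\textbf{Step 1: the value formula.} Following \cite{latourSaulpicKz} (see \Cref{sec:greedykz}), I will define $\mathrm{Value}(B(u,r))$ as $r^z \cdot \tilde b(u,r)$, or more generally as $r^z$ times a combination of a constant number of estimates $\tilde b(u,\tilde r)$ with $\tilde r \in \mathcal{R}$ among $r$ and $c\,r$. I will then check that it meets (P1) with $\rho=3$. Taking $\varepsilon \le 1/2$ in Cohen's guarantee gives
\[
\tfrac12\, b(u,r) \;\le\; \tilde b(u,r) \;\le\; \tfrac32\, b(u,r) \;\le\; 3\, b(u,cr).
\]
Combining this with the monotonicity $|B(u,r)| \le |B(u,cr)|$ yields \Cref{eq:value_ball}.

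One subtlety: $c\,r$ need not lie in $\mathcal{R}$, since $\mathcal{R}$ is spaced by factors of $2c$. If the formula needs $c\,r$, I will instead use the radius $r$ itself. This stays within the slack of (P1), so no extra radii have to be queried.

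\textbf{Step 2: cost per ball.} The power $r^z$ depends only on $r$. So I will precompute $r^z$ once for each of the $O(\log\Delta)$ radii, which costs $O(\log\Delta)$ in total. Each value then costs one (or $O(1)$) Cohen queries of $O(\log\log n)$ time, plus $O(1)$ multiplications.

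\textbf{Step 3: total.} Summing over all pairs gives $O(n\log\Delta\log\log n)$, on top of the preprocessing of \Cref{lem:cohen-ball-count}.

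The main point to be careful about is that the Cohen query time is worst-case per query rather than merely amortized or expected. If it is only expected, I will argue via linearity of expectation over the $O(n\log\Delta)$ queries. For a high-probability bound, I will note that the queries are lookups into the precomputed per-vertex sorted rank lists. Each lookup is a binary or interpolation search over lists of size $O(\log n)$, so the bound holds deterministically given the preprocessing output. The randomness, and any high-probability argument for it, is then confined to the preprocessing term.
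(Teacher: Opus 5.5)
Your proposal is correct and follows the same route as the paper: there are $|V|\cdot|\mathcal{R}| = O(n\log\Delta)$ pairs, and each is charged $O(\log\log n)$ for a query to the structure of \Cref{lem:cohen-ball-count}. The paper's proof does not include your extra material (checking (P1) with $\rho=3$, precomputing $r^z$, and bounding the expected time by linearity of expectation), and none of it is needed for the running-time claim; your high-probability aside also depends on internals of Cohen's structure that the cited lemma does not state.
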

\begin{proof}
    Recall that $\mathcal{R} \coloneqq \left\{\frac{\Delta}{(2c)^\ell} \mid \ell \in \{0, \dots, \log_{2c}(\Delta) + 7\} \right\}$, which means that $|\mathcal{R}| = O(\log \Delta)$. Using~\Cref{lem:cohen-ball-count}, the graph can be preprocessed to build a data structure such that for each query pair $(u, r) \in V \times \mathcal{R}$, an estimate of $|B(u,r)|$ can be returned in $O(\log \log n)$ time. Since the total number of such queries is $|V| \cdot |\mathcal{R}| = O(n \log \Delta)$, the claim follows.
\end{proof}

\begin{lemma} \label{lem:forbidding_loops_time}
    The total running time to perform all forbidding loops is $\tilde{O}(m)$.    
\end{lemma}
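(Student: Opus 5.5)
The plan is to split the cost of all forbidding loops into two parts. The first is the maintenance cost of the single Source-Insertion SSSP structure. The second is the cost of processing the reported vertex sets $V_t$ and deleting balls from $\mathcal{B}$.

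\textbf{The SSSP structure.} At most $k \le n$ zero-weight source edges $(s,c_i)$ are inserted into the directed version of $G_s$, which has $O(m)$ edges. By \Cref{thm:simplified-sssp} with a constant $\epsilon$, the total update time is $O(m\log(\Delta)\log^2 n + n) = \tilde O(m)$. By \Cref{lem:inc-sssp-output}, the total size of all reported sets $V_t$, summed over all $k$ insertions, is $O(m\log(\Delta)\log n + n) = \tilde O(m)$, and producing them costs the same.

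\textbf{Processing the reported vertices.} For each $v \in V_t$ I will iterate over all $r \in \mathcal{R}$ and test whether $\delta_s(v) \le (1+\epsilon)\,100c^4 r$. If so, I mark $B(v,r)$ as removed. The test takes $O(1)$ time, so this step costs $O(|V_t|\cdot|\mathcal{R}|)$ per insertion and $\tilde O(m)\cdot O(\log\Delta) = \tilde O(m)$ overall.

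A slightly better bound is available but not needed. The set of radii passing the test is upward closed, so one could handle only the smallest newly passing radius. Even the naive $|\mathcal{R}|$ factor is fine, since $\tilde O$ hides $\operatorname{polylog}(n,\Delta)$.

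\textbf{Representing $\mathcal{B}$.} Removal must also support line~\ref{algline:B_init_cent_sel}, which repeatedly finds the available ball of maximum value. I will store all $O(n\log\Delta)$ balls in a max-heap keyed by $\mathrm{Value}$, together with a boolean ``removed'' array indexed by $(v,r)$. Each removal is either an $O(\log n)$ heap deletion or a lazy deletion that is skipped when popped. Each ball is removed at most once, so there are at most $O(n\log\Delta)$ removals, costing $O(n\log\Delta\log n)$ in total. Building the heap costs $O(n\log\Delta)$.

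\textbf{Main obstacle.} The point I expect to need care is that only vertices in $V_t$, those whose estimate actually decreased, are examined. I must argue that this loses no required removal. A vertex whose estimate did not change either already satisfied the threshold for $r$, in which case its ball was removed at an earlier insertion when its estimate last decreased, or it still does not satisfy it. The one exception is the initial assignment $\delta_s(c_i)=0$ and similar first-time finite labels, which are themselves reported decreases from $\infty$.

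The correctness of this simulation of \Cref{alg:simplified-greedy} is deferred to \Cref{sec:approxkz}. Here I only need that the work is charged to reported decreases. Summing the three parts gives $\tilde O(m)$.
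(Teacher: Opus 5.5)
Your proof is correct and follows the paper's argument. The paper also bounds the total update time of the single Source-Insertion SSSP structure by \Cref{thm:simplified-sssp}, using that at most $k \le n$ zero-weight source edges are inserted, and charges the scanning of the reported sets $V_t$ to \Cref{lem:inc-sssp-output}. Your explicit $|\mathcal{R}| = O(\log\Delta)$ factor per reported vertex and your heap-based representation of $\mathcal{B}$ only spell out details the paper leaves implicit inside the $\tilde{O}(\cdot)$.
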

\begin{proof}
    A single deterministic Source-Insertion SSSP algorithm is employed, to which at most~$k \leq n$ zero-weight edges are passed.\antonis{here we need that $G$ is connected, for $n \leq m$?} Based on~\Cref{thm:simplified-sssp}, its total update time is: 
    \[
        O\left(\frac{m \log(\Delta) \log^2 n}{\varepsilon} + k\right) \;=\; \tilde{O}\left(\frac{m}{\varepsilon}\right) \;=\; \tilde{O}(m).
    \]    
    Furthermore based on~\Cref{lem:inc-sssp-output}, scanning the returned vertex set $V_t$ (consisting of vertices whose distance estimates decreased due to the zero-weight edges) takes $\tilde{O}(m)$ total time as well.
\end{proof}

\subsubsection{Approximation Ratio} \label{sec:approxkz}
Our $(k, z)$-clustering algorithm described in~\Cref{sec:randomized} is an adaptation of~\Cref{alg:simplified-greedy} by Dupré la Tour and Saulpic~\cite{latourSaulpicKz}. Since~\Cref{alg:simplified-greedy} yields an $O(1)$-approximation for the $(k,z)$-clustering problem, it remains to justify that our $(k, z)$-clustering algorithm outputs a constant-factor approximate solution as well. Recall that for every vertex $u \in V$ and every radius $r \in \mathcal{R}$, the value of the ball $B(u, r)$ satisfies:
\[
     \frac{r^z}{3} \cdot |B(u, r)| \;\leq\; \mathrm{Value}(B(u, r)) \;\leq\; 3 \myhs r^z \cdot |B(u, c \cdot r)|.
\]
Based on~\Cref{lem:cohen-ball-count}, we have $\frac{|B(u,r)|}{1+\varepsilon} \le \tilde{b}(u,r) \le (1+ \varepsilon)|B(u,r)|$. Hence, there is no additional approximation loss in the ball-value estimation procedure compared to~\cite{latourSaulpicKz}. In the center-selection loops, exact distances are used, so there is no additional approximation loss either.

\subparagraph{Forbidding loop.}
In the analysis of~\cite{latourSaulpicKz}, for each $i$-th center $c_i \in C_k$ (where $i \in [1, k]$) and every radius $r \in \mathcal{R}$, the approximate balls used in the forbidding loop of~\Cref{alg:simplified-greedy} satisfy:
\[
    B(c_i, 100c^4 \cdot r) \;\subseteq\; N(c_i, 100c^4 \cdot r) \;\subseteq\; B(c_i, 100c^5 \cdot r).
\]
In our implementation, the Source-Insertion SSSP algorithm from~\Cref{thm:simplified-sssp} maintains distance estimates $\delta_s(v)$ such that:
\[
    \dist(v,\{c_1,\ldots,c_i\}) \;\le\; \delta_s(v) \;\le\; (1 + \epsilon)\dist(v,\{c_1,\ldots,c_i\})\;\; \text{for all vertices } v\in V.
\]
Hence, if $\dist(c_i, v) \leq 100c^4 \cdot r$ then $\delta_s(v) \leq (1+\varepsilon)100c^4 \cdot r$.
Moreover, since $c \geq 5$ and~$\epsilon \in (0, 1)$, it holds that:
\[
    \dist(\{c_1, \ldots, c_i\}, v) \;\leq\; \delta_s(v) \;\leq\; (1+\varepsilon)100c^4 \cdot r \;\leq\; 100c^5 \cdot r,
\] 
where $\delta_s(v) \;\leq\; (1+\varepsilon)100c^4 \cdot r$ is due to the construction of our forbidding loop.
Thus, our implicit approximate ball $\hat{N} \coloneqq \{v \in V \mid \delta_s(v) \leq (1+\varepsilon)100c^4 \cdot r\}$ via the Source-Insertion SSSP algorithm satisfies:
\[
    B(c_i, 100c^4 \cdot r) \;\subseteq\; \hat{N} \;\subseteq\; B(\{c_1, \ldots, c_i\}, 100c^4 (1+\epsilon) \cdot r) \;\subseteq\; B(\{c_1, \ldots, c_i\}, 100c^5 \cdot r).
\]
Therefore, the use of $(1+\epsilon)$-approximate distances does not incur any additional approximation cost.

For efficiency purposes, observe that in our forbidding loop in~\Cref{sec:randomized}, only the vertex set $V_t$ is scanned; $V_t$ consists of the vertices whose distance estimates decreased due to the edge~$(s,c_i)$.
However, if some vertex $v \in \hat{N}$ does not belong to $V_t$ (i.e., $v \notin V_t$), then~$\delta_s(v)$ has been decreased earlier due to another center $c_j$ with $j \in [1, i-1]$. In turn, the corresponding ball $B(v, r)$ has already been forbidden, which concludes the correctness of our forbidding loop.

To summarize, since the analysis of~\cite{latourSaulpicKz} already tolerates constant-factor slack in these internal data structures, our modification preserves the constant-factor approximation.

\yasamin{each part of the argument should not be a seperate subsection. Add * or paragraph so they are not number. Subsection with 3 line are distracting}
\paragraph{Completion of the Proof of~\Cref{thm:randomizedkzclustering}}
Based on our preceding analysis, we can now finish the proof of~\Cref{thm:randomizedkzclustering}, which we restate for convenience.

\randomizedkzclustering*
\begin{proof}
    The time guarantee follows from~\cref{lem:cohen-ball-count,lem:ball-value-runtime,lem:total_dijstra_running,lem:forbidding_loops_time}, and the approximation and correctness guarantees follow from~\Cref{sec:approxkz}.
\end{proof}

\section{Deterministic $(k,z)$-Clustering Algorithm on Graphs}
\label{sec:deterministic} 

In this section we give a deterministic instantiation of the template
of \Cref{sec:greedy_kz}, proving \Cref{thm:deterministicclustering}. Recall
from the randomized algorithm in \Cref{sec:randomized} that the only randomized component of our
$(k,z)$-clustering algorithm is Cohen's ball-size
estimation~\cite{cohen1997size}, used to implement the ball values
; the truncated Dijkstra computations and the SSSP-based
forbidding loop are already deterministic. It therefore suffices to
estimate ball sizes deterministically. We do so using the bunches and
clusters of the Thorup--Zwick distance
oracle~\cite{thorup2005approximate}(will define bunches and
clusters in \Cref{sec:distance_oracle}). First, we give an overview of our proof, then we state Thorup--Zwick distance oracle \Cref{sec:distance_oracle} and explain how we use it to provide a deterministic algorithm. 

While distance oracles are typically used to answer distance queries
between vertex pairs with stretch $2t-1$, where $t$ is the number of
levels of the oracle (see \Cref{sec:distance_oracle}), we instead use it to
obtain, for every vertex $v \in V$ and radius $r > 0$, an approximate
neighborhood $N(v,r)$ as a union of at most $t+1$ precomputed sets (\Cref{eq:approx_ball_bunch}),
whose stored sizes yield a quantity $\hat{N}(v,r)$ satisfying
\[
  |B(v,r)| \;\le\; \hat{N}(v,r) \;\le\; (t+1) \cdot |B(v, 2\alpha r)|
  \qquad\text{with } \alpha = 4t-2
\]
(\Cref{lem:approx-ball} and \Cref{obs:approximate-ball-size-tz}). The deterministic estimation comes at a cost in
the running time as well: the oracle preprocessing takes
$\tilde{O}(t \cdot m \cdot n^{1/t})$ time, compared to the near-linear
time of the randomized algorithm. The parameter $t$ thus governs a
trade-off between solution quality and running time, and we obtain a
deterministic $O(\mathrm{poly}(t))$-approximation in
$\tilde{O}\bigl(t \cdot m \cdot n^{1/t} +
\tfrac{m}{\varepsilon}\bigr)$ time.


    \deterministicclustering*
Before describing the construction, we fix the two template parameters
for this section. Recall from \Cref{sec:greedy_kz} that the template
is governed by two parameters: $\rho \ge 1$, the slack allowed in the
ball values (\Cref{eq:value_ball}), and $c \ge 5$, the radius
blow-up allowed in the approximate balls (\Cref{eq:approx_ball}),
subject to $c \ge \rho^2/2$. The ball values that we construct in
\Cref{eq:det-value} satisfy \Cref{eq:value_ball} with
$\rho = t+1$, and the underlying
approximate neighborhoods satisfy
$B(v,r) \subseteq N(v,r) \subseteq B(v, 2\alpha r)$ with
$\alpha = 4t-2$ (\Cref{lem:approx-ball}). Accordingly, we set
\begin{equation}\label{eq:det-params}
  \rho \coloneqq t+1
  \qquad\text{and}\qquad
  c \coloneqq \max\Bigl(8t-4,\ \bigl\lceil\tfrac{(t+1)^2}{2}\bigr\rceil,
  \ 5\Bigr) = \Theta(t^2).
\end{equation}


\subsection{Thorup--Zwick distance oracle}
\label{sec:distance_oracle}
Given a weighted undirected graph $G=(V,E,w)$ and an integer
$t \ge 1$, the construction is based on a hierarchy
$V = A_0 \supseteq A_1 \supseteq \cdots \supseteq A_t = \emptyset$. For
$v \in V$ and level $i \in \{0, \dots, t-1\}$, the \emph{pivot}
$p_i(v) \coloneqq \arg\min_{w \in A_i} \dist(v,w)$ is the vertex of
$A_i$ nearest to $v$. The \emph{bunch} of $v$ is
\[
  \mathrm{Bunch}(v) \;\coloneqq\; \bigcup_{0 \le i \le t-1}
  \bigl\{w \in A_i \setminus A_{i+1} \;\big|\;
  \dist(v,w) < \dist(v, A_{i+1})\bigr\},
\]
and it contains all pivots of $v$. The \emph{cluster} of
$w \in A_i \setminus A_{i+1}$ is
$C(w) \coloneqq \{v \in V \mid \dist(v,w) < \dist(v, A_{i+1})\}$.

\begin{theorem}[Thorup--Zwick~\cite{thorup2005approximate}]
\label{thm:tz-oracle}
There is a deterministic algorithm that, given $G$ and $t \ge 1$,
computes in $\tilde{O}(t \cdot m \cdot n^{1/t})$ time the hierarchy,
all pivots, and all bunches and clusters together with the associated
exact distances, each stored as a list sorted by distance. Moreover,
$|\mathrm{Bunch}(v)| = O(t \cdot n^{1/t})$ for every $v \in V$, and the
total storage is $O(t \cdot n^{1+1/t})$.
\end{theorem}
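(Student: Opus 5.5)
The statement is the deterministic Thorup--Zwick construction. I would prove it by assembling the cited deterministic machinery of~\cite{thorup2005approximate, thorup2001compact, roditty2005deterministic} rather than re-deriving the stretch analysis, since the theorem claims only structural and complexity facts. The plan has three parts: (i) build the hierarchy $A_0 \supseteq \dots \supseteq A_t$ deterministically, with per-vertex, per-level bunch bounds that carry \emph{no} logarithmic factor; (ii) compute pivots, bunches and clusters with exact distances by modified Dijkstra runs; (iii) read off the size and time bounds.

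\textbf{Hierarchy.} Set $q \coloneqq n^{1/t}$ and $A_0 \coloneqq V$. I build $A_{i+1} \subseteq A_i$ level by level, aiming for two invariants:
\[
  \bigl|\{w \in A_i \setminus A_{i+1} : \dist(v,w) < \dist(v,A_{i+1})\}\bigr| \le 4q \quad\text{for every } v \in V,
  \qquad |A_{i+1}| \le |A_i|/q .
\]
The first invariant is the exact per-level bunch bound. The second, with $|A_0| = n$, gives $|A_{t-1}| \le q$, so we can take $A_t = \emptyset$ while keeping the level-$(t-1)$ contribution $|A_{t-1}| \le q$.

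To construct $A_{i+1}$, I would use the deterministic analogue of the Thorup--Zwick $\mathrm{center}(\cdot)$ procedure from the compact-routing paper, in the form used by Roditty, Thorup and Zwick. That procedure replaces random sampling by an iterative, greedy-with-recomputation selection. One maintains the set $W$ of vertices whose current partial bunch at level $i$ exceeds $4q$. One adds to $A_{i+1}$ a vertex of $A_i$ lying in many of their ``$q$-nearest'' neighborhoods, and recomputes partial bunches by Dijkstra from the newly added vertices only. This repeats until $W = \emptyset$.

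Two things must be checked. First, a charging argument: each added vertex must at least halve the remaining excess or hit $\Omega(q)$ heavy neighborhoods, so that $|A_{i+1}| \le |A_i|/q$ holds without a $\log n$ loss. This is precisely why RTZ obtain the clean $O(tn^{1+1/t})$ space bound, and I would follow their accounting. Second, the recomputations must cost $\tilde{O}(m q)$ per level. For that I would bound the number of times a vertex's distance to $A_{i+1}$ decreases, using the same truncated-Dijkstra argument as in the cluster construction below.

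\textbf{Pivots, bunches and clusters.} Pivots $p_i(v)$ and distances $\dist(v, A_i)$ come from one Dijkstra per level from a super-source attached to $A_i$. This costs $O(t(m + n\log n))$. For each $w \in A_i \setminus A_{i+1}$, the cluster $C(w)$ is computed by the standard modified Dijkstra from $w$. It relaxes an edge into $x$ only when the tentative distance is $< \dist(x, A_{i+1})$. Correctness uses the fact that clusters are closed under shortest-path prefixes, which the Thorup--Zwick lemma establishes. The time is $O(\sum_{x\in C(w)} \deg(x)\log n)$. Summing over $w$ and using $v \in C(w) \iff w \in \mathrm{Bunch}(v)$ gives $O(\sum_v |\mathrm{Bunch}(v)|\deg(v)\log n) = \tilde{O}(t\, m\, n^{1/t})$. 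Bunches are obtained by inverting the cluster lists. Sorting every list by distance adds only an $O(\log n)$ factor, which is absorbed in the $\tilde{O}$.

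\textbf{Size bounds and main obstacle.} Summing the per-level invariant over $t$ levels gives $|\mathrm{Bunch}(v)| \le 4tq = O(t\, n^{1/t})$ exactly. Total storage is $\sum_v |\mathrm{Bunch}(v)| = \sum_w |C(w)| = O(t\, n^{1+1/t})$, plus $O(tn)$ for pivots. The main obstacle is the first step: getting a \emph{deterministic} hierarchy whose per-vertex bunch bound has no $\log n$ factor. A naive greedy hitting set loses a $\log n$. I would therefore rely directly on the Roditty--Thorup--Zwick derandomization, which is known to achieve the exact bounds within $\tilde{O}(t\, m\, n^{1/t})$ time. If their guarantee turns out to be stated only in aggregate, I would fall back on the compact-routing $\mathrm{center}$ procedure, whose per-vertex bound holds by construction at each level.
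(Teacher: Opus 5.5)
The paper does not prove this statement; it simply cites Thorup--Zwick. Your treatment of pivots, of clusters via the pruned Dijkstra (using prefix-closure), of inverting clusters into sorted bunches, and of the time bound $O(\sum_v|\mathrm{Bunch}(v)|\deg(v)\log n)$ given a per-vertex bunch bound is standard and fine.

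\textbf{The gap is the hierarchy step, and it cannot be closed by finding the right reference.} Your two invariants are level-$i$ bunch at most $4q$ for \emph{every} $v$, together with $|A_{i+1}|\le|A_i|/q$. These are not simultaneously satisfiable in general. In fact no hierarchy at all achieves $\max_v|\mathrm{Bunch}(v)|=O(tn^{1/t})$. Here is a counterexample.
\begin{itemize}
\item Take $t=2$, an $N$-element set $U$, and $N$ independent uniformly random $B$-subsets $S\subseteq U$ with $B=\tfrac14\sqrt{N\ln N}$. Let $G$ be the unit-weight graph on $U\cup\{v_S\}$ joining $v_S$ to the elements of $S$. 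Then $n=2N$, and $G$ is connected w.h.p.
\item Each random set avoids a fixed $B$-subset $X\subseteq U$ with probability at least $N^{-1/4}$. By Chernoff and a union bound over the at most $N^B$ choices of $X$, w.h.p.\ every $X\subseteq U$ with $|X|\le B$ is avoided by more than $B$ of the sets.
\item Now fix any $A_1$. Since $A_2=\emptyset$, the level-$1$ part of every bunch is all of $A_1$, so you need $|A_1|\le B$.
\item But then $X=A_1\cap U$ is avoided by more than $B\ge|A_1\setminus U|$ sets. Hence some $S$ has $(\{v_S\}\cup S)\cap A_1=\emptyset$, so $\dist(v_S,A_1)\ge 2$ and $\{v_S\}\cup S\subseteq\mathrm{Bunch}(v_S)$.
\end{itemize}
Thus $\max_v|\mathrm{Bunch}(v)|>B=\Omega(\sqrt{n\log n})$ for every choice of $A_1$.

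This is exactly the logarithmic loss of hitting sets for nearest-neighbour families. So your charging claim cannot be proved: that each added vertex halves the excess or hits $\Omega(q)$ heavy neighbourhoods, with no $\log n$ loss. Your fallbacks do not help either:
\begin{itemize}
\item A clean $O(tn^{1+1/t})$ space bound is an aggregate statement. It controls neither individual bunches nor your $\sum_v|\mathrm{Bunch}(v)|\deg(v)$ running-time sum.
\item $\mathrm{center}$-type procedures enforce a per-level bound only by letting the selected set grow by a $\log n$ factor.
\end{itemize}

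\textbf{What you can prove deterministically} is the per-vertex bound with a logarithmic loss. This suffices for every later use in the paper, since all bounds there are under $\tilde O$.
\begin{itemize}
\item At level $i$, compute for every $v$ its $s=\lceil q(1+\ln n)\rceil$ nearest vertices of $A_i$ (with consistent tie-breaking). An $s$-source-limited multi-source Dijkstra does this in $O(sm\log n)$ time.
\item Let $A_{i+1}$ be a greedy hitting set of these $n$ sets. Once $|A_i|\le s$, set all remaining levels to $\emptyset$.
\item The fractional hitting number is at most $|A_i|/s$, so greedy gives $|A_{i+1}|\le(1+\ln n)|A_i|/s\le|A_i|/q$.
\item Each level-$i$ bunch of $v$ lies inside the $s$ nearest $A_i$-vertices of $v$ minus a hitting element, so it has fewer than $s$ elements.
\end{itemize}
This gives $|\mathrm{Bunch}(v)|=O(tn^{1/t}\log n)$ for every $v$ and storage $O(tn^{1+1/t}\log n)$. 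With the rest of your argument unchanged, the running time is $\tilde O(tmn^{1/t})$. The per-vertex bound $O(tn^{1/t})$, as written in the statement, holds only in expectation, for the randomized Thorup--Zwick sampling.
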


\paragraph{Overview: Ball-size estimation via bunches and clusters.} Computing
$|B(v,r)|$ exactly for all $v \in V$ and $r \in \mathcal{R}$ is too
expensive, since a single ball may contain $\Omega(n)$ vertices.
Instead, we cover each ball by a few of the precomputed sets of
\Cref{thm:tz-oracle} and estimate its size from theirs. The covering
property we use is the following: for all $u, v \in V$, either
$u \in \mathrm{Bunch}(v)$, or there exists a level $i$ such that
$p_i(v) \in \mathrm{Bunch}(u)$ with
$\dist(v, p_i(v)) \le (4t-3) \dist(u,v)$ and
$\dist(u, p_i(v)) \le (4t-3) \dist(u,v)$ (\Cref{obs:stretch}); in
the latter case, $u \in C(p_i(v))$ by \Cref{obs:duality}. Hence,
with $\alpha := 4t-2$, every ball $B(v,r)$ is contained in the union of
$\mathrm{Bunch}(v)$ and the clusters $C(p_0(v)), \dots, C(p_{t}(v))$,
each truncated at radius $\alpha r$, and conversely this union is
contained in $B(v, 2\alpha r)$ (\Cref{lem:approx-ball}). Since each
of these $t+1$ sets is stored sorted by distance, the size of each
truncation is retrieved in $\tilde{O}(1)$ time,
and their total size $\hat{N}(v,r)$ satisfies
$|N(v,r)| \le \hat{N}(v,r) \le (t+1)\,|N(v,r)|$, since every counted
vertex lies in the union and belongs to at most one truncated cluster
per level and at most once to the bunch
(\Cref{obs:approximate-ball-size-tz}).

\begin{lemma}[Bunch--Cluster Duality~\cite{thorup2005approximate}]
\label{obs:duality}
For any $u, w \in V$: $\; w \in \mathrm{Bunch}(v) \iff v \in C(w)$.
\end{lemma}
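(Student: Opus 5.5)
The statement has a typo in its quantifiers: it reads ``for any $u,w\in V$'' while the biconditional involves $v$ and $w$. I will read it as: for all $v,w\in V$, $w\in\mathrm{Bunch}(v)\iff v\in C(w)$. The plan is to unfold both definitions at the level $i$ of $w$ and check that they impose the same inequality. No earlier result is needed beyond the definitions in \Cref{sec:distance_oracle}.

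\textbf{Fixing the level.} First I would note that the sets $A_i\setminus A_{i+1}$ for $i\in\{0,\dots,t-1\}$ partition $V$. This holds because $A_0=V$, $A_t=\emptyset$, and the hierarchy is nested. So every $w$ has a unique level $i$ with $w\in A_i\setminus A_{i+1}$.

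\textbf{Bunch side.} By definition, $\mathrm{Bunch}(v)$ is the union over levels $j$ of $\{w'\in A_j\setminus A_{j+1} : \dist(v,w')<\dist(v,A_{j+1})\}$. Since these level sets are disjoint, $w$ can only be contributed by the term with $j=i$. Hence
\[
  w\in\mathrm{Bunch}(v)\iff \dist(v,w)<\dist(v,A_{i+1}).
\]

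\textbf{Cluster side.} By definition, $C(w)=\{v\in V: \dist(v,w)<\dist(v,A_{i+1})\}$, with the same level $i$. Hence $v\in C(w)$ holds under exactly the same condition, and the two statements are equivalent.

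\textbf{Edge case.} When $i=t-1$, we have $A_{i+1}=A_t=\emptyset$. I would adopt the convention $\dist(v,\emptyset)=\infty$, so the condition holds for every $v$, and it holds identically on both sides. Graph symmetry, $\dist(v,w)=\dist(w,v)$, is not even needed here, because both definitions measure distances from $v$.

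I expect no real obstacle. The only point requiring care is the uniqueness of the level $i$, which is what allows the union defining $\mathrm{Bunch}(v)$ to collapse to a single membership condition.
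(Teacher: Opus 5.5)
Your proof is correct and matches the intended argument. The paper gives no proof of its own and simply cites Thorup--Zwick, where the duality follows from the fact that both definitions impose the same condition $\dist(v,w)<\dist(v,A_{i+1})$ at the unique level $i$ of $w$; your handling of the level partition, the convention $\dist(v,\emptyset)=\infty$, and the $u$/$v$ quantifier typo is exactly what is needed.
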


\begin{lemma}[Distance Stretch~\cite{thorup2001compact}]
\label{obs:stretch}
For any $u, v \in V$, let $i^*$ be the smallest index  such that
$p_{i^*}(u) \in \mathrm{Bunch}(v)$. Then
\[
  \dist(u,\, p_{i^*}(u)) \;+\; \dist(p_{i^*}(u),\, v)
  \;\le\; (4t-3)\,\dist(u,v).
\]
\end{lemma}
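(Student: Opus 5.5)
The plan is the standard Thorup--Zwick pivot-walking argument, but without alternating between the two endpoints: the lemma only looks at pivots of $u$. Write $\Delta \coloneqq \dist(u,v)$. The key claim, proved by induction on $i$, is
\[
\dist(u, p_i(u)) \le 2i\,\Delta \qquad \text{for every } i \le i^*.
\]
Once this holds at $i = i^*$, the triangle inequality gives $\dist(p_{i^*}(u), v) \le \Delta + \dist(u,p_{i^*}(u))$. Hence the left-hand side of the lemma is at most $2\dist(u,p_{i^*}(u)) + \Delta$. So it suffices to show $i^* \le t-1$, because then the sum is at most $4(t-1)\Delta + \Delta = (4t-3)\Delta$.

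The base case is $p_0(u) = u$, since $A_0 = V$, so $\dist(u,p_0(u)) = 0$. For the inductive step, let $i < i^*$, so that $p_i(u) \notin \mathrm{Bunch}(v)$. I claim this forces $\dist(v, A_{i+1}) \le \dist(v, p_i(u))$. There are two cases.
\begin{itemize}
\item If $p_i(u) \in A_i \setminus A_{i+1}$, the claim is exactly the negation of the membership condition $\dist(v,w) < \dist(v,A_{i+1})$ in the definition of the bunch, with $w = p_i(u)$.
\item If $p_i(u) \in A_{i+1}$, the claim is trivial.
\end{itemize}
Chaining triangle inequalities then gives
\[
\dist(u,p_{i+1}(u)) = \dist(u, A_{i+1}) \le \Delta + \dist(v,A_{i+1}) \le \Delta + \dist(v,p_i(u)) \le 2\Delta + \dist(u,p_i(u)),
\]
which completes the induction.

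It remains to check that $i^*$ exists and satisfies $i^* \le t-1$. Assume, as is standard for the hierarchy, that $A_{t-1} \neq \emptyset$. Then $p_{t-1}(u) \in A_{t-1} \setminus A_t$, and $\dist(v, A_t) = \dist(v,\emptyset) = \infty$. So the strict inequality in the bunch definition holds, and $p_{t-1}(u) \in \mathrm{Bunch}(v)$.

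The only delicate point is the treatment of ties and of pivots that already lie in a higher level $A_{i+1}$. Both are absorbed by the inequality $\dist(v,A_{i+1}) \le \dist(v,p_i(u))$, which holds in either case. Beyond that the argument is routine.
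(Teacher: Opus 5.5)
Your proof is correct. The induction $\dist(u,p_{i+1}(u)) \le \dist(u,p_i(u)) + 2\dist(u,v)$ for $i < i^*$, combined with $i^* \le t-1$ (because $A_t = \emptyset$ forces $p_{t-1}(u) \in \mathrm{Bunch}(v)$), is exactly the standard non-alternating argument from the Thorup--Zwick compact-routing paper; the paper itself gives no proof and just cites that source. Your handling of a pivot that already lies in $A_{i+1}$, and your implicit assumptions ($A_{t-1} \neq \emptyset$, finite distances), match what the paper's definitions already presuppose.
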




Given $G$, apply~\Cref{thm:tz-oracle} (Thorup--Zwick--~\cite{thorup2005approximate}) 
to build the approximate distance oracle on $G$. This yields, for each vertex $v \in V$, an 
explicitly stored sorted list $\mathrm{Bunch}(v)$, and for each pivot $w$, an explicitly stored 
list $C(w)$ — both represented as doubly-linked lists, supporting $O(1)$ deletion. All cluster sizes $|C(w,\alpha r)|$ and $|\mathrm{Bunch}(v,\alpha r))|$
are precomputed and stored. The total preprocessing time is $O(mn^{1/c})$ and the total storage 
is $O(mn^{1/c})$ .


\subsection{Approximate Balls via Truncated Bunches and Clusters.}

Fix a constant $\alpha \ge 4t - 2$. For a vertex $v \in V$ and radius
$r > 0$, define
\begin{equation}
    N(v,\, r)
    \;=\;
    \mathrm{Bunch}(v,\, \alpha r)
    \;\cup\!
    \bigcup_{\substack{i \,\in\, [t] \\ d(v,\, p_i(v))\, <\, \alpha r}}
    C\!\left(p_i(v),\, \alpha r\right). \label{eq:approx_ball_bunch}
\end{equation} 
\antonis{Why do we need the bunches for $N(v, r)$? If we don't need them, it's confusing that we use them.}

where $\mathrm{Bunch}(v, \rho) = \{w \in \mathrm{Bunch}(v) : \dist(v,w) \le \rho\}$
and $C(w, \rho) = \{u \in C(w) : \dist(u,w) \le \rho\}$ denote the radius-truncated
bunch and cluster, respectively.

\paragraph{Approximate Ball Size } 
We never compute the set $N(v,r)$ explicitly: the sets in
\Cref{eq:approx_ball_bunch} may overlap, so evaluating the
cardinality of their union would require enumerating their elements,
at a cost proportional to their total size, and this for each of the
$O(n \log \Delta)$ pairs $(v,r)$ is too expensive. Instead, we only sum
the sizes of the $t+1$ truncated sets, each of which is available from
the sorted lists of \Cref{thm:tz-oracle}. The resulting quantity
counts each vertex of $N(v,r)$ once per set containing it, and hence
overcounts $|N(v,r)|$ by at most a factor of $t+1$
(\Cref{obs:approximate-ball-size-tz}):

\begin{equation}
    \hat{N}(v, r)
    =
    \left|\mathrm{Bunch}(v, \alpha r)\right|
    +
    \sum_{\substack{i \in [t] \\ d(v, p_i(v)) < \alpha r}}
    \left|C\left(p_i(v), \alpha r\right)\right|.
    \label{eq:det_ball}
\end{equation}

Our deterministic ball-value estimation is then, for every $v \in V$
and $r \in \mathcal{R}$,
\begin{equation}\label{eq:det-value}
  \mathrm{Value}(B(v,r)) \;\coloneqq\; r^z \cdot \hat{N}(v,r).
\end{equation}

By \cref{lem:approx-ball,obs:approximate-ball-size-tz} together show that
\Cref{eq:det-value} satisfies \Cref{eq:value_ball} with $\rho \coloneqq t+1$:

We will show that $\hat {N}(v,r)$ is an $t$ approximate for $|N(v,r)|$, that is $|N(v,r)| \le \hat {N}(v,r) \le t |N(v,r)|$ (see \Cref{obs:approximate-ball-size-tz}), also we show that $\hat {N}(v,r)$ can be computed in $O(t)$ time in \Cref{obs:neighborhood-runtime}, after preprocessing approximate distance oracle.



\begin{lemma}
\label{lem:approx-ball}
For every $v \in V$ and $r > 0$,
$B(v,\, r) \;\subseteq\; N(v,\, r) \;\subseteq\; B(v,\, 2\alpha r)$, where $\alpha \geq 4t - 2$ and $t$ is the number of levels in the Thorup--Zwick hierarchy.
\end{lemma}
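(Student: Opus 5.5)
The proof has two inclusions; I would handle the upper one first because it follows directly from the definitions, and then turn to the lower one, which depends on the stretch property of \Cref{obs:stretch}.

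\emph{Upper inclusion $N(v,r)\subseteq B(v,2\alpha r)$.} I split according to which part of \Cref{eq:approx_ball_bunch} a vertex $u\in N(v,r)$ comes from.
\begin{itemize}
\item If $u\in\mathrm{Bunch}(v,\alpha r)$, then $\dist(v,u)\le \alpha r\le 2\alpha r$ by the definition of truncation.
\item If $u\in C(p_i(v),\alpha r)$ for some level $i$ with $\dist(v,p_i(v))<\alpha r$, then $\dist(u,p_i(v))\le \alpha r$.
\end{itemize}
In the second case the triangle inequality gives $\dist(v,u)\le \dist(v,p_i(v))+\dist(p_i(v),u)<2\alpha r$. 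The guard condition $\dist(v,p_i(v))<\alpha r$ in the union is exactly what this step needs.

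\emph{Lower inclusion $B(v,r)\subseteq N(v,r)$.} Fix $u$ with $\dist(u,v)\le r$.
\begin{itemize}
\item If $u\in\mathrm{Bunch}(v)$, then $u\in\mathrm{Bunch}(v,\alpha r)$, since $\dist(v,u)\le r\le\alpha r$.
\item Otherwise, apply \Cref{obs:stretch} with the roles chosen so that the pivot is a pivot of $v$ lying in $\mathrm{Bunch}(u)$. That is, let $i^*$ be the smallest index with $p_{i^*}(v)\in\mathrm{Bunch}(u)$.
\end{itemize}
I first check that $i^*$ exists. Since $A_t=\emptyset$, the top nonempty level $A_{t-1}$ satisfies $\dist(u,A_t)=\infty$, so every vertex of $A_{t-1}$, in particular $p_{t-1}(v)$, belongs to $\mathrm{Bunch}(u)$.

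\Cref{obs:stretch} then gives
\[
\dist(v,p_{i^*}(v))+\dist(p_{i^*}(v),u)\le(4t-3)\,\dist(u,v)\le(4t-3)r<\alpha r .
\]
This inequality does two things at once. It shows that the guard $\dist(v,p_{i^*}(v))<\alpha r$ holds, so the cluster $C(p_{i^*}(v),\alpha r)$ is included in the union. It also shows $\dist(u,p_{i^*}(v))\le\alpha r$. Finally, by \Cref{obs:duality}, $p_{i^*}(v)\in\mathrm{Bunch}(u)$ gives $u\in C(p_{i^*}(v))$, and hence $u\in C(p_{i^*}(v),\alpha r)\subseteq N(v,r)$.

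One degenerate case needs care: $r>0$ but $\dist(u,v)=0$. The strict inequality $(4t-3)r<\alpha r$ still holds because $r>0$, so the argument goes through unchanged.

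\emph{Main obstacle.} I expect the only delicate step to be lining up the roles in \Cref{obs:stretch} with \Cref{eq:approx_ball_bunch}. The union is indexed by pivots of $v$, so the lemma must be invoked with the pivot taken from $v$'s hierarchy and membership tested in $\mathrm{Bunch}(u)$, not the other way round. I also need the index range $[t]$ in \Cref{eq:approx_ball_bunch} to cover $i^*\in\{0,\dots,t-1\}$, which I will read as the levels $0,\dots,t-1$. Having $\alpha\ge4t-2$ strictly larger than $4t-3$ is what makes the strict guard inequality hold.
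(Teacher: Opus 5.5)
Your proof is correct and follows the paper's argument essentially step for step. The upper inclusion uses the guard $\dist(v,p_i(v))<\alpha r$ plus the triangle inequality, and the lower inclusion applies the stretch lemma to pivots of $v$ tested for membership in $\mathrm{Bunch}(u)$, then bunch--cluster duality. Two details in your version are more careful than the paper's write-up: taking the \emph{smallest} index $i^*$ is what the stretch lemma actually supports (the paper says ``largest''), and you justify the strict guard inequality explicitly from $\alpha\ge 4t-2>4t-3$ and $r>0$.
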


\begin{proof}

We prove the two containments separately.

\textbf{Part 1: First containment: $B(v,r) \subseteq N(v,r)$.}
Let $u \in B(v,r)$, so $\dist(u,v) \le r$. We consider two cases:

\emph{Case 1: $u \in \mathrm{Bunch}(v)$.}
Since $\dist(u, v) \leq r \leq \alpha r$, we have $u \in \mathrm{Bunch}(v, \alpha r) \subseteq N(v, r)$.

\emph{Case 2: $u \notin \mathrm{Bunch}(v)$.} We will show that $u$ appears in $C(p_i(v), \alpha r)$ for some $i$ with $d(v, p_i(v)) < \alpha r$.
By ~\Cref{obs:stretch}, let $i^*$ be the largest index such that $p_{i^*}(v) \in \mathrm{Bunch}(u)$, and set $w := p_{i^*}(v)$. Then:
\begin{equation}
  \label{eq:stretch-bound}
  \dist(v, w) \;\le\; (4t-3)\,\dist(u,v) \;\le\; \alpha\,r.
\end{equation}

Hence $d(v,p_{i^*}) < \alpha r$.
Now, we will establish $u \in C(w, \alpha r)$:  By ~\Cref{thm:tz-oracle}, $w = p_{i^*}(v) \in \mathrm{Bunch}(u)$.
By ~\Cref{obs:duality} (applied with the roles of $u$ and $w$ exchanged),
$w \in \mathrm{Bunch}(u) \Rightarrow u \in C(w)$. Since $\dist(u,w) \le (4t-3)r \le \alpha r$ by Observation~\eqref{eq:stretch-bound}
, we get $u \in C(w,\alpha r)$. Hence $u \in N(v,r)$.


\bigskip
\textbf{Part 2: Second containment: $N(v,r) \subseteq B(v, 2\alpha r)$.}
Let $u \in N(v,r)$. We consider two cases:

\emph{Case 1: $u \in \mathrm{Bunch}(v, \alpha r)$:}
By definition of $\mathrm{Bunch}(v, \alpha r)$ , $\dist(v,u) \le \alpha r \le 2\alpha r$.

\emph{Case 2: $u \in C(p_i(v), \alpha r)$ for some $i \in [t]$ with $d(v, p_i(v)) < \alpha r$:}
We have $d(v, p_i(v)) < \alpha r$ (by the condition in the definition of $N(v,r)$) and
$d(u, p_i(v)) \le \alpha r$ (since $u \in C(p_i(v), \alpha r)$). By the triangle inequality,
$
  d(u,v) \le d(u, p_i(v)) + d(p_i(v), v) \le \alpha r + \alpha r = 2\alpha r.
$
Hence $u \in B(v, 2\alpha r)$.
\end{proof}

\begin{lemma}

 \label{obs:approximate-ball-size-tz} For a fixed $v$ and $r$, observe that
  $ |N(v,r)| \le 
  \hat{N}(v,r)
  \;\le\;
  (t+1)\cdot|N(v,\alpha r)|
$ .That is, the sum of precomputed cluster sizes is a $(t+1)$-approximation of
$|N(v,r)|$; 
\end{lemma}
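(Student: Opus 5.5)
The plan is to compare the sum $\hat{N}(v,r)$ in \Cref{eq:det_ball} with the union $N(v,r)$ in \Cref{eq:approx_ball_bunch} using a standard counting argument. The sets are $S_0 := \mathrm{Bunch}(v,\alpha r)$ and $S_i := C(p_i(v),\alpha r)$ for the indices $i \in [t]$ with $\dist(v,p_i(v)) < \alpha r$. Then $N(v,r) = \bigcup_i S_i$ and $\hat{N}(v,r) = \sum_i |S_i|$. Here $i$ ranges over at most $t$ levels, plus the bunch term; the hierarchy has $A_t=\emptyset$, so effectively the levels are $0,\dots,t-1$.

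\emph{Lower bound.} Write $\hat{N}(v,r) = \sum_{u \in N(v,r)} m(u)$, where $m(u) := |\{i : u \in S_i\}|$ is the multiplicity of $u$. Every $u \in N(v,r)$ lies in at least one $S_i$ by definition of the union, so $m(u) \ge 1$ and hence $|N(v,r)| \le \hat{N}(v,r)$. This is just the union bound read backwards, and it needs nothing about the Thorup--Zwick structure.

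\emph{Upper bound.} Each $u$ satisfies $m(u) \le t+1$, since there are at most $t+1$ sets in total: one bunch and at most one cluster per level, because each level $i$ contributes the single pivot $p_i(v)$. The sizes $|S_i|$ are read off the sorted lists of \Cref{thm:tz-oracle}, so $\hat{N}(v,r)$ counts exactly these multiplicities and involves no other quantity. This gives $\hat{N}(v,r) \le (t+1)|N(v,r)|$.

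\emph{Matching the stated right-hand side.} The statement writes $|N(v,\alpha r)|$ rather than $|N(v,r)|$, so one more step is needed. The truncations are monotone in the radius, since $\mathrm{Bunch}(v,\cdot)$, $C(w,\cdot)$ and the pivot condition $\dist(v,p_i(v))<\alpha r$ all grow with $r$. Because $\alpha \ge 1$, this gives $N(v,r) \subseteq N(v,\alpha r)$, and the stated bound follows from the bound with $|N(v,r)|$.

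For the downstream use in \Cref{eq:value_ball}, I would also record the chain $|B(v,r)| \le |N(v,r)| \le \hat{N}(v,r) \le (t+1)|B(v,2\alpha r)|$, obtained from \Cref{lem:approx-ball}.

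The main obstacle is bookkeeping rather than mathematics. One must check that the index set in \Cref{eq:det_ball} coincides with the one in \Cref{eq:approx_ball_bunch}; both use the condition $\dist(v,p_i(v))<\alpha r$. One must also handle coinciding pivots: if $p_i(v)=p_j(v)$ for $i\neq j$, the same cluster is summed twice. This only increases multiplicities, and the bound $m(u)\le t+1$ still holds.
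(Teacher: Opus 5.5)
Your proof is correct and follows the paper's argument. The lower bound is the union bound, and the upper bound comes from each vertex of $N(v,r)$ being counted at most once in the truncated bunch and at most once per level among the truncated clusters, so its multiplicity is at most $t+1$. Your extra monotonicity step $N(v,r)\subseteq N(v,\alpha r)$, which uses $\alpha\ge 1$, makes explicit how the stated right-hand side $(t+1)\,|N(v,\alpha r)|$ follows; the paper leaves this step implicit.
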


\begin{proof}
The first inequality follows from the union bound. For the second inequality, note that each
vertex $u \in N(v,r)$ appears in at most $t$ of the clusters $C(p_i(v), \alpha r), i \in [t]$ (one for
each level $i$) and at most once in the $ \mathrm{Bunch}(v,r)$. 
\end{proof}

\begin{lemma}

\label{lem:total-value-time}
For all $v \in V$ and all radii $r \in \bigl\{\Delta/(2c)^i :
i = 0,\ldots,\log_{2c}\Delta + 7\bigr\}$, the values
$\{\hat{N}(v,r)\}$ can be computed in total time
$O(t \cdot n \log \Delta)$.
\end{lemma}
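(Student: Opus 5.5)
The plan is a direct counting argument. There are $n \cdot |\mathcal{R}| = O(n \log \Delta)$ pairs $(v,r)$, so it suffices to compute each $\hat{N}(v,r)$ from \Cref{eq:det_ball} in $O(t)$ time. That time must be amortized or worst-case, and any setup work must cost no more than $O(t \cdot n\log\Delta)$ in total, on top of the oracle preprocessing of \Cref{thm:tz-oracle}, which we do not count here.

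\textbf{Step 1 (pivots).} By \Cref{thm:tz-oracle}, for every $v$ the $t$ pivots $p_i(v)$ and the distances $\dist(v,p_i(v))$ are stored explicitly. Deciding which indices $i$ satisfy $\dist(v,p_i(v))<\alpha r$ therefore takes $O(t)$ time per pair.

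\textbf{Step 2 (truncated sizes).} Each summand in \Cref{eq:det_ball} has the form $|\mathrm{Bunch}(v,\alpha r)|$ or $|C(p_i(v),\alpha r)|$. Both are prefix lengths of a list sorted by distance, namely the number of entries with distance at most $\alpha r$.

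To avoid the $O(\log n)$ factor of binary search, I would process each list once with a merge-style sweep.
\begin{itemize}
\item For each bunch list $\mathrm{Bunch}(v)$, walk it together with the decreasing radii in $\mathcal{R}$ and record the prefix length for every $r \in \mathcal{R}$. This costs $O(|\mathrm{Bunch}(v)| + \log\Delta)$.
\item Do the same for each cluster list $C(w)$.
\end{itemize}
The total cost is $O\bigl(\sum_v|\mathrm{Bunch}(v)|+\sum_w|C(w)|+ n\log\Delta\bigr)$. By \Cref{obs:duality}, the two sums are equal, and by \Cref{thm:tz-oracle} they are $O(t n^{1+1/t})$, which is absorbed in the oracle preprocessing. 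These counts are exactly the precomputed sizes $|C(w,\alpha r)|$ and $|\mathrm{Bunch}(v,\alpha r)|$ that the paper stores after building the oracle.

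\textbf{Step 3 (evaluation).} With these tables, each $\hat{N}(v,r)$ is a sum of at most $t+1$ table lookups, which takes $O(t)$ time. Summing over all pairs gives $O(t\cdot n\log\Delta)$.

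\textbf{Main obstacle.} The only subtle point is ensuring that the truncated-size lookups really are $O(1)$ rather than $O(\log n)$. A cluster $C(w)$ is shared across many $v$ (all $v$ with $p_i(v)=w$), so recomputing its prefix length per query would be wasteful. The plan handles this by precomputing, once per list and per radius in $\mathcal{R}$, and charging that work to the oracle construction. If one insists on charging nothing to preprocessing, binary search still gives $O(t\log n)$ per pair, which is only a logarithmic factor worse and fine inside the $\tilde{O}$ bound.
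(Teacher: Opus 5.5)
Your proposal is correct and follows the paper's argument. There are $O(n\log\Delta)$ pairs $(v,r)$, and each $\hat{N}(v,r)$ is evaluated in $O(t)$ time from precomputed truncated bunch and cluster sizes, with that precomputation charged to the oracle preprocessing. Your merge-style sweep, costing $O\bigl(\sum_w |C(w)| + n\log\Delta\bigr) = O(t n^{1+1/t} + n\log\Delta)$ by duality, usefully spells out how those tables are built, which the paper only asserts.
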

\begin{proof}

There are $n$ vertices and $O(\log \Delta)$ candidate radii, giving
$O(n \log \Delta)$ queries in total. We will prove below for each $(v,r)$ takes $O(t)$ time \Cref{obs:neighborhood-runtime}, Hence the total time of taken is $O(t \cdot n \log \Delta)$.

\begin{observation}
\label{obs:neighborhood-runtime}
For any vertex $v \in V$ and radius $r \ge 0$, the quantity $\hat{N}(v,r)$
can be computed in $O(t)$ time.
\end{observation}

\begin{proof}
The value $|\mathrm{Bunch}(v,\alpha r)|$ can be accessed in constant time.
The summation involves at most $t$ terms, and each cluster size
$|C(p_i(v),\alpha r)|$ is precomputed and can be retrieved in $O(1)$ time.
Thus, the total time to evaluate the expression is $O(t)$.
\end{proof}




\end{proof}

\bibliographystyle{plainurl}
\bibliography{references}

\newpage
\appendix
\section*{Appendix}
\section{Approximation of $\mathrm{Value}(\cdot, \cdot)$} \label{sec:approx-value}
The static $(k, z)$-clustering algorithm of Dupre la Tour and Saulpic~\cite{latourSaulpicKz} defines
a function $\mathrm{Value}(B(u, r))$ which satisfies the following inequalities:
\[
    \frac{r^z}{3} \cdot |B(u, r)| \;\leq\; \mathrm{Value}(B(u, r)) \;\leq\; 3 \myhs r^z \cdot |B(u, c \cdot r)|,
\]
for a vertex $u \in V$, a positive real number $r$, a constant $z \geq 1$, and a constant $c \geq 5$.
The arguments presented in Appendix~A of~\cite{latourSaulpicKz} can be generalized to work with any constant in the approximation ratio, not just $3$. Specifically, it suffices that~$\mathrm{Value}(B(u, r))$ satisfy the following inequalities:
\[
    \frac{r^z}{\rho} \cdot |B(u, r)| \;\leq\; \mathrm{Value}(B(u, r)) \;\leq\; \rho \myhs r^z \cdot |B(u, c \cdot r)|,
\]
for any parameter $\rho \geq 1$ and $c \geq \frac{\rho^2}{2}$. The approximation ratio of the algorithm is then affected by a factor of $\rho^2$; hence for constant parameter $\rho$, the approximation ratio remains a constant factor.
The same arguments in Appendix A of~\cite{latourSaulpicKz} can be repeated with $\rho$ instead of $3$; for completeness we indicate the lemmas where the necessary adjustments occur:
\begin{enumerate}
    \item The first lemma that should be adjusted is Lemma A.1 in~\cite{latourSaulpicKz}, where within its proof, the constant $3$ can be trivially replaced by the parameter $\rho$. In turn, the constraint on $c$ becomes $c \geq \frac{\rho^2}{2}$ instead of $c \geq 5$.

    \item The second lemma that should be adjusted is Lemma A.5 in~\cite{latourSaulpicKz}. When $\cost(\text{In}(P_\gamma), C_k)$ is analyzed in the proof of Lemma A.5, the constant $3$ can be trivially replaced by $\rho$, since $(r_\gamma)^z \cdot |B(\gamma, r_\gamma)| \;\leq\; \rho \cdot \mathrm{Value}(B(\gamma, r_\gamma))$. \antonis{They write $p_\gamma$, is it a typo?}

    \item The third lemma that should be adjusted is Lemma A.6 in~\cite{latourSaulpicKz}. When $\cost(B(x, c \cdot r), \Gamma)$ is analyzed in the proof of Lemma A.6, the constant $3$ can be trivially replaced by $\rho$, since $r^z \cdot |B(x, c \cdot r)| \;\geq\; \frac{\mathrm{Value}(B(x, r))}{\rho}$.

    \item Finally, by combining Lemmas A.5 and A.6 at the end of Appendix~A in~\cite{latourSaulpicKz}, the extra $\rho^2$ factor arises in the approximation ratio.\antonis{We have only $\rho$ if Value() does not underestimate.}
\end{enumerate}

\section{\boldmath$k$-Center with Outliers}
\label{sec:outliers}

We now extend the incremental framework developed in ~\Cref{sec:kcenter} to the $k$-center problem with outliers, where up to $t$ points may be
discarded. As before, the key primitive is the incremental SSSP structure of ~\Cref{thm:simplified-sssp}. We give two results (i) ~\Cref{lem:outlier-2approx} handles the case of constant $k$ and achieves an $2 + \varepsilon$-approximation, $\varepsilon > 0$, while having $(1+\epsilon)t$ outliers (ii)~\Cref{lem:outlier-bicriteria} handles general
$k$ via a randomized bi-criteria guaranty . Both run in $\tilde{O}(\frac{m}{\varepsilon})$ time.

\subsection{2-Approximation for Constant \boldmath$k$}
\label{subsec:outlier-2approx}

\kcenteroutlier*

\begin{proof}
We adapt the $2$-approximation algorithm of Ding, Yu, and
Wang~\cite{ding2019greedy} for $k$-center with $(1+\epsilon)t$ outliers. The algorithm proceeds in $k$ rounds.
In each round it identifies the $\lceil(1+\varepsilon)t\rceil$ vertices
currently farthest from the center set $C$ --- these are the candidate outliers
--- and selects one of them uniformly at random as the next center.
Maintaining the farthest vertices requires knowing $\mathrm{dist}(v, C)$ for
all $v \in V$ after each center addition, which we handle incrementally: upon
adding center $c_j$ to $C$, we insert the zero-weight edge $(s, c_j)$ into the
SSSP structure of ~\Cref{thm:simplified-sssp}.
The full procedure is stated as ~\Cref{alg:outlier-2approx}. Let $\hat{d}(\cdot)$ denote the current distance estimates from $s$. All vertices are stored in a max-heap keyed by $\hat d(v)$.

\begin{algorithm}[h]
\caption{$2 + \varepsilon $-Approximation for $k$-Center with $t$ Outliers}
\label{alg:outlier-2approx}
\KwIn{Weighted graph $G = (V,E,w)$; integers $k$, $t$; parameter
      $\varepsilon > 0$}
\KwOut{A set $C \subseteq V$ of $k$ centers}

Construct $ G' = (V \cup \{ s \}, E, w )$\;

Initialize the $(1+\frac{\varepsilon}{2})$-approximate incremental SSSP structure from $s$ (~\Cref{thm:simplified-sssp})\;
$C \leftarrow \varnothing$\;
Sample $c_1 \in V$ uniformly at random, set $C \leftarrow \{c_1\}$, 
and insert edge $(s,c_1)$ with weight $0$\;
\For{$j = 2$ \KwTo $k$}{
    
    Let $Q_j \subseteq V$ be the $\lceil(1+\varepsilon)t\rceil$ vertices with
    the largest values of $\hat{d}(\cdot)$\;
    Uniformly at random select one vertex $c_j$ from $Q_j$\;
    $C \leftarrow C \cup \{c_j\}$\;
    Insert edge $(s, c_j)$ with weight $0$.
}
\Return $C$\;
\end{algorithm}

\begin{lemma}
\label{lem:outlier-2approx-approx}
Let $G=(V,E,w)$ be a weighted graph, and let $C$ be the set of centers
returned by ~\Cref{alg:outlier-2approx}.
Then, with probability at least probability at least $(1-\gamma)(1/(1+\varepsilon)t)^{k-1}$, $C$ induces an $(2+\varepsilon)$-approximate
solution to the $k$-center problem with $t$ outliers.
\end{lemma}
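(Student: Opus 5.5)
The plan is to adapt the charging argument of Ding, Yu, and Wang~\cite{ding2019greedy}. The only change is that exact distances are replaced by the estimates $\hat d(\cdot)$ maintained by the $(1+\frac{\varepsilon}{2})$-approximate Source-Insertion SSSP structure of \Cref{thm:simplified-sssp}. By \Cref{obs:distance_estimate}, these estimates satisfy $\dist(v,C)\le \hat d(v)\le (1+\frac{\varepsilon}{2})\dist(v,C)$.

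Fix an optimal solution with radius $r^*$, inlier clusters $P_1,\dots,P_k$ and outlier set $O$ with $|O|\le t$. For every $v\in P_a$ we have $\dist(v,o_a)\le r^*$, where $o_a$ is the center of $P_a$. Call a cluster $P_a$ \emph{hit} once some chosen center lies in it. If $P_a$ is hit, every $v\in P_a$ satisfies $\dist(v,C)\le 2r^*$, and therefore $\hat d(v)\le (2+\varepsilon)r^*$. Set the threshold $\tau:=(2+\varepsilon)r^*$. The target is that, on the good event, the final $C$ has all but at most $\lceil(1+\varepsilon)t\rceil$ vertices within distance $\tau$. We will in fact get that every vertex of $V\setminus O$ is within $\tau$.

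First, $c_1$ is uniform over $V$, so it is an inlier, and hence hits some cluster, with probability at least $1-t/n$. For the inductive step at round $j$, assume the first $j-1$ centers hit $j-1$ distinct clusters. Call a vertex \emph{far} if $\hat d(v)>\tau$. We distinguish two cases.
\begin{itemize}
\item If at most $\lceil(1+\varepsilon)t\rceil$ vertices are far, the current $C$ is already a $(2+\varepsilon)$-approximate solution once the far vertices are declared outliers. Adding further centers only decreases estimates, so this remains true for the rest of the run.
\item Otherwise, $Q_j$, being the $\lceil(1+\varepsilon)t\rceil$ largest estimates, consists only of far vertices. Each far vertex has $\dist(v,C)\ge \hat d(v)/(1+\frac{\varepsilon}{2})> 2r^*$, so no far inlier lies in a hit cluster. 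Since $|Q_j\cap O|\le t$, at least $\lceil(1+\varepsilon)t\rceil-t\ge \varepsilon t$ vertices of $Q_j$ are inliers of unhit clusters. Hence the uniform choice of $c_j$ hits a new cluster with probability at least $\varepsilon t/\lceil(1+\varepsilon)t\rceil$, which is the per-round factor in \Cref{lem:outlier-2approx}.
\end{itemize}

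Multiplying over the first round and the $k-1$ later rounds gives the claimed success probability. On the good event there are two outcomes. Either the first case occurs in some round, and the solution is good from then on. Or every round hits a new cluster, so all $k$ clusters are hit, every inlier is within $2r^*$ of $C$, and only the at most $t$ vertices of $O$ can exceed $\tau$. In both outcomes at most $(1+\varepsilon)t$ vertices exceed distance $(2+\varepsilon)r^*$ to $C$.

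The main obstacle is the second case. The inclusion ``$Q_j\subseteq$ far'' is stated in terms of estimates, but ``unhit cluster'' is a statement about true distances. The conversion between them must be done with the $(1+\frac{\varepsilon}{2})$ slack, and the threshold $\tau$ has to be chosen so that hit-cluster vertices provably fall below it. This is exactly why the structure is run with accuracy $\varepsilon/2$. The precise form of the probability bound (including ceilings and the $1-t/n$ first-round factor) should be reconciled with the statement. The running time then follows as in \Cref{lem:det-gon-time}. Extracting the top $\lceil(1+\varepsilon)t\rceil$ vertices from the max-heap costs $\tilde O(t)$ per round, which is $\tilde O(m)$ overall for constant $k$.
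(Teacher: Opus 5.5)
Your proposal is correct and follows essentially the same route as the paper. Both use the split on whether all of $Q_j$ has estimate above $(2+\varepsilon)\mathrm{OPT}$ (\Cref{lem:uncovered-in-Q} and \Cref{lem:already_2_aprox}, with persistence from non-increasing estimates), the count of at least $\lceil(1+\varepsilon)t\rceil-t$ uncovered-cluster inliers in $Q_j$ for per-round progress (\Cref{lem:progress}), and the $1-t/n$ factor for $c_1$.

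For the reconciliation you flag:
\begin{itemize}
\item Keep $\lceil\varepsilon t\rceil\ge 1$ instead of $\varepsilon t$. This gives a per-round factor of at least $1/\lceil(1+\varepsilon)t\rceil$, which is how the stated $(1/(1+\varepsilon)t)^{k-1}$ should be read.
\item Your $\varepsilon t/\lceil(1+\varepsilon)t\rceil\ge\varepsilon/2$ already implies the paper's own final bound $(1-\gamma)(\varepsilon/\lceil 1+\varepsilon\rceil)^{k-1}$.
\item Move the boundary case of exactly $\lceil(1+\varepsilon)t\rceil$ far vertices into your second case, as the paper does. Then the first case leaves at most $\lceil(1+\varepsilon)t\rceil-1<(1+\varepsilon)t$ outliers.
\end{itemize}
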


\begin{proof}

 Let $C^* = \{o_1, \dots, o_k\}$ be an optimal solution, inducing clusters $O_1, \dots, O_k$, and let $T^* \subseteq V$ denote the set of $t$ optimal outliers. Let $\mathrm{OPT}$ be the optimal radius, i.e.,
$
\forall i \in [k],\ \forall v \in O_i:\quad \mathrm{dist}(v, o_i) \le \mathrm{OPT}.
$

Let $C_j = \{c_1, \dots, c_j\}$ be the set of centers chosen after $j$ iterations. At iteration $j+1$, let
$
Q_{j+1} := \operatorname{Top}_{\lceil (1+\varepsilon)t \rceil}
\bigl(\{ \hat{d}(v, C_j) : v \in V \}\bigr)
$
be the set of the $\lceil (1+\varepsilon)t \rceil$ farthest points from $C_j$, as stored by $\hat{d}(\cdot)$.

\paragraph{Inliers and outliers.}
Let $T^* \subseteq V$ denote the set of $t$ outliers in the optimal solution. The remaining points 
$
V \setminus T^*
$
are called \emph{inliers}. Equivalently, the set of inliers is precisely the union of the optimal clusters:
$
\bigcup_{i=1}^k O_i = V \setminus T^*$. Let
$\lambda_j$ denote the number of covered clusters after round $j$.

\begin{lemma}[Cluster diameter bound]
\label{lem:cluster-diameter}
If $\{u,v\} \in O_i$, then 
$\mathrm{dist}(u, v) \le 2\mathrm{OPT}. $
\end{lemma}
\begin{proof}
For any $\{u,v\} \in O_i$, we have $\mathrm{dist}(v, o_i) \le \mathrm{OPT}$ and $\mathrm{dist}(u, o_i) \le \mathrm{OPT}$. The claim follows by triangle inequality.
\end{proof}

\paragraph{Covered clusters.}
We say that a cluster $O_i$ is \emph{covered} at step $j$ if $C_j \cap O_i \neq \emptyset$, and \emph{uncovered} otherwise.

\begin{lemma}
\label{lem:uncovered-in-Q}
At step $j \ge 1$, either (1) every vertex of $Q_{j+1}$ is at distance estimate $\hat{d}(\cdot)$  greater than
  $(2 + \epsilon)\mathrm{OPT}$ from $C_j$;  in this case $Q_{j+1}$ contains at least
  $\lceil(1+\varepsilon)t\rceil-t\ \ge\ \lceil\varepsilon t\rceil\ \ge 1$
  inliers, all of which belong to uncovered clusters $(2) \min_{u\in Q_{j+1}}\hat{d}_j(u)\le(2+\varepsilon)\mathrm{OPT}$.  

\end{lemma}

\begin{proof}
Suppose $(2)$ is false, i.e.\ $\hat{d}_j(u)>(2+\varepsilon)\mathrm{OPT}$ for all $u\in Q_{j+1}$. If such a $u$ were an inlier of a
\emph{covered} cluster $O_i$, then picking $c\in C_j\cap O_i$ and using
Lemma~\ref{lem:cluster-diameter} would give $\hat{d}_j(u) \le (2 + \epsilon) \mathrm{OPT}$, a contradiction. Hence $Q_{j+1}$ contains
no inlier of a covered cluster, so
$Q_{j+1}\subseteq T^*\cup\{\text{inliers of uncovered clusters}\}$, and
the number of uncovered-cluster inliers in $Q_{j+1}$ is at least
$|Q_{j+1}|-|T^*|=\lceil(1+\varepsilon)t\rceil-t\ge\lceil\varepsilon
t\rceil\ge 1$ (using $t\ge1$ and that $t$ is an integer).

Suppose $(1)$ is false, $(2)$ is trivially true.




\end{proof}

\begin{lemma} \label{lem:already_2_aprox}
    If condition $(2)$ holds of \Cref{lem:uncovered-in-Q}, then in
  this case at most $\lceil(1+\varepsilon)t\rceil-1$ vertices $v\in V$
  satisfy $\hat{d}_j(v)>(2+\varepsilon)\mathrm{OPT}$, and — since estimates are
  non-increasing — the same bound holds at every step $j'\ge j$.
\end{lemma}
\begin{proof}
    Since vertices with estimate exceeding
$(2+\varepsilon)\mathrm{OPT}$ lie in $Q_{j+1}\setminus\{u\}$, and there are at
most $\lceil(1+\varepsilon)t\rceil-1$ of them. Since each $\hat{d}(v)$ is
non-increasing as centers are inserted, the set of vertices exceeding
the threshold can only shrink, so the bound persists for all
$j'\ge j$.
\end{proof}
\begin{lemma}[Progress per iteration]
\label{lem:progress}
Conditioned on the event that $(1)$ of \Cref{lem:uncovered-in-Q} holds for $j > 1$, the probability that iteration $j+1$ selects a vertex from an uncovered cluster is at least $\frac{\varepsilon}{\lceil (1+\varepsilon) \rceil}.$
\end{lemma}

\begin{proof}
By $(1)$ of Lemma~\ref{lem:uncovered-in-Q}, at least
$\lceil(1+\varepsilon)t\rceil-t$ of the
$\lceil(1+\varepsilon)t\rceil$ vertices of $Q_{j+1}$ are inliers of
uncovered clusters.  Since $c_{j+1}$ is chosen uniformly at random from $Q_{j+1}$, the claim follows.
\end{proof}

\begin{lemma}
\label{lem:covering}
With probability at least $(1-\gamma)(\frac{\epsilon}{\lceil (1+\varepsilon)\rceil})^{k-1}$,
the output $C_k$ of \Cref{alg:outlier-2approx} is a $(2+\varepsilon)$-approximation

\end{lemma}

\begin{proof}
We will prove this by showing that after $k$ iterations, all clusters $O_1, \dots, O_k$ are covered with probability at least $(1 - \gamma) \cdot ( \frac{\epsilon}{\lceil (1+\varepsilon) \rceil} )^{k-1}$,
Since $c_1$ is sampled uniformly from $V$, $\Pr[c_1 \notin T^*] = 1 -\gamma$; condition on this event, so $\lambda_1 = 1$. At each iteration $1<j \le k$, if event  $(1)$ of \Cref{lem:uncovered-in-Q} holds, then by ~\Cref{lem:progress}, the probability of covering a new cluster is at least $p := \frac{\epsilon}{\lceil (1+\varepsilon) \rceil}$, else by \Cref{lem:already_2_aprox}, we already have a $(2 + \varepsilon)$ approximate with having $(1+\epsilon)t$ outliers. As, we run it for at most $k-1$ iterations, we get the success probability as $( \frac{1}{\lceil (1+\varepsilon)t \rceil} )^{k-1}$.  Hence for the \Cref{alg:outlier-2approx}, the success probability is at least $(1 - \gamma) \cdot ( \frac{\epsilon}{\lceil (1+\varepsilon) \rceil} )^{k-1}$. As, every $O_i$ is covered, by \Cref{lem:cluster-diameter}, we get $(2+\epsilon)\mathrm{OPT}$ approximation. We have the $(1+\epsilon)t$ vertices with the largest distance estimate from the centers $C_k$ as outliers. 

\end{proof}


\begin{lemma}
\label{lem:outlier-2approx-runtime}
~\Cref{alg:outlier-2approx} runs in $\tilde{O}(\frac{m}{\varepsilon})$ total time.
\end{lemma}
\begin{proof}

Exactly $k$ zero-weight edges are inserted into the SSSP structure, one per
center.
Since $k \le m$ , the total number of SSSP updates is at most $\Tilde{O}(\frac{m}{\varepsilon})$, and by
~\Cref{thm:simplified-sssp} guarantees that the total update time is $\Tilde{O}(\frac{m}{\varepsilon})$. By \Cref{lem:inc-sssp-output}, the number of vertices whose distance
estimate ever changes is $\Tilde{O}(\frac{m}{\varepsilon})$, so the total cost of all
decrease-key operations on the Heap is $\Tilde{O}(\frac{m}{\epsilon})$ time.
Identifying $Q_j$ at each step uses $\lceil(1+\varepsilon)t\rceil$
extract-max and re-insert operations on heap structure, costing $O(t\log n)$ per
iteration.
Hence the overall running time is $\tilde{O}(\frac{m}{\varepsilon})$.
\end{proof}

\end{proof}
\end{proof}

\subsection{Randomized Bicriteria Approximation for General \boldmath$k$}
\label{subsec:outlier-bicriteria}

We give a bicriteria algorithm with following properties.

\begin{lemma}
\label{lem:outlier-bicriteria}
    There exists a randomized algorithm that computes a
    $(2 + \varepsilon ,\, O(1/\varepsilon ))$-bicriteria approximate solution to the $k$-center
    problem with $t$ outliers on $G = (V, E, w)$ in $\tilde{O}(\frac{m}{\varepsilon})$ time, with
    high probability.
    Specifically, the returned set $C$ has covering radius at most
    $(2+ \varepsilon ) \cdot \mathrm{OPT}$ while having at most $O(\frac{k \log \frac{1}{\eta}}{\varepsilon} )$ points
    as centers.
\end{lemma}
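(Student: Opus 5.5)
Our plan is to adapt the bi-criteria greedy of Ding, Yu, and Wang~\cite{ding2019greedy}, implemented with the same Source-Insertion SSSP machinery as \Cref{alg:gonzalez_incremental} and \Cref{alg:outlier-2approx}. We run $k' \coloneqq \lceil c_0 \cdot k \log(1/\eta)/\varepsilon \rceil$ rounds for a suitable constant $c_0$. In each round $j$ we take the set $Q_j$ of the $\lceil (1+\varepsilon) t\rceil$ vertices with the largest estimates $\hat d(\cdot)$. We select one vertex of $Q_j$ uniformly at random, insert the zero-weight edge $(s,c_j)$ into the structure of \Cref{thm:simplified-sssp} with accuracy $\varepsilon/2$, and finally declare the $\lceil(1+\varepsilon)t\rceil$ farthest vertices to be outliers.

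The running time follows exactly as in \Cref{lem:outlier-2approx-runtime}. The SSSP structure receives $k' \le n$ insertions, so its total update time is $\tilde O(m/\varepsilon)$. By \Cref{lem:inc-sssp-output}, the total number of decrease-key operations on the max-heap is $\tilde O(m/\varepsilon)$. Extracting $Q_j$ costs $O(t\log n)$ per round. We will need to argue that $k'\cdot t$ is dominated by $m/\varepsilon$. The simplest route is to maintain $Q_j$ as a separate top-$\lceil(1+\varepsilon)t\rceil$ structure updated only upon decrease-keys, which removes the $k't$ term; otherwise we assume $kt \le m$.

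For the approximation we reuse \Cref{lem:cluster-diameter}, \Cref{lem:uncovered-in-Q} and \Cref{lem:already_2_aprox} verbatim, since none of them depends on the number of rounds. In any round, either condition (2) holds, in which case \Cref{lem:already_2_aprox} already gives radius $(2+\varepsilon)\mathrm{OPT}$ with at most $\lceil(1+\varepsilon)t\rceil$ outliers from then on, or condition (1) holds. Under condition (1), by \Cref{lem:progress}, the chosen vertex lies in an uncovered optimal cluster with probability at least $p = \varepsilon/(1+\varepsilon) \ge \varepsilon/2$, independently of the past. Call a round in which condition (1) holds and an uncovered cluster gets hit a \emph{success}. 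Fewer than $k$ successes can occur before all clusters are covered. Once all clusters are covered, every inlier has estimate at most $(1+\varepsilon/2)\cdot 2\mathrm{OPT}\le(2+\varepsilon)\mathrm{OPT}$, so condition (2) holds.

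The failure event is therefore that, among $k'$ rounds each succeeding with probability at least $\varepsilon/2$ conditioned on the history, fewer than $k$ successes occur. We dominate the success count by a $\mathrm{Bin}(k',\varepsilon/2)$ variable and apply a Chernoff bound. The expected count is $c_0 k\log(1/\eta)/2 \ge 2k$, so the probability of at most $k-1$ successes is $\exp(-\Omega(k\log(1/\eta)))\le\eta$ for $c_0$ large enough, and choosing $\eta=1/\mathrm{poly}(n)$ gives the high-probability claim. The step we expect to be the main obstacle is making this domination rigorous. The rounds are adaptive, and once condition (2) holds we stop counting, so we will define a martingale or coupling in which each round in condition (1) is a fresh Bernoulli$(\varepsilon/2)$ trial. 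We will also handle the first center separately: the first center is chosen uniformly at random from $V$, and if it is an outlier it simply counts as a non-success, so no $(1-t/n)$ factor is needed.
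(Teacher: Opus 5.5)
Your argument is correct, but it is organized differently from the paper's.

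\textbf{The paper's route.} It keeps only $k$ rounds. In each round it draws $q=\lceil\frac{1+\varepsilon}{\varepsilon}\ln(1/\eta)\rceil$ samples from a single top set $Q_j$ (\Cref{alg:outlier-bicriteria}). \Cref{obs:k-center-outlier-bic} shows that all $q$ samples miss the uncovered clusters with probability at most $(1+\varepsilon)^{-q}\le\eta$. A union bound over the $k$ rounds with $\eta=\delta/k$, plus an initial batch of uniform samples to get $\lambda_1\ge 1$, finishes the proof.

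\textbf{Your route.} You take one sample per round over $k'$ rounds, refresh $Q_j$ after each insertion, and bound the number of successes by Chernoff.

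\textbf{Trade-offs.} The paper's route is more elementary: a geometric tail plus a union bound, with only $k$ top-set extractions. Yours needs two extra ingredients. First, a coupling, which is routine: generate the uniform choice from $Q_j$ with an independent $U_j\sim\mathrm{Unif}[0,1]$, listing the uncovered inliers first, so that on the failure event every round with $U_j\le\varepsilon/2$ is a success. Second, a top-set structure maintained under decrease-keys. In exchange, yours yields more. The failure probability is $e^{-\Omega(\varepsilon k')}$, so $k'=O((k+\log(1/\eta))/\varepsilon)$ rounds already suffice. The $\log(1/\eta)$ thus enters additively rather than multiplicatively.

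\textbf{Where your write-up is tighter than the paper's.}
\begin{itemize}
\item You make the dichotomy of \Cref{lem:uncovered-in-Q} explicit. Step~1 of \Cref{obs:k-center-outlier-bic} instead asserts that covered-cluster inliers never enter $Q_j$. That holds only under condition (1), and it is precisely why $\lceil(1+\varepsilon)t\rceil$ outliers are needed.
\item You note that the first center needs no special care, so neither the $(1-t/n)$ factor nor the initial oversampling is required.
\item Your top-set maintenance removes the $O(kt\log n)$ extraction cost that the paper's running-time accounting passes over. Sampling a uniform rank in an order-statistics tree in $O(\log n)$ time works equally well. Either way, you can drop the fallback assumption $kt\le m$.
\end{itemize}
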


\begin{proof}

The proof follows similar to \Cref{lem:outlier-2approx}, but instead of \Cref{lem:progress}, we have the following \Cref{obs:k-center-outlier-bic}. Let $\lambda_j$ count how many optimal clusters intersect the current center set $C$ at iteration $j$. Initially, $\lambda_1 \geq 1$ with probability $1-\eta$ by random sampling.

\begin{observation}\label{obs:k-center-outlier-bic}
    In each iteration $j$, when random sampling from $Q_j$, with probability at least $1-\eta$, the algorithm selects a vertex from uncovered cluster, increasing $\lambda_j$.  
\end{observation}
\begin{proof}
Fix an iteration $j$ in which at least one optimal cluster is uncovered.
We want to show that with probability at least $1 - \eta$, at least one
of the sampled vertices in this iteration belongs to an uncovered cluster.

\paragraph*{Step 1: Uncovered cluster inliers dominate $Q_j$.}
Recall that $Q_j$ consists of the $\lceil(1+\varepsilon)t\rceil$ vertices
with the largest distance estimates $\hat{d}(\cdot)$ from the current
center set. We argue that the points in $Q_j$ that do \emph{not} belong
to any uncovered cluster number at most $t$.

Indeed, any inlier of a \emph{covered} cluster is within $2 \cdot
\mathrm{OPT}$ of some current center (by the triangle inequality through
its optimal center), so its distance estimate is small and it does not
appear in $Q_j$. The only remaining candidates for $Q_j$ that are not
from uncovered clusters are the $t$ outliers. Hence at most $t$ points
in $Q_j$ are non-uncovered, and at least
$\lceil(1+\varepsilon)t\rceil - t \geq \varepsilon t$
points in $Q_j$ are inliers of uncovered clusters.

\paragraph*{Step 2: Probability that a single sample misses uncovered clusters.}
Since at most $t$ out of $\lceil(1+\varepsilon)t\rceil$ points in $Q_j$
are non-uncovered, the probability that a single uniform sample from
$Q_j$ does \emph{not} land in an uncovered cluster is at most
$\frac{t}{\lceil(1+\varepsilon)t\rceil} \leq \frac{1}{1+\varepsilon}$.

\paragraph*{Step 3: Probability that all samples miss.}
In iteration $j$, the algorithm draws
$q := \left\lceil\frac{1+\varepsilon}{\varepsilon}
\ln\!\left(\frac{1}{\eta}\right)\right\rceil$
vertices independently and uniformly from $Q_j$. The probability that
\emph{all} of them miss every uncovered cluster is at most
$\left(\frac{1}{1+\varepsilon}\right)^{q}
\leq \left(\frac{1}{1+\varepsilon}\right)^{
\frac{1+\varepsilon}{\varepsilon}\ln(1/\eta)}$.
$\Pr[\text{all sampled vertices miss uncovered clusters}]
\leq e^{-\ln(1/\eta)} = \eta$.

Therefore, with probability at least $1 - \eta$, at least one sampled
vertex in iteration $j$ belongs to an uncovered cluster, which increases
$\lambda_j$ by at least one.
\end{proof}

\paragraph{Covering all clusters.}
Setting $\eta = \delta/k$ for a desired failure probability $\delta \in 
(0,1)$, \Cref{obs:k-center-outlier-bic} guarantees that in each 
iteration $\lambda_j$ increases by at least one with probability $1 - 
\delta/k$. By a union bound over all $k$ iterations, all optimal 
clusters are covered simultaneously with probability at least 
$1 - k \cdot (\delta/k) = 1 - \delta$. Setting $\delta = 1/n$ gives 
a high probability guarantee, with the number of samples per iteration 
increasing only by a $\log(kn)$ factor, which is absorbed into the 
$\tilde{O}$ notation.

The full procedure is stated as ~\Cref{alg:outlier-bicriteria}.  Let $\hat{d}(\cdot)$ denote the current distance estimates from $s$.

\begin{algorithm}[H]
\caption{Randomized Bicriteria Approximation for $k$-Center with $t$ Outliers}
\label{alg:outlier-bicriteria}
\KwIn{Weighted graph $G=(V,E,w)$; integers $k$, $t$;
      parameters $\varepsilon > 0$, $\eta \in (0, \tfrac{1}{2})$}
\KwOut{A set $C \subseteq V$}
Add a super-source $s$; initialize the $(1{+}\varepsilon)$-approximate
incremental SSSP structure from $s$ (~\Cref{thm:simplified-sssp})\;
$\gamma \leftarrow t/n$\;
$C \leftarrow \varnothing$\;
Uniformly at random select $\left\lceil\frac{1}{1-\gamma}
\ln\!\left(\frac{1}{\eta}\right)\right\rceil$ vertices from $V$, add them to
$C$, and insert the corresponding zero-weight edges into the SSSP structure\;
\For{$j = 1$ \KwTo $k$}{
   
    Let $Q_j \subseteq V$ be the $\lceil(1+\varepsilon)t\rceil$ vertices with
    the largest values of $\hat{d}(\cdot)$\;
    Uniformly at random select
    $\left\lceil\frac{1+\varepsilon}{\varepsilon}
    \ln\!\left(\frac{1}{\eta}\right)\right\rceil$ vertices from $Q_j$, add them
    to $C$, and insert the corresponding zero-weight edges into the SSSP
    structure\;
}
\Return $C$\;
\end{algorithm}

\paragraph{Running time.}

\textbf{Maintaining the max‑heap of distance estimates.}
        All vertices are stored in a max‑heap keyed by their current
        estimate $\widehat d(v)$.  Whenever the incremental SSSP structure
        relaxes an edge $(u,v)$ and consequently lowers $\widehat d(v)$,
        we perform a decrease‑key operation on $v$ (cost $O(\log n)$).
        The total number of such heap updates is bounded by $\Tilde{O}(m)$ ~\Cref{lem:inc-sssp-output}.
        
        \textbf{Extracting the $\lceil(1+\varepsilon)t\rceil$ farthest vertices.}
        At the beginning of each of the $k$ iterations we need the set
        $Q_i$ of the $\lceil(1+\varepsilon)t\rceil$ vertices with the
        largest current estimate.
        Using the max‑heap we obtain $Q_i$ by performing
        $\lceil(1+\varepsilon)t\rceil$ extract‑max operations and
        immediately reinserting the extracted vertices.
        Each extraction (or reinsertion) costs $O(\log n)$,
        hence a single iteration costs $O(t\log n)$.

Hence the overall running time is $\tilde{O}(m)$.
\end{proof}
\section{Cohen's Algorithm} \label{sec:greedykz}

\begin{lemma}[Cohen's ball-size estimation, Theorem 5.2 in~\cite{cohen1997size}]
\label{lem:cohen-ball-count}
There is a randomized algorithm that, given a weighted undirected graph
$G = (V,E,w)$ and a small constant $\varepsilon > 0$, constructs a data structure with the
following properties:

\begin{itemize}
    \item The expected preprocessing time is $O\!\left(\frac{m \log^{2} n \,+\, n \log^{3} n}{\varepsilon^2}\right)$.

    \item For any query pair $(v,r) \in V \times \mathbb{R}^{+}$, the data structure returns an
    estimate $\tilde{b}(v,r)$ of~$|B(v,r)|$, where $B(v,r)$ denotes the ball of radius $r$ around $v$. 

    \item The expected query time is $O(\log \log n)$.

    \item With high probability, for all $(v,r) \in V \times \mathbb{R}^{+}$ it holds that:
    \begin{center}
        $\dfrac{\left|\,|B(v,r)| - \tilde{b}(v,r)\,\right|}{|B(v,r)|} \;\le\; \varepsilon$.
    \end{center}
\end{itemize}
\end{lemma}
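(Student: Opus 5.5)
This is Cohen's Theorem 5.2, stated as a black box. The plan is to reconstruct it via the min-hash least-element-list (LE-list) construction and then verify the claimed bounds.

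\emph{Construction.} Fix $q = \Theta(\varepsilon^{-2}\log n)$ independent trials. In each trial, assign every vertex $u$ an independent rank $\pi(u)\sim\mathrm{Exp}(1)$ (or uniform in $[0,1]$). For each $v$, the LE-list $L(v)$ is the set of pairs $(u,\dist(v,u))$ such that $u$ has minimum rank among $B(v,\dist(v,u))$. Equivalently, scanning $u$ by increasing distance from $v$, $u$ is kept iff its rank beats all earlier ranks. A query $(v,r)$ binary-searches $L(v)$ for the last entry with distance $\le r$. Its rank equals $\min_{u\in B(v,r)}\pi(u)$, which is exponential with parameter $|B(v,r)|$.

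\emph{Preprocessing.} Process the vertices in increasing rank order. From each vertex $u$, run a pruned Dijkstra that continues through $x$ only if $\dist(u,x)<$ the smallest distance currently recorded at $x$ (which comes from earlier, lower-ranked sources), and append $(u,\dist(u,x))$ to $L(x)$.

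The key lemma is that $\mathbb{E}|L(v)|=H_n=O(\log n)$. The argument: order vertices by distance from $v$; the $j$-th one enters $L(v)$ only if it is a prefix minimum, which happens with probability $1/j$. An edge is relaxed from $x$ only when $x$ receives a new entry, so each trial costs $O((m+n\log n)\log n)$ in expectation. Multiplying by $q$ trials gives the stated expected preprocessing time. Building an $O(\log\log n)$-time predecessor structure, e.g.\ a y-fast trie over each sorted list, gives the $O(\log\log n)$ expected query time. Alternatively, interleave the trials so that a single search locates all $q$ minima, although charging only $O(\log\log n)$ instead of $O(q)$ per query needs care.

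\emph{Estimator and concentration.} If $M_1,\dots,M_q$ are the minima across trials, then each $M_j\sim\mathrm{Exp}(b)$ with $b=|B(v,r)|$, and the estimator $\tilde b=(q-1)/\sum_j M_j$ is unbiased. The sum $\sum_j M_j$ is Gamma-distributed, so a Chernoff bound gives a relative error exceeding $\varepsilon$ with probability $\exp(-\Omega(\varepsilon^2 q))=n^{-c}$.

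For a fixed $v$, the estimate changes only at the $O(n)$ distinct breakpoint radii, so it suffices to union-bound over $O(n^2)$ events. This yields the guarantee simultaneously for all $(v,r)\in V\times\mathbb{R}^+$.

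The main obstacle is the pruned-Dijkstra accounting. One must show that the pruning is both correct (it never omits a true LE-list entry) and charges work only to genuine list insertions. The correctness argument: if $x$ is pruned at $u$, a lower-ranked vertex is closer to $x$, so $u$ cannot appear in $L(x)$ or in the lists beyond it along any shortest path through $x$.
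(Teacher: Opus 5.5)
The paper gives no proof of this lemma: it imports Cohen's Theorem~5.2 as a black box, so your reconstruction is the only argument to check. Most of it is sound and is the standard Cohen construction:
\begin{itemize}
\item the LE-list characterization;
\item the pruning argument (a pruned $x$ certifies a lower-ranked vertex within $\dist(x,u)$ of $x$, and this propagates to every vertex behind $x$);
\item $\mathbb{E}|L(v)|=H_n$;
\item the Gamma concentration with $q=\Theta(\varepsilon^{-2}\log n)$;
\item the union bound over the at most $n^2$ distinct deterministic sets $B(v,r)$.
\end{itemize}
Your per-trial cost $O(m\log n+n\log^2 n)$ holds only under implementation choices you should state. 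You need $O(1)$ amortized decrease-key (e.g.\ a Fibonacci heap). You need to prune already at relaxation time, so that every extracted vertex receives a list entry. And you need to reset only the labels a run touched. With a binary heap, or with pruning only at extraction, the provable $m$-term picks up an extra $\log n$ and overshoots the stated preprocessing bound.

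The genuine gap is the $O(\log\log n)$ query time. The estimate $(q-1)/\sum_j M_j$ needs all $q$ minima. One predecessor search per trial is therefore $q$ searches, i.e.\ $O(\varepsilon^{-2}\log n\cdot\log\log n)$ per query, not $O(\log\log n)$. Your ``interleave the trials'' remark does not repair this: any query that reads the $q$ minima individually already costs $\Omega(q)$.

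The missing idea is to precompute the estimator along merged breakpoints:
\begin{itemize}
\item For each $v$, merge $L_1(v),\dots,L_q(v)$ into one array sorted by distance, of expected length $qH_n$.
\item Sweep it in increasing distance. Each entry replaces exactly one $M_j$, so you can maintain $S=\sum_j M_j$ in $O(1)$ per entry.
\item Store $(q-1)/S$ with each entry.
\end{itemize}
A query $(v,r)$ is then a single binary search for the last entry with distance at most $r$. By Jensen this costs $O(\log(qH_n))=O(\log\log n+\log(1/\varepsilon))$ in expectation, i.e.\ $O(\log\log n)$ for constant $\varepsilon$. The merging adds only $O(nqH_n\log q)$ expected preprocessing, which fits inside the stated budget.

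Incidentally, y-fast tries are the wrong tool even per list. The keys are real-valued distances, not integers from a bounded universe. And plain binary search over a sorted array of polylogarithmic length already runs in $O(\log\log n)$ time.
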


\end{document}